\title{A Complete Graphical Language for Linear Optical Circuits with Finite-Photon-Number Sources and~Detectors}
\author{Nicolas Heurtel}{Quandela, 7 Rue Léonard de Vinci, 91300 Massy, France \\Université Paris-Saclay, CNRS, ENS
Paris-Saclay, Inria, Laboratoire Méthodes Formelles, 91190,
Gif-sur-Yvette, France}{nicolas.heurtel@quandela.com}{https://orcid.org/0000-0002-9380-8396}{}
\authorrunning{N. Heurtel}
\titlerunning{The \love-Calculus}
\keywords{Quantum Computing, Graphical Language, Linear Optical Circuits, Linear Optical Quantum Computing, Completeness, Fock Space.} 
\begin{document}

\maketitle

\begin{abstract}
    Graphical languages are powerful and useful to represent, rewrite and simplify different kinds of processes. In particular, they have been widely used for quantum processes, improving the state of the art for compilation, simulation and verification. In this work, we focus on one of the main carrier of quantum information and computation: linear optical circuits. %Those circuits can be used to manipulate the quantum states of photons as they pass through components including beam splitters and phase shifters. 
    We introduce the \love-calculus, the first graphical language to reason on the infinite-dimensional photonic space with circuits only composed of the four core elements of linear optics: the phase shifter, the beam splitter, and auxiliary sources and detectors with bounded photon number. First, we study the subfragment of circuits composed of phase shifters and beam splitters, for which we provide the first minimal equational theory. Next, we introduce a rewriting procedure on those \love-circuits that converge to normal forms. We prove those forms to be unique, establishing both a novel and unique representation of linear optical processes. Finally, we complement the language with an equational theory that we prove to be complete: two \love-circuits represent the same quantum process if and only if one can be transformed into the other with the rules of the \love-calculus.
\end{abstract}

%\newpage
\tableofcontents
\markboth{N. Heurtel}{The \love-Calculus}

% introduction 

\section{Introduction}

\label{sec:intro}

Quantum computing is a paradigm for processing information~\cite{nielsen2010quantum,preskill2023quantum} that performs computation with quantum states, instead of the classical states of bits. This computational paradigm allows specific computational problems to be solved with quadratic~\cite{grover1996algorithm} or even exponential speedup~\cite{shor1994algorithm,harrow2009hhl} compared to their classical counterparts. To encode that quantum data, many technologies have been pursued, such as superconducting circuits~\cite{kjaergaard2020superconducting}, trapped ions~\cite{bruzewicz2019trapped-ion} and cold atoms~\cite{gross2017atoms}.

One of the prominent candidates for quantum computation is linear optics~\cite{kok2007review,oBrien2009photonic,rudolph2017why}, where the \textit{logical} information is encoded into the quantum states of photons, the \textit{particles} of light. For quantum computation, the logical states are encoded onto the \textit{modes} of the photons, i.e.\ their degrees of freedom like their \textit{positions} in the circuit, and the desired logical operations are performed with optical components. All scalable quantum computations with linear optics \cite{knill2001klm,yoran2003deterministic,nielsen2004optical,browne2005resource,bartolucci2023fbqc,degliniasty2024spoqc} encoding with the positions of the photons use predominantly these following elements.
\begin{itemize}
    \item Sources: they generate the quantum state, i.e.\ a vector in a Hilbert space,
    \item Phase shifters: they change the quantum state by adding a phase to the light passing through them\footnote{They are typically implemented using thermo-optic components, where the refractive index of the waveguide is changed by heating the material.},
    \item Beam splitters: they alter the quantum state by causing photons on two different paths to interfere with each other\footnote{ In integrated circuits, beam splitters are implemented using waveguides that split and combine light paths.},
    \item Detectors: they project the quantum state on a subspace. 
\end{itemize}

As ubiquitous as the circuits made of those components are in linear optical quantum computation schemes, as illustrated in \pref{fig:ex-CNot}~and~\ref{fig:ex-Bell}, many unanswered questions persist regarding optimality, minimality and an efficient use of those components. %More generally, optical circuits can also be used for cryptography protocols and \todo{other technology}. 
We wish to have a framework finding the most appropriate implementation for the desired computation or protocol. The purpose of this work is therefore to propose a formal framework to model and manipulate generic circuits composed of the four previous elements.

\textbf{State of the art.} Some main formal frameworks to study, develop or optimize quantum processes are graphical languages~\cite{ambramsky2004categorical,selinger2007dagger,abramsky2008categorical,penrose1971angular}, representing processes with diagrams and equations between those diagrams. These formalisms have been shown to be very useful for addressing quantum processes in general, such as \ZX-diagrams~\cite{coecke2011interacting} with applications in compilation~\cite{kissinger2019tcount,backens2021therebackagain,vandeWetering2024optimal}, simulation~\cite{kissinger2022classical,kissinger2022simulating,koch2023speedy} and verification~\cite{duncan2014verifying,garvie2018verifying}. To completely capture the processes those diagrams model,~\cite{jeandel2018complete,hadzihasanovic2018complete} have introduced a complete set of equations: two equivalent \ZX-diagrams can always be transformed from one to the other with those equations.

Recently, some works have modeled optical processes with diagrams~\cite{ataman2018graph,clement2020pbs,mccloud2022category}, including notably \lovbf~\cite{clement2022lov}, a complete graphical language for linear optical circuits with vacuum sources and detectors, and \qpath~\cite{felice2022path}, a graphical language to compute amplitudes. Remarkably, both have also led to results beyond the optical realm, as a subfragment of the first led to derive the first complete equational theory for quantum circuits~\cite{clement2023complete}, while the second introduced a functor from the \ZX-calculus~\cite{felice2022path} and led to a more generic language~\cite{deFelice2023light-matter}.

However, those two frameworks don't completely capture linear optical circuits with sources and detection schemes. In particular, \lovbf~lacks a many-photon semantics and can only cover the single-photon case, while \qpath~uses sums of diagrams in the rewriting process along with generators that are not linear optical components. For instance, we would like to be able to model the photonic implementation of the CZ gate~\cite{knill2001klm,knill2002note}, a prominent logical quantum gate, and rewrite it to equivalent forms, as illustrated in \pref{fig:ex-CNot}. 

% Furthermore, even though the single-photon sources and detectors can provide a sufficient model for quantum computation with the dual-rail encoding, the dual-rail framework lacks in few aspects.~\textbf{(1)}~Some sources can directly generate more generic states such as a coherent superposition with the vacuum of the 2-photon state~\cite{loredo2019generation}, or even directly create entangled states~\cite{coste2023highrate}.~\textbf{(2)}~Linear optical circuits are very modular, and each building block is usually used many times. It would therefore be more convenient to sometimes represent those building blocks directly by specifying what they do, instead of how they are implemented, as illustrated in \pref{fig:ex-Bell}.

\begin{figure}[ht]
    \[\tikzfigbox{0.75}{exCZconv2}=\tikzfigbox{0.75}{exCZlofi}\]
    \caption{\label{fig:ex-CNot}Optical circuits implementing the CZ two-qubit logic gate with auxiliary sources and detectors. On the left is the original circuit\footnotemark~of~\cite{knill2002note}. There are two auxiliary photon generated on the bottom left: if exactly one photon is detected for each of the two last wires on the bottom right, then we know we have performed the operation $\ket{11}\mapsto -\ket{11}$ on the two first wires. This event has a probability $\frac{2}{27}$ to occur. On the right is an equivalent representation in the \love-calculus, where $\vect{f}$ and $\vect{g}$ are two-photon states and linear forms.}
\end{figure}
\footnotetext{Some phases have been added to take into account the different conventions for the semantics of the beam splitters.}
\begin{figure}[ht]
    \[\tikzfigbox{0.90}{exBellconv2}=\tikzfig{exBellgen}\]
    \caption{\label{fig:ex-Bell}Linear optical circuit generating with a $\frac{1}{9}$ probability the Bell state $\ket{\Phi^+}=\ket{1010}+\ket{0101}$, with the use of auxiliary sources and detectors. On the left is the original\footnotemark[\value{footnote}]~circuit of~\cite{fldzhyan2021compact}, on the right is an equivalent and modular description. Both are equivalent circuits in the \love-calculus.}
\end{figure}

\textbf{Challenges.} In seeking to develop a graphical language for modeling linear optical circuits with a many-photon semantics, there are two main challenges. First, the bosonic Fock space, representing all the states that photons can be in, is an infinite-dimensional Hilbert space: the bosonic Fock space. In particular, some properties and generators of graphical languages with finite-dimensional theories~\cite{poor2023completeness,wang2024completeness,deVisme2024minimality} cannot be used. Second, the interaction of photons, even without bringing auxiliary modes and detections into the picture, is described by the permanents of matrices~\cite{scheel2004permanents,aaronson2011computational}, making cumbersome explicit expression and manipulation of all the involved terms.
    
\textbf{Contributions.} In this paper, we propose such a framework, and introduce the \love-calculus, the first graphical language defined on the bosonic Fock space, with circuits composed of four core elements of linear optics: the phase shifter, the beam splitter, and auxiliary finite-photon-number sources and detectors. Our main contributions are the following.
\begin{itemize}
    \item A complete equational theory for circuits with phase shifters and beam splitters which is simpler than the one in~\cite{clement2022lov}, and that we prove to be minimal (\pref{sec:loqc}).
    \item A new sound and complete equational theory for linear optical circuits that allows all auxiliary finite-photon-number states and detections~(\pref{sec:love}).
    \item A unique and compact universal form for optical circuits of this kind, obtained through a deterministic rewriting procedure and proven to be unique with new techniques (\pref{sec:completeness}).
\end{itemize}
All the notation introduced in the paper is summarized in \pref{tab:notations}.

% LOPP 

\section{\lopp: Linear optical quantum circuits with single-photon semantics}

\label{sec:loqc}

A linear optical quantum circuit consists of spatial modes through which photons pass --- which may be physically instantiated by optical fibers, waveguides in integrated circuits, or simply by paths in free space (bulk optics) --- and operations that act on those spatial modes, including in particular \textit{beam splitters} ($\tikzfigbox{\bx}{bs-xs}$), and \textit{phase shifters} ($\tikzfigbox{\bx}{ps-xs}$).

\subsection{Syntax and single-photon semantics}
\label{subsec:singlephoton}

Similarly to~\cite{clement2022lov}, in order to formalize linear optical circuits with diagrams, we use the formalism of PROPs~\cite{maclane1965category}.
A PRO is a strict monoidal category whose monoid of objects is freely generated by a single $X$: the objects are all of the form $X\otimes X \otimes \dots \otimes X$, and simply denoted by $n$, the number of
occurrences of $X$. PROs are typically represented graphically as circuits: each copy of $X$ is represented by a wire and morphisms by boxes on wires, so that $\otimes$ is represented vertically and morphism composition $\circ$ is represented horizontally. For instance, $D_1$ and $D_2$ represented as \tikzfigbox{\bx}{D1} and \tikzfigbox{\bx}{D2} can be horizontally composed as $D_2\circ D_1$, represented by \tikzfigbox{\bx}{D1compD2}, and vertically composed as $D_1\otimes D_2$, represented by \tikzfigbox{\bx}{D1tensD2}. A PROP is the symmetric monoidal analogue of PRO, so it is equipped with a swap. It means the circuits are equivalent through wire deformations and that only the connectivity matters.

\begin{definition}\lopp\footnote{The PROP version of \lopp~has first been defined in~\cite{clement2023minimal}, as~\cite{clement2022lov} defined $\textbf{\textup{LO}}_\textbf{\textup{PP}}$ as a PRO.} is the PROP of \lopp-circuits generated by:

\[\tikzfig{ps}:1\rightarrow 1 \hspace{2cm}\tikzfig{bs}:2\rightarrow 2 \]

with $\varphi\in\mathbb{R}$ and $\theta\in\mathbb{R}$.
\end{definition}

The convention is to go through from left to right, meaning the inputs (resp.\ outputs) are on the left (resp.\ right), and from top to bottom, meaning the first (resp.\ last) input is the top (resp.\ last) wire. The identity, the swap and the empty diagrams are noted as follows: $\tikzfigbox{\bxs}{id},\tikzfigbox{\bxs}{swap},\tikzfigbox{\bxs}{empty}$. 

\begin{example} Here are two examples of \lopp-circuits, that are equivalent up to deformation with the rules of PROPs: \[\tikzfigbox{\bx}{exLOppleft}=\tikzfigbox{\bx}{exLOppright}\]
\end{example}

The semantics of linear optical components are usually described by their behavior when there is one single photon passing through those components. 
Given a circuit of $m$ wires, the single photon can be in a superposition of the $m$ different positions, so its state is a vector in $\C^m$. We consider the standard orthonormal basis $\{ \ket{e_i},i\in[1,m]\}$ where $e_i=\ket{0,\dots,0,1,0,\dots,0}$ with a $1$ at the $i^{th}$ component. The object of our PROP is therefore $X=\C$, and the vertical composition is interpreted as the direct sum~\cite{clement2022lov,clement2023complete}. The semantics is defined as follows. 

\begin{definition}[Semantics of \lopp] \label{def:semLOpp}

The single photon semantics of \lopp~is inductively defined as follows: 
$\interpsone{C_1 \otimes C_2} = \interpsone{C_1} \oplus \interpsone{C_2}$, $\interpsone{C_2 \circ C_1}=\interpsone{C_2}\circ \interpsone{C_1}$ and:

$\interpsone{\tikzfigbox{\bx}{id}}:\C \to \C \coloneqq \ket{1}\mapsto \ket{1}$

$\interpsone{\tikzfig{ps-xs}}:\C \to \C \coloneqq \ket{1}\mapsto e^{i\varphi}\ket{1}$

$\interpsone{\tikzfigbox{\bx}{swap}}:\C^2 \to \C^2 \coloneqq \begin{array}{ccc}
    \ket{1,0} & \mapsto & \ket{0,1}  \\
     \ket{0,1} & \mapsto & \ket{1,0}
\end{array}$

$\interpsone{\tikzfig{bs-xs}}:\C^2 \to \C^2 \coloneqq \begin{array}{ccc}
    \ket{1,0} & \mapsto & \costheta\ket{1,0} + i\sintheta\ket{0,1}  \\
     \ket{0,1} & \mapsto & i\sintheta\ket{1,0}+\costheta\ket{0,1}
\end{array}$

where $\costheta=\cos(\theta)$ and $\sintheta=\sin(\theta)$.
\end{definition}

\begin{remark}
It is also usual to represent those linear operators as matrices,
with\\ $\interpsone{C_1}\!\oplus \interpsone{C_2} = \left(\begin{array}{c|c} \interpsone{C_1} & 0\\ \hline 0 & \interpsone{C_2} \end{array}\right)$ and for instance $\interpsone{\tikzfig{bs-xs}}=\begin{pmatrix}
\costheta & i\sintheta \\
i\sintheta &  \costheta 
\end{pmatrix}$. 
\end{remark}

% \todo{Figure with simple example of the semantics, jsut to see the vectors/states etc.}

\subsection{Simpler equational theory of~\lopp}\label{subsec:eqlopp}

Two distinct \lopp-circuits may represent the same quantum evolution. For instance, shifting the phase of a photon by two phase shifters of phase $\varphi_1$ and $\varphi_2$ is the same as shifting it with one phase $\varphi_1+\varphi_2$. In order to characterize those equivalences, an equational theory of \lopp~has been introduced in~\cite{clement2022lov}. In this section, we provide a simpler set of equations in \pref{fig:nLOpp}. Some of the old equations, given in~\pref{fig:oLOpp}, have been removed, while two Equations~(\ref{oLOpp:E2}) and (\ref{oLOpp:E3}) of \pref{fig:oLOpp} have been replaced by the two Equations~(\ref{nLOpp:E2}) and (\ref{nLOpp:E3}), respectively representing Euler rotations with two and three axes. Previously, those old Euler equations were not directly reversible; while the angles of the right-hand side (RHS) could be uniquely determined by those of the left-hand side (LHS), the inverse was true only with non-trivial constraints, making the equations hardly reversible and not explicitly constructive. More specifically, we made the following simplifications:

\begin{itemize}
    \item The Equations~(\ref{nLOpp:b0}), (\ref{nLOpp:p0}) and~(\ref{oLOpp:pp}) have been derived and removed from the equational theory.
    \item A phase has been added in Equation~(\ref{oLOpp:E2}), so now the LHS can also represent any element of the unitary group $U(2)$. Now the angles of the LHS can be straightforwardly derived without any constraints from the RHS. 
    \item All the phases of Equation~(\ref{oLOpp:E3}) have been removed. The formulae of the equations are now far simpler, and the equation is now both symmetrical and reversible. 
\end{itemize}

\newcommand{\sizenlopp}{0.85}
\begin{figure}[htb]
    \centering
    \fbox{\begin{minipage}{\textwidth}\begin{multicols}{2}
    \renewcommand{\nspazer}{0em}
    \scalebox{0.9}{\begin{minipage}{0.50\textwidth}
        \begin{equation}\tag{p2$\pi$}\label{nLOpp:Rp2pi}\begin{array}{rccl}\tikzfigbox{\sizenlopp}{LOpp-A-middle}&\!\!\!\!\!=\!\!\!\!\!&\tikzfigbox{\sizenlopp}{LOpp-A-right}\end{array}\end{equation}
    \begin{equation}\tag{swap} \label{nLOpp:swap}\begin{array}{rcl}\tikzfigbox{\sizenlopp}{swap}&\!\!\!\!\!=\!\!\!\!\!&\tikzfigbox{\sizenlopp}{swapright}\end{array}\end{equation}\end{minipage}}\scalebox{0.9}{\begin{minipage}{0.6\textwidth}
    \begin{equation}\tag{p-p} \label{nLOpp:pM}\begin{array}{rcl}\tikzfigbox{\sizenlopp}{LOpp-D-left}&\!\!\!\!\!=\!\!\!\!\!&\tikzfigbox{\sizenlopp}{LOpp-D-right}\end{array}\end{equation}
    \begin{equation}\tag{E2} \label{nLOpp:E2}\begin{array}{rcl}\tikzfigbox{\sizenlopp}{E2left}&\!\!\!\!\!=\!\!\!\!\!&\tikzfigbox{\sizenlopp}{E2right}\end{array}\end{equation}\end{minipage}}
    \end{multicols}
    \vspace{0.1cm}
    \scalebox{0.9}{\begin{minipage}{1.1\textwidth}\begin{equation}\tag{E3}\label{nLOpp:E3}\begin{array}{rcl}~\qquad\tikzfigbox{\sizenlopp}{E3left}&=&\tikzfigbox{\sizenlopp}{E3right}\end{array}\end{equation}\end{minipage}}\vspace{0.2cm}
\end{minipage}}
    \caption{\label{fig:nLOpp} New and minimal equational theory of the \lopp-calculus.  For any angle of the LHS (resp. RHS) of the Equation~(\ref{nLOpp:E2}) and (\ref{nLOpp:E3}), there exist angles for the RHS (resp. LHS) such that the equations are sound. The angles of are unique if we restrict $\alpha_0,\alpha_2,\beta_0,\beta_1,\beta_3\in[0,2\pi)$, $\alpha_1 \in [0,\frac{\pi}{2}),\alpha_3\in[0,\pi)$, $\beta_2\in[0,\frac{\pi}{2}]$, and by taking $\alpha_1=0$ if $\alpha_0-\alpha_2=0~\text{mod}~\pi$ and $\beta_1=0$ if $\beta_2\in\left\{0,\frac{\pi}{2}\right\}$. The rotations associated with Equations~(\ref{nLOpp:E2}) and~(\ref{nLOpp:E2}) are detailed in the proof of \pref{prop:soundLOpp}, and the explicit values of the angles are detailed in \pref{subsec:E2angles} and~\ref{subsec:E3angles}.}
    \end{figure}

\begin{figure}[htb]
    \centering
    \fbox{\begin{minipage}{\textwidth}
        \begin{multicols}{2}
            \scalebox{0.9}{\begin{minipage}{0.48\textwidth}
                \begin{equation}\tag{p0}\label{nLOpp:p0}\begin{array}{rccl}\tikzfigbox{\sizenlopp}{LOpp-A-left}&=&\tikzfigbox{\sizenlopp}{LOpp-A-right}\end{array}\end{equation}
    \begin{equation}\tag{b0} \label{nLOpp:b0}\begin{array}{rcl}\tikzfigbox{\sizenlopp}{LOpp-B-left}&\!\!\!\!\!=\!\!\!\!\!&\tikzfigbox{\sizenlopp}{LOpp-B-right}\end{array}\end{equation}\end{minipage}}\scalebox{0.9}{\begin{minipage}{0.62\textwidth}
        \begin{equation}\tag{pp-b} \label{oLOpp:pp}\begin{array}{rcl}\tikzfigbox{\sizenlopp}{LOpp-E-left}&\!\!\!\!\!=\!\!\!\!\!&\tikzfigbox{\sizenlopp}{LOpp-E-right}\end{array}\end{equation}
    \begin{equation}\tag{oE2} \label{oLOpp:E2}\begin{array}{rcl}\tikzfigbox{\sizenlopp}{LOpp-F-left}&\!\!\!\!\!=\!\!\!\!\!&\tikzfigbox{\sizenlopp}{LOpp-F-right}\end{array}\end{equation}
\end{minipage}}\end{multicols}
    \scalebox{0.9}{\begin{minipage}{1.1\textwidth}\begin{equation}\tag{oE3}\label{oLOpp:E3}\begin{array}{rcl}~\qquad\qquad\tikzfigbox{\sizenlopp}{LOpp-G-left}&\!\!\!\!\!=\!\!\!\!\!&\tikzfigbox{\sizenlopp}{LOpp-G-right}\end{array}\end{equation}
    \end{minipage}}\vspace{0.1cm}\end{minipage}}
    \caption{\label{fig:oLOpp}Old axioms of the \lopp-calculus that are not in \pref{fig:nLOpp}. They are derived in \pref{app:proofcompLOpp}.  In Equations~(\ref{oLOpp:E2}) and (\ref{oLOpp:E3}), the LHS circuit has arbitrary parameters which uniquely determine the parameters of the RHS circuit. For any $\alpha_i\in\mathbb R$, there exist $\beta_j\in[0,2\pi)$ such that \pref{oLOpp:E2} is sound, and for any $\gamma_i\in\mathbb R$, there exist $\delta_j\in[0,2\pi)$ such that \pref{oLOpp:E3} is sound. We can ensure that the angles $\beta_j$ are unique by assuming that $\beta_1,\beta_2\in [0,\pi)$ and if $\beta_2\in\{0,\frac\pi2\}$ then $\beta_1=0$, and we can ensure that the angles $\delta_j$ are unique by assuming that $\delta_1,\delta_2,\delta_3,\delta_4,\delta_5,\delta_6\in[0,\pi)$. If $\delta_3\in\{0,\frac\pi2\}$ then $\delta_1=0$, if $\delta_4\in\{0,\frac\pi2\}$ then $\delta_2=0$, if $\delta_4=0$ then $\delta_3=0$, and if $\delta_6\in\{0,\frac\pi2\}$ then $\delta_5=0$. The existence and uniqueness of such $\beta_j$ and $\delta_j$ are given by Lemmas 10 and 11 of~\cite{clement2022lov}.}
    \end{figure}

\begin{definition}[\lopp-calculus]
     Two \lopp-circuits $D$, $D'$ are equivalent according to the rules of the \lopp-calculus, denoted $\lopp\vdash D=D'$, if one can transform $D$ into $D'$ using the equations given in \pref{fig:nLOpp}. More precisely, $\lopp \vdash \cdot=\cdot$ is defined as the smallest congruence which satisfies the equations of \pref{fig:nLOpp} and the axioms of PROP.
\end{definition}

\begin{proposition}[Soundness of \lopp]\label{prop:soundLOpp}
    For any \lopp-circuits $D$ and $D'$, if $\lopp\vdash D=D'$ then $\interpsone{D}=\interpsone{D'}$.
\end{proposition}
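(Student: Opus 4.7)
The plan is to proceed by induction on the derivation of $\lopp \vdash D = D'$. Since $\interpsone{\cdot}$ is defined inductively to commute with $\circ$ (matrix composition) and with $\otimes$ (direct sum $\oplus$), the inductive cases for the congruence and for the PROP axioms (associativity and unit, interchange, naturality and hexagon for the swap) follow from standard identities for matrices under $\circ$ and $\oplus$. Hence soundness reduces to verifying that each axiom of Figure~\ref{fig:nLOpp} is a valid equality of linear maps $\C^m \to \C^m$ for the appropriate $m$.

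The three short axioms (\ref{nLOpp:Rp2pi}), (\ref{nLOpp:swap}) and (\ref{nLOpp:pM}) are immediate: (\ref{nLOpp:Rp2pi}) reduces to $e^{2i\pi}=1$; (\ref{nLOpp:swap}) to the fact that the transposition matrix squares to the identity; and (\ref{nLOpp:pM}) to a short matrix computation combining two phase factors. The substantive cases are the Euler-type axioms (\ref{nLOpp:E2}) and (\ref{nLOpp:E3}). On each side, the interpretation is a unitary in $U(2)$ built as an alternating product of diagonal phase factors $e^{i\varphi}$ (from phase shifters) and beam-splitter matrices $\begin{pmatrix}\costheta & i\sintheta\\ i\sintheta & \costheta\end{pmatrix}$. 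Under the standard correspondence $U(2)\cong U(1)\times SO(3)$ via the Bloch sphere, phase shifters correspond to rotations about the $z$-axis and beam splitters to rotations about the $x$-axis, so both axioms assert equality of two alternating $z/x$ decompositions of the same element of $U(2)$ (up to a global phase). My approach is to factor out the global phase on each side, reduce the remaining $SU(2)$ element to its axis-angle form, and check that the explicit angle formulas given in \pref{subsec:E2angles} and \pref{subsec:E3angles} make the two sides coincide.

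I expect the main obstacle to be the bookkeeping for (\ref{nLOpp:E3}): there are several angles on each side and the relations between them are given only through arcsine and arctangent expressions in \pref{subsec:E3angles}, so one must carefully manage branch cuts together with the range constraints listed in the caption of Figure~\ref{fig:nLOpp}. Once the global phase is extracted and the rotation axes are identified, however, the verification reduces to a standard product-of-rotations computation in $SO(3)$, and soundness follows. Note that the uniqueness of the angles claimed in the caption is not required for soundness; only the existence of angle values making the equation hold is needed here.
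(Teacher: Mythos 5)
Your overall skeleton is the paper's: semantic equality is a congruence, the PROP and composition cases are automatic from the inductive definition of $\interpsone{\cdot}$, the three short axioms are immediate, and the real work is Equations~(\ref{nLOpp:E2}) and~(\ref{nLOpp:E3}). Your treatment of~(\ref{nLOpp:E2}) also matches the paper: identify beam splitters with $R_X$-rotations and phase shifters with $R_Z$-rotations up to global phase, and invoke the existence of the $XZX$ and $ZXZ$ Euler decompositions of $U(2)$, with the explicit angles delegated to \pref{subsec:E2angles}. You are also right that only existence, not uniqueness, of the angles is needed for soundness.

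There is, however, a genuine gap in your handling of~(\ref{nLOpp:E3}): you describe both Euler axioms as equalities of alternating phase-shifter/beam-splitter products in $U(2)$, to be settled by extracting a global phase and comparing axis--angle forms in $SU(2)$ via the Bloch sphere. That description is correct only for~(\ref{nLOpp:E2}). Equation~(\ref{nLOpp:E3}) is a \emph{three-mode} identity: each side is a product of three $3\times 3$ beam-splitter matrices $B_{1,2}(\cdot)$ and $B_{2,3}(\cdot)$ acting on overlapping wire pairs, with no phase shifters and no global phase to factor out, so the $U(2)\cong U(1)\times SO(3)$ reduction you propose does not apply (and as stated that isomorphism is itself only projectively true). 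The paper instead conjugates both sides by the diagonal matrix $P_{iy}=\mathrm{diag}(1,i,1)$, which turns $B_{1,2}(\theta)$ and $B_{2,3}(\theta)$ into the real rotations $R_z(\theta)$ and $R_x(\theta)$ of $\mathbb{R}^3$, so that~(\ref{nLOpp:E3}) becomes exactly the classical $z$--$x$--$z$ versus $x$--$z$--$x$ Euler-angle decomposition in $SO(3)$, with explicit formulas in \pref{subsec:E3angles}. Your closing remark that everything reduces to ``a standard product-of-rotations computation in $SO(3)$'' lands in the right place, but the route you give to get there would fail for~(\ref{nLOpp:E3}); the missing idea is the $P_{iy}$ change of basis on three modes rather than any Bloch-sphere argument on two.
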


\begin{proof}
    Since semantic equality is a congruence, it suffices to check that for every equation of \pref{fig:nLOpp}. The soundness of Equations~(\ref{nLOpp:swap}), (\ref{nLOpp:Rp2pi}) and (\ref{nLOpp:pM}) are direct consequences of \pref{def:semLOpp}. We can notice that $R_X(-2\theta)=\interpsone{\tikzfig{bs-xs}}$ and $e^{i\frac{\varphi}{2}}R_Z(\varphi)=\interpsone{\tikzfig{idtensps-xs}}$, where $R_X$ (resp.\ $R_Z$) is the rotation operator about the $\hat{x}$ axis (resp.\ $\hat{z}$ axis) of the Bloch sphere~\cite{nielsen2010quantum}. Any unitary of $U(2)$ can be decomposed into $e^{i\cdot}R_X(\cdot)R_Z(\cdot)R_X(\cdot)$ (resp.\ $e^{i\cdot}R_Z(\cdot)R_X(\cdot)R_Z(\cdot)$), giving the LHS (resp.\ RHS) angles of (\ref{nLOpp:E2}). More explicit formulae are given in \pref{subsec:E2angles}. By transforming the $iY$-axis into the $Y$-axis, the three rotations of the LHS (resp.\ RHS) of (\ref{nLOpp:E3}) can be seen as three real rotations along the $z-x-z$ real axes (resp. $x-z-x$). The angles are therefore given by the Euler angles~\cite{goldstein2002classical}. More explicit formulae are given in \pref{subsec:E3angles}.
\end{proof}
\begin{theorem}[Completeness of \lopp] \label{thm:compLOpp}
    For any two \lopp-circuits $D$ and $D'$, if $\interpsone{D}=\interpsone{D'}$ then $\lopp\vdash D=D'$.
    
\end{theorem}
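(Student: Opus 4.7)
The plan is to prove completeness by exhibiting a canonical normal form for \lopp-circuits, showing that every circuit can be rewritten to such a form using the equations of \pref{fig:nLOpp}, and showing that the normal form is uniquely determined by the semantics. Since $\interpsone{D}$ is a unitary matrix in $U(m)$ for any $m$-mode \lopp-circuit $D$, the standard Reck-type decomposition writes this matrix as a triangular staircase of $2 \times 2$ unitaries acting on pairs of consecutive wires, followed by a diagonal layer of phases. I would define the \emph{normal form} as the \lopp-circuit mirroring this staircase, with each two-mode block taken in the shape of the left-hand side of \pref{nLOpp:E2} and with all angles lying in the canonical ranges prescribed in the caption of \pref{fig:nLOpp}.

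For the existence half, I would show that any circuit reduces to a normal form by induction on $m$. The inductive step extracts a staircase acting on the bottom mode: using \pref{nLOpp:E3} to locally reshape consecutive beam splitters into the canonical pattern, and \pref{nLOpp:pM}, \pref{nLOpp:Rp2pi} and \pref{nLOpp:swap} to migrate phases through the circuit, one isolates a sub-circuit on modes $m-1$ and $m$ that absorbs everything feeding mode $m$, leaving a circuit on the first $m-1$ modes to which the induction hypothesis applies. \pref{nLOpp:E2} is then applied at each two-mode slot to place each block in the prescribed shape with angles in the canonical ranges.

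For uniqueness, the angle-range constraints listed in the caption of \pref{fig:nLOpp} make the map sending a normal form to its semantics injective: the Reck-type decomposition of a unitary matrix is unique once one fixes the parameter intervals $\alpha_0,\alpha_2\in[0,2\pi)$, $\alpha_1\in[0,\pi/2)$, and so on. Hence $\interpsone{D}=\interpsone{D'}$ forces the normal forms of $D$ and $D'$ to be syntactically identical, and \lopp-equality between $D$ and $D'$ follows by composing the two rewriting sequences. The main obstacle I foresee is the existence step: one must commit to a concrete rewriting strategy together with a well-founded termination measure (for instance lexicographic in the number of out-of-normal-form blocks and their depths) that strictly decreases at every step, and must check that the preconditions for applying \pref{nLOpp:E2} or \pref{nLOpp:E3} are always met after phases have been commuted into place. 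The uniqueness half should be comparatively routine, since the stated parameter ranges were designed precisely to make \pref{nLOpp:E2} and \pref{nLOpp:E3} reversible.
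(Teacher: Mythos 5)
Your strategy is viable in principle but it is not the route the paper takes, and it re-does considerably more work than is needed. The paper's proof of \pref{thm:compLOpp} is a two-line reduction: the axiom set of \pref{fig:oLOpp} was already proven complete in~\cite{clement2022lov}, so it suffices to derive each of those old equations from the new ones of \pref{fig:nLOpp}, which is carried out in \pref{app:proofcompLOpp}. Your proposal instead re-proves completeness from scratch via triangular normal forms, which is essentially the strategy of the original completeness proof in~\cite{clement2022lov}. The paper does develop exactly these normal forms (the \T-circuits of \pref{def:triangular}, with existence and uniqueness in \pref{prop:uniqT}, and a rewriting system converging to them in \pref{fig:rewritelofi}), but it uses them for the completeness of \love, not to establish \pref{thm:compLOpp}. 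What the paper's route buys is that the genuinely hard parts of your plan --- a concrete terminating rewriting strategy and the confluence analysis --- are inherited from the earlier work rather than redone; what your route would buy is a self-contained proof that does not lean on the completeness result of~\cite{clement2022lov}.

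If you do insist on the direct route, two points need more than you give them. First, the existence step you flag as the main obstacle is indeed the crux: one must exhibit a well-founded measure that strictly decreases under a concrete strategy and check that the side conditions for applying \pref{nLOpp:E2} and \pref{nLOpp:E3} are always met after phases are commuted into place; this occupies several pages in~\cite{clement2022lov}, and nothing in your sketch discharges it. Second, your uniqueness claim --- that the Reck-type decomposition is unique once the angle intervals are fixed --- is false as stated: when a beam-splitter angle equals $\frac{\pi}{2}$ the accompanying phase is not determined by the semantics, and when it equals $0$ the entire remainder of that diagonal is undetermined. Uniqueness only holds once one additionally imposes the degenerate-case conditions of \pref{def:triangular}, namely $\theta_{i,j}=\frac{\pi}{2}\Rightarrow\varphi_{i,j}=0$ and $\theta_{i,j}=0\Rightarrow(\forall j'>j,\ \varphi_{i,j'}=\theta_{i,j'}=0)$, and is then proved by the path-sum induction on the coefficients $t_{i,j}$ as in \pref{prop:uniqT}, not merely by fixing parameter ranges.
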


\begin{proof}
    The equational theory of~\pref{fig:oLOpp} has been proven to be complete in~\cite{clement2022lov}. All equations of \pref{fig:oLOpp} are derived by those of \pref{fig:nLOpp} in \pref{app:proofcompLOpp}.
\end{proof}

\begin{theorem}[Minimality]
    The equational theory of~\pref{fig:nLOpp} is minimal for \lopp-circuits, i.e. none of its equations can be derived from the others.
\end{theorem}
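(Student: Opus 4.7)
The plan is to prove minimality using the standard model-theoretic technique: for each equation $E$ of Figure~\ref{fig:nLOpp}, I exhibit an alternative PROP morphism $\mathcal{I}_E$ from \lopp~to some semantic target which validates all axioms of Figure~\ref{fig:nLOpp} except $E$. Since $\lopp \vdash \cdot = \cdot$ is the congruence closure of the axioms, such a morphism directly witnesses that $E$ cannot be derived from the others.

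The two phase-shifter axioms (\ref{nLOpp:Rp2pi}) and (\ref{nLOpp:pM}) are handled by phase-tracking refinements of the standard semantics. For (\ref{nLOpp:Rp2pi}), I augment $\interpsone{\cdot}$ by recording the total phase in $\mathbb{R}$ rather than modulo $2\pi$; the other axioms (\ref{nLOpp:pM}), (\ref{nLOpp:swap}), (\ref{nLOpp:E2}) and (\ref{nLOpp:E3}) all have equal real-valued phase sums on both sides (since when the underlying unitaries agree their phase sums differ only by multiples of $2\pi$, and no other axiom can introduce such a discrepancy in a single step), whereas (\ref{nLOpp:Rp2pi}) itself changes the sum by $2\pi$. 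For (\ref{nLOpp:pM}), I record instead a quantity that distinguishes a phase shifter from its inverse --- for instance the sum of the absolute values of all phase-shift angles --- which is preserved by the Euler equations and by (\ref{nLOpp:Rp2pi}), but strictly decreases when two opposite phase shifters cancel. The axiom (\ref{nLOpp:swap}) is analogous: interpret the swap and its beam-splitter decomposition so as to become inequivalent (for instance by introducing a parity count of topological crossings) while leaving the other axioms untouched.

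The main technical hurdle is to separate the two Euler equations (\ref{nLOpp:E2}) and (\ref{nLOpp:E3}) from each other, since both reshape circuits in comparable ways and a crude generator-counting argument does not suffice. I plan to parametrise the invariants by angle-data, exploiting the specific canonical ranges $\alpha_1 \in [0,\frac{\pi}{2})$, $\beta_2 \in [0,\frac{\pi}{2}]$, etc.\ listed in Figure~\ref{fig:nLOpp}: a $\mathbb{Z}/2$-valued invariant tracking a boundary-case parity for (\ref{nLOpp:E2}) should be respected by (\ref{nLOpp:E3}), whose normalisation differs, and a symmetric construction handles (\ref{nLOpp:E3}). I expect this step to require careful case analysis at the degenerate angle values, mirroring the minimality proofs for the quantum-circuit equational theory in~\cite{clement2023complete}.
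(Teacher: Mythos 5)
Your overall strategy---exhibiting, for each axiom, an alternative interpretation preserved by all the other axioms but not by that one---is exactly the paper's, and your invariants for (\ref{nLOpp:Rp2pi}) (real-valued phase sum) and (\ref{nLOpp:swap}) (parity of crossings) match the paper's first and third bullet points. However, two of your concrete choices fail, and the hardest case is left essentially unaddressed. First, the invariant you propose for (\ref{nLOpp:pM}), the sum of the absolute values of the phase angles, is not preserved by the other axioms: (\ref{nLOpp:Rp2pi}) sends a phase shifter of angle $2\pi$ to the identity, changing this sum from $2\pi$ to $0$, and the Euler equations redistribute the phases entirely, so the quantity gives no separation. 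The paper instead counts the number of phase shifters whose angle differs from $2\pi$, which (\ref{nLOpp:Rp2pi}) leaves unchanged and only (\ref{nLOpp:pM}) can decrease. Relatedly, your parenthetical claim that the Euler axioms preserve the real-valued phase sum is unjustified: semantic equality only forces the two sides' phase sums to agree modulo $2\pi$ (via the determinant), and the axiom schema admits arbitrary real angles on one side, so an instance of (\ref{nLOpp:E2}) can shift the sum by a multiple of $2\pi$ in a single step. Second, you assert that ``a crude generator-counting argument does not suffice'' to isolate (\ref{nLOpp:E2}), but the paper's argument is precisely a counting one: the parity of (number of beam splitters $+$ number of SWAPs) is preserved by every other axiom (in particular by (\ref{nLOpp:swap}), which trades one SWAP for one beam splitter, and by (\ref{nLOpp:E3}), which has three beam splitters on each side) and is flipped by (\ref{nLOpp:E2}), whose sides have two and one beam splitters respectively.

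The genuine gap is (\ref{nLOpp:E3}). Both sides have identical generator counts, and your proposed ``$\mathbb{Z}/2$-valued invariant tracking a boundary-case parity'' of the angle normalisations is not substantiated; there is no indication that any such simple parity exists, and the canonical angle ranges are properties of a chosen representative, not of a circuit, so it is unclear how they would even define a congruence-invariant quantity. The paper's proof of this case (\ref{proof:minimality}) is of a different nature and substantially heavier: it defines an equivalence relation $\sim_{\varphi}$ on three-wire circuits via minimal block decompositions (blocks compared up to phases exchanged between them), introduces a terminating, confluent rewriting system whose normal forms represent $\sim_{\varphi}$-classes, checks that the PROP laws and all axioms other than (\ref{nLOpp:E3}) preserve $\sim_{\varphi}$, and concludes because the two sides of (\ref{nLOpp:E3}) are distinct normal forms. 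You would need to supply an argument of comparable substance for this case before the minimality claim is established.
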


\begin{proof}
    We define an alternative interpretation which satisfies all the equations aside from the one we prove to be necessary. In particular: \begin{itemize}
        \item (\ref{nLOpp:Rp2pi}) is the only rule on one wire that changes the sum of the phases.
        \item (\ref{nLOpp:pM}) is the only rule on one wire that can reduce the number of phases different from $2\pi$.
        \item (\ref{nLOpp:swap}) is the only rule changing the parity of the number of SWAPs. 
        \item (\ref{nLOpp:E2}) is the only rule changing the parity of (number of beam splitter + number of SWAPs).\end{itemize}
    For (\ref{nLOpp:E3}), the proof is given in~\ref{proof:minimality}.
\end{proof}

\subsection{Useful triangular forms}
\label{subsec:triangles}
In this subsection, we introduce three classes of \lopp-circuits, with the following inclusions: $\Tmnrec\subset\Tmn\subset\T$. Their shape and properties are illustrated and summarized in \pref{tab:triangles}. They are of particular interest as \T-circuits are the normal forms of the \lopp-calculus~\cite{clement2022lov}, \Tmn-circuits will be used in the normal forms of the \love-calculus (\pref{def:NF}), and their uniqueness will be proved thanks to use of \Tmnrec-circuits (\pref{sec:completeness}).

\begin{definition}[\T-circuits] \label{def:triangular}
    A \T-circuit is a \lopp-circuit with the following shape: 
    \[\!\!\!\!\tikzfigbox{0.6}{trianglepoint8-c}\]
    with $\varphi_{i,j} \in [0,2\pi),\theta_{i,j}\in [0,\frac{\pi}{2}]$ and the following conditions: $\theta_{i,j}=0\Rightarrow (\forall j'>j, \varphi_{i,j'}=\theta_{i,j'}=0)$ and $\theta_{i,j}=\frac{\pi}{2}\Rightarrow \varphi_{i,j}=0$. $\theta_{i,j}$ is on the $i^{th}$ right (resp.\ $j^{th}$ left) diagonal, and on the $(i+j-1)^{th}$ row of beam splitters.
\end{definition}

\begin{remark}[Coefficients of \interpsone{\T}]
    \label{rem:coefpath}
    The coefficient $t_{i,j}$ of $\interpsone{\triangleone}$ is determined by the sum of all the paths from the $j^{th}$ input wire to the $i^{th}$ output wire, where for each path we multiply by a $\cos$ (resp.\ $\sin$) term when the photon is reflected on (resp.\ transmitted through) a beam splitter, and by a phase when the path crosses a phase shifter. For instance, $t_{1,2}=\cos(\theta_{1,2})e^{i\varphi_{1,2}}i\sin(\theta_{1,1})e^{i\varphi_{1,1}}$. More generally, we have $t_{i,j}=e^{i\varphi_{i,j}}\cos(\theta_{i,j}) \times q_{i,j} + r_{i,j}$ where $q_{i,j},r_{i,j}$ are terms depending uniquely on the angles with lower indexes.
\end{remark}

\begin{proposition}[Universality and Uniqueness of $T$]\label{prop:uniqT}
    For any \lopp-circuit $D$, there exists a unique circuit $T$ in triangular form of \pref{def:triangular} such that $\interpsone{D}=\interpsone{T}$.
\end{proposition}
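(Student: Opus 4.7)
The plan is to prove existence and uniqueness separately, both organised by induction on the number of modes $m$.

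For existence, I would use a Reck-type triangular decomposition: given any $\lopp$-circuit $D$ on $m$ modes, its semantics $U = \interpsone{D}$ is a unitary matrix in $U(m)$. I peel off the outermost right-diagonal of the triangle by finding a sequence of beam splitter angles along it together with one top phase shifter such that, when composed with $U$, the first row and first column become $(1, 0, \ldots, 0)$. This reduces the problem to a unitary in $U(m-1)$ on the remaining wires, to which the induction hypothesis applies; the base case $m=1$ is a single phase shifter. The normalisations $\theta_{i,j} \in [0, \pi/2]$ and $\varphi_{i,j} \in [0, 2\pi)$ are always achievable by absorbing any excess sign or phase into neighbouring phase shifters, and the side-conditions ``$\theta_{i,j}=0 \Rightarrow (\forall j' > j,\ \varphi_{i,j'}=\theta_{i,j'}=0)$'' and ``$\theta_{i,j}=\pi/2 \Rightarrow \varphi_{i,j}=0$'' are met by choosing the canonical representative in each degenerate branch.

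For uniqueness, I would leverage the coefficient formula of \pref{rem:coefpath}: $t_{i,j} = e^{i\varphi_{i,j}} \cos(\theta_{i,j})\, q_{i,j} + r_{i,j}$ where $q_{i,j}$ and $r_{i,j}$ depend only on angles with strictly lower indices. Suppose two triangular circuits $T$ and $T'$ satisfy $\interpsone{T} = \interpsone{T'}$. Processing the angles in an order consistent with this dependency structure (e.g.\ row by row of beam splitters, from top-left towards bottom-right), at each step the previously-determined angles fix $q_{i,j}$ and $r_{i,j}$, and the matrix entry $t_{i,j}$ then uniquely forces $\cos(\theta_{i,j})$ (hence $\theta_{i,j} \in [0, \pi/2]$) and, when $\theta_{i,j} \notin \{0, \pi/2\}$, the phase $e^{i\varphi_{i,j}}$. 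The degenerate cases are exactly what the side-conditions of \pref{def:triangular} handle: when $\theta_{i,j} = 0$ the continuation of the right-diagonal is semantically invisible and is pinned down by being forced to zero; when $\theta_{i,j} = \pi/2$ the phase $\varphi_{i,j}$ factors out and is likewise pinned down.

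The main obstacle is ensuring the extraction remains well-defined when some earlier $q_{i,j}$ vanishes, which could in principle obstruct reading off a later angle from $t_{i,j}$ alone. A careful case analysis shows this can happen only when preceding angles already fall into one of the degenerate branches, so the angle in question is itself forced to zero by the normalisation conventions and no ambiguity arises. Combined with a parameter count --- a triangular circuit carries exactly $m^2$ real parameters, matching the real dimension of $U(m)$ --- this closes both the existence and the uniqueness side.
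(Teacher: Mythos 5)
Your overall strategy matches the paper's: existence via a Reck-style peeling (the paper simply cites \cite{reck1994experimental}), and uniqueness by sequentially reading the pair $(\varphi_{i,j},\theta_{i,j})$ off the matrix entry $t_{i,j}=e^{i\varphi_{i,j}}\cos(\theta_{i,j})\,q_{i,j}+r_{i,j}$. The gap is in your resolution of what you call the main obstacle, the case $q_{i,j}=0$. You claim that whenever $q_{i,j}$ vanishes the angle in question is itself forced to zero by the normalisation conventions. That covers only one of the two sub-cases. One has $q_{i,j}\neq 0$ if and only if all of the $\theta_{i,\ell}$ with $\ell<j$ (same right-diagonal) \emph{and} all of the $\theta_{k,j}$ with $k<i$ (earlier right-diagonals) are nonzero. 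If some $\theta_{i,\ell<j}=0$, then indeed the side-condition of \pref{def:triangular} forces $\varphi_{i,j}=\theta_{i,j}=0$ and there is nothing left to determine. But if all $\theta_{i,\ell<j}$ are nonzero and the vanishing of $q_{i,j}$ comes from some $\theta_{k,j}=0$ with $k<i$, the normalisation says nothing about $\theta_{i,j}$: the condition $\theta_{k,j}=0\Rightarrow\theta_{k,j'}=0$ for $j'>j$ only propagates along the $k$-th right-diagonal, so $\theta_{i,j}$ may take any value while the entry $t_{i,j}=r_{i,j}$ carries no information about it. Your argument stalls exactly there.

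The paper closes this case with an extra step you are missing: if $K$ of the angles $\theta_{k<i,j}$ vanish, then those $K$ right-diagonals are entirely zero from column $j$ onward, the corresponding beam splitters act as identities, and the path from the $(j+K)$-th input wire to the $i$-th output wire yields an equation of the same shape $e^{i\varphi_{i,j}}\cos(\theta_{i,j})\cdot q+r$ with $q\neq 0$; the angle is then read off a \emph{different} matrix entry, not forced to zero. Separately, your closing parameter count ($m^2$ real parameters versus $\dim U(m)$) is only a consistency check and cannot substitute for this missing case, since equal dimensions do not give injectivity. The remainder of the proposal (existence, the non-degenerate extraction, and the treatment of $\theta_{i,j}\in\{0,\pi/2\}$) is in line with the paper.
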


\begin{proof}
     The existence is a direct consequence of~\cite{reck1994experimental} or the Proposition~26 of~\cite{clement2022lov}. The uniqueness is a consequence of \pref{rem:coefpath} by sequentially showing the uniqueness of $(\varphi_{i,j},\theta_{i,j})$ in $t_{i,j}$, and by noticing that for any $z$ with $0<|z|\leq 1$, there are unique $\varphi,\theta\in [0,2\pi)\times[0,\frac{\pi}{2})$ such that $e^{i\varphi}\costheta=z$, with $\varphi,\theta=(0,0)$ for the special case of $z=0$. More details are provided in~\pref{app:uniqTproof}.
\end{proof}
\begin{remark}
    A generic \T-circuit $T:n\rightarrow n$ has $\frac{n(n-1)}{2}$ beam splitters and $\frac{n(n+1)}{2}$ phase shifters, having a total of $n^2$ different angles, the dimension of the unitary group $U(n)$.
\end{remark}

\begin{definition}[$\Tmn$-circuits]\label{def:subtriangular}
    A \lopp-circuit $\Ntriangle:n+\ntilde\rightarrow m+\mtilde$ is a $\Tmn$-circuit if:
    \begin{enumerate}
        \item\label{propTt:lopp} $\Ntriangle$ is a \T-circuit as defined in \pref{def:triangular},
        \item\label{propTt:absinput} there is no beam splitter or phase shifter fully and directly connected to the $\ntilde$ last input wires, ie.\ $\varphi_{i,j}=\theta_{i,j}=0$ if $row_{i,j}=i+j-1>n$ and there doesn't exist $(k,\ell)$ such that $k+\ell-1=row_{i,j}-1$, $k<i$ and $\theta_{k,\ell}\neq 0$,
        \item\label{propTt:absoutput} there is no beam splitter or phase shifter fully and directly connected to the $\mtilde$ last output wires, ie.\ $\varphi_{i,j}=\theta_{i,j}=0$ if $row_{i,j}=i+j-1>m$ and there doesn't exist $(k,\ell)$ such that $k+\ell-1=row_{i,j}-1$, $k\geq i$ and $\theta_{k,\ell}\neq 0$, and
        \item\label{propTt:noid} there exists one nonzero $\theta_{i,j}$ for each of the last $\max(\ntilde,\mtilde)$ rows. 
    \end{enumerate}
\end{definition}

The Property~\ref{propTt:noid} is an additional constraint that appears in the normal forms defined in~\pref{def:NF}. Property~\ref{propTt:absinput} and~\ref{propTt:absoutput} imply the only nonzero angles have indexes $(i\leq m,j\leq n)$, leading to the following proposition, direct consequence of~\pref{rem:coefpath} and the proof of~\pref{prop:uniqT}.

\begin{proposition}[Uniqueness of $\Tmn$-circuits on their $m\times n$ submatrix]\label{prop:uniqSubT}
    For any $\Tmn$-circuits $\triangleone,\triangletwo:n+\ntilde\rightarrow m+\mtilde$, if $\submatrix{\interpsone{\triangleone}}{m}{n}=\submatrix{\interpsone{\triangletwo}}{m}{n}$ then $\triangleone=\triangletwo$, where $\submatrix{M}{k}{\ell}$ is the $k\times\ell$ matrix composed of the first $k$ rows and $\ell$ columns of $M$.
\end{proposition}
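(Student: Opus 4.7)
The plan is to lift the inductive uniqueness argument of \pref{prop:uniqT} to the restricted $m\times n$ corner of the triangle. First, I would invoke the observation already noted just before the proposition: Properties~\ref{propTt:absinput} and~\ref{propTt:absoutput} of \pref{def:subtriangular} force every nonzero angle $(\varphi_{i,j},\theta_{i,j})$ of a $\Tmn$-circuit to satisfy $i\leq m$ and $j\leq n$. Outside this ``active corner'', both $\triangleone$ and $\triangletwo$ reduce to identities on the last $\ntilde$ input wires and last $\mtilde$ output wires, and hence contribute only identity/zero entries to $\interpsone{\triangleone}$ and $\interpsone{\triangletwo}$ in those positions. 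Consequently, it suffices to recover the angles inside the corner, and this information is carried precisely by the $m\times n$ submatrix.

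Second, I would reuse the path formula of \pref{rem:coefpath}: for each index $(i,j)$ with $i\leq m$ and $j\leq n$, the corresponding submatrix coefficient is
\[
t_{i,j}\;=\;e^{i\varphi_{i,j}}\cos(\theta_{i,j})\,q_{i,j}\,+\,r_{i,j},
\]
where $q_{i,j}$ and $r_{i,j}$ depend only on angles with strictly smaller indices. Applying the same elementary extraction used in the proof of \pref{prop:uniqT} (for any $z$ with $0<|z|\leq 1$ there is a unique $(\varphi,\theta)\in[0,2\pi)\times[0,\frac{\pi}{2})$ with $e^{i\varphi}\cos(\theta)=z$, and $(\varphi,\theta)=(0,0)$ when $z=0$), one inductively recovers $(\varphi_{i,j},\theta_{i,j})$ from $t_{i,j}$ and the previously-recovered angles. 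The hypothesis $\submatrix{\interpsone{\triangleone}}{m}{n}=\submatrix{\interpsone{\triangletwo}}{m}{n}$ supplies the same $t_{i,j}$ for both circuits, so the same angles are produced in the corner; combined with the vanishing of angles outside the corner, this yields $\triangleone=\triangletwo$.

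The only delicate point — the step I expect to be the main obstacle — is verifying that the induction remains \emph{confined} to the $m\times n$ corner, i.e.\ that every ``smaller-index'' angle appearing in $q_{i,j}$ and $r_{i,j}$ for $(i,j)$ in the corner is itself inside the corner. This is geometric: a path in a triangular Reck-style mesh from the $j'$-th input to the $i'$-th output crosses only beam splitters whose right and left diagonal indices are bounded by $i'$ and $j'$ respectively, so restricting to $i'\leq m$, $j'\leq n$ keeps the recursion inside the active region. Granted this, the induction closes exactly as in \pref{prop:uniqT}, and the proposition follows. Note that Property~\ref{propTt:noid} plays no role in the uniqueness argument itself; it only ensures that the type $(n+\ntilde)\to(m+\mtilde)$ of the circuit is not artificially inflated by trailing identity rows, a normal-form consideration irrelevant here since the two circuits are given to share the same type.
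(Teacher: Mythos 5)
Your proposal is correct and follows essentially the same route as the paper: the paper likewise notes that Properties~\ref{propTt:absinput} and~\ref{propTt:absoutput} confine all nonzero angles to indexes $(i\leq m, j\leq n)$ and then declares the result a direct consequence of \pref{rem:coefpath} and the extraction argument in the proof of \pref{prop:uniqT}. Your additional remarks --- that the recursion for $q_{i,j}, r_{i,j}$ stays inside the active corner and that Property~\ref{propTt:noid} is not needed --- are accurate elaborations of details the paper leaves implicit.
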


\begin{definition}[$\Tmnrec$-circuits] \label{def:Trec}
    A $\Tmn$-circuit $\Ntriangle:n+\ntilde\rightarrow m+\mtilde$ is a \Tmnrec-circuit if $\mtilde=n$.
\end{definition}

\begin{remark}
    As $\mtilde=n$, \Tmnrec-circuits have exactly $\ntilde\times n$ beam splitters shaped like in \pref{tab:triangles}. Furthermore, their angles are necessarily nonzero, as one zero would imply the rest of the right-diagonal to be zero with~\pref{def:triangular}, contradicting the Property~\ref{propTt:noid}. That particular shape and those nonzero properties will be useful in the proofs of~\pref{sec:completeness}.
\end{remark}

% LOfi 

\section{\lovetitle-calculus}\label{sec:love}

\subsection{Fock space}\label{subsec:vectorspace}

As described in \pref{subsec:singlephoton}, the state space of one photon with $m$ spatial modes is $\mathbb{C}^m$, as it can be on a superposition of all the different positions. Photons are particles that can bunch and share the same state, so each mode can be occupied by many photons. Furthermore, to observe quantum effects like interferences, we need the photons to be indistinguishable, meaning we can't know \textit{which photon is in which state}. 

For those two reasons, the usual way to represent a state with indistinguishable photons is by using the occupation number representation, where we indicate \textit{``how many photons are there in each state''}. We consider the basis states $\Phi_m\coloneqq\{ \ket{n_1,n_2,\dots,n_m}, (n_1,n_2,\dots,n_m) \in \N^m \}$ \cite{aaronson2011computational}, denoted as \textit{Fock states}. The state $\ket{n_1,n_2,\dots,n_m}$ represents a configuration where $n_i$ photons occupy the $i^{th}$ mode. The space of possible many-photon states over $m$ modes, the \textit{bosonic (symmetrical)} Fock space and denoted as $\Fbos{m}$, is defined as follows.

\begin{definition}[Fock space]\label{def:Fbos}
    We define the Hilbert space $\Fbos{m}$ as the Hilbert completion $\ell^2(\Phi_m)$ equipped with the bra-ket inner product $\langle \cdot | \cdot \rangle$, with the convention $\Fbos{0}=\C$.
\end{definition}

\begin{remark} $\Fbos{1}$ contains states that are an \textit{infinite} superposition of basis states, like the coherent states $\ket{\alpha}=e^{-\frac{|\alpha|^2}{2}}\sum_{k=0}^{\infty}\frac{\alpha^k}{k!}\ket{k}$. We can note that \Fbos{m} is isomorphic to $\ell^2(\N^m)$.
    
\end{remark}
To describe the space of the auxiliary sources, we consider a sub vector space of $\Fbos{m}$.

\begin{definition}[Subspace of the Fock space: $\Fv{m}$]\label{def:Fv}
    We define the pre-Hilbert space $\Fv{m}$ as the linear span of $\Phi_m$ equipped with the bra-ket inner product $\langle \cdot | \cdot \rangle$, with the convention $\Fv{0}=\C$.
\end{definition}

\begin{remark}
    \Fv{m} only contain states that are \textit{finite} linear combination of the Fock basis states. In particular, the coherent states are not included. We can note that \Fv{1} is isomorphic to $c_{00}$, i.e.\ the space of eventually zero sequences.
\end{remark}

\begin{definition}[$\Fvdual{\mtilde}$]
    We define the pre-Hilbert space $\Fvdual{\mtilde}$ as the linear span of $\{\bra{n_1,\dots,n_{\mtilde}},(n_1,\dots,n_{\mtilde})\in\N^{\mtilde}\}$.
\end{definition}

\subsection{Syntax and many-photon semantics}
\label{subsec:manyphotons}

\begin{definition}[\love-calculus]
    \love~is the PROP of \love-circuits generated by 
    \[\begin{array}{cccc}
        \tikzfigbox{1}{ps}: 1\rightarrow 1&
        \tikzfigbox{1}{bs}: 2\rightarrow 2&
        \tikzfig{sfboldntilde}: 0 \rightarrow \ntilde&
        \tikzfig{dgboldmtilde}: \mtilde \rightarrow 0
    \end{array}\] where $\varphi,\theta \in \mathbb{R}$, and  $\bm{f}$ (resp.\ $\bm{g}$) is a state in $\Fv{\ntilde}$ (resp.\ in $\Fvdual{\mtilde}$) with $\ntilde,\mtilde \in \N^+$. 
\end{definition}

\begin{remark}
    In string diagrams, a process $0\rightarrow \ntilde$ (resp.\ $\mtilde \rightarrow 0$) is called a state (resp.\ an effect). We will keep the source (resp.\ detector) terms to be consistent with their physical representation. A process $0\rightarrow 0$ is called a scalar.
\end{remark}

\begin{remark}
    The choice of those generators is discussed in \pref{app:notsobad}.
    
\end{remark}
\begin{definition}[Conventions for the notations]\label{def:conventions}
    Bold terms will always be vectors. In particular $\vect{f},\vect{f}_{k}$ (resp.\ $\vect{g},\vect{g}_{\ell}$) will always represent a ket (resp. a bra). Bold integers $\vect{k}$ (resp. \vect{\ell}) will represent $\ket{k}$ (resp.\ $\bra{\ell}$) in the sources (resp.\ detectors). The summation term $\sum$ will often be omitted, the index being implicitly the sum index. Note that for clarity, the summation term won't be omitted in~\pref{fig:rewritelofi}, and for conciseness, they will be omitted in~\pref{fig:axioms}. For instance $\vect{f}=\sum_{k\in\K}\ket{\vect{f}_k} \ket{k}$ will simply be noted as $\vect{f}_k \otimes \vect{k}$. \anyket~(resp.\ \anybra) represents an arbitrary ket (resp.\ bra) on one mode.~\anyketv~(resp.\ \anybrav) represents an arbitrary ket (resp.\ bra) for an arbitrary number of modes, representing an arbitrary scalar when the number of modes is zero. Those are used to omit terms when the specific value of those terms are not of interest, as in some equations of \pref{fig:axioms}. For the zero vector $\vect{f}=0$ (resp.\ $\vect{g}=0$), as there is no term in the sum, we chose to represent it with $\tikzfigbox{1.0}{s0}$ (resp.\ an empty detector $\tikzfigbox{1.0}{d0}$). Note it is different from $\tikzfigbox{1.0}{sket0}$ (resp.\ $\tikzfigbox{1.0}{dbra0})$ representing the nonzero vector $\ket{0}$ (resp.\ $\bra{0}$).
\end{definition}

\begin{definition} \label{def:semmany}
    Let $C$: $n \to m$ a \love-circuit, let $\interps{C}:\Fbos{n}\to\Fbos{m}$ be the linear map inductively defined as 
$\interps{C_2 \circ C_1}=\interps{C_2} \circ \interps{C_1}$, $\interps{C_1 \otimes C_2}=\interps{C_1}\otimes\interps{C_2}$
and:
\[
\begin{array}{cll}
\interps{\tikzfig{sfboldntilde}} \qquad& 0 \to \Fbos{\ntilde} \qquad\qquad& \vect{f}\in\Fv{\ntilde} \vspace{\pspazer}\vspace{\pspazer}\\ 
\interps{\tikzfig{dgboldmtilde}} & \Fbos{\mtilde} \to 0 & \vect{g}\in\Fvdual{\mtilde} \vspace{\pspazer}\\
   \interps{\tikzfigbox{\bx}{id}}  & \Fbos{1} \to \Fbos{1} & \ket{k} \mapsto \ket{k}\\\vspace{\pspazer}
     \interps{\tikzfig{ps-xs}}  & \Fbos{1} \to \Fbos{1} & \ket{k} \mapsto \Pphi(\ket{k})\\\vspace{\pspazer}
     \interps{\tikzfigbox{\bx}{swap}}& \Fbos{2}\to \Fbos{2}& \ket{k_1,k_2} \mapsto \ket{k_2,k_1}
     
     \\ \vspace{\pspazer}
     \interps{\tikzfig{bs-xs}}&: \Fbos{2}\to \Fbos{2}& \ket{k_1,k_2} \mapsto \Btheta(\ket{k_1,k_2}) 
\end{array}\]
where $\Btheta(\ket{k_1,k_2})\coloneqq \sum\limits_{\ell_1+\ell_2=k_1+k_2}\sqrt{\frac{\ell_1!\ell_2!}{k_1!k_2!}} \sum\limits_{\substack{p+q=\ell_1\\ \delta=p-q}}{\binom{k_1}{p}}{\binom{k_2}{q}}\costheta^{k_2+\delta}(i\sintheta)^{k_1-\delta} \ket{\ell_1,\ell_2}$ and \\$\Pphi(\ket{k})\coloneqq e^{ik\varphi}\ket{k}$, with the convention ${\binom{k}{k'}}=0$ for $k<k'$.
\end{definition}

We can check that \Pphi~and \Btheta~are unitary operators~\cite{aaronson2011computational} and are therefore well-defined on the Hilbert space by continuity and linearity. One can also look at~\cite{kok2007review} or the Examples~\ref{ex:adagps} and~\ref{ex:adagbs} in~\pref{app:decompDelta} for a more physical interpretation of where the formulas come from.

\begin{remark}[Hermitian conjugate]
    We have $P_{\varphi}^\dagger=P_{-\varphi}$ and $B_{\theta}^\dagger=B_{-\theta}$, where $\dagger$ is the Hermitian conjugate. Therefore, $\bra{\ell}P_{\varphi}=(P_{-\varphi}\ket{\ell})^\dagger$ and $\bra{\ell_1,\ell_2}B_{\theta}=(B_{-\theta}\ket{\ell_1,\ell_2})^\dagger$.
\end{remark}

\subsection{Equational theory of~\lovetitle}
\label{subsec:lofeq}

\newcommand{\axiom}[5][1.0]{
    \begin{equation}  \tag{#3} \label{axiom:#3} #2~\begin{array}{rcl} \scalebox{#1}{\tikzfig{#4}}\!\!\!\!\!&=&\!\!\!\!\!\scalebox{#1}{\tikzfig{#5}}\end{array}\end{equation}}

    \newcommand{\axioms}[5][1.0]{\begin{equation}  \tag{#3} \label{axiom:#3} #2~\begin{array}{rcl} \scalebox{#1}{\tikzfig{#4}}\!\!\!\!\!\!&=&\!\!\!\!\!\!\!\!\!\!\scalebox{#1}{\tikzfig{#5}}\end{array}\end{equation}}
    \newcommand{\axiomss}[5][1.0]{\begin{equation}  \tag{#3} \label{axiom:#3} #2~\begin{array}{rcl} \scalebox{#1}{\tikzfig{#4}}\!\!\!\!\!\!\!\!\!\!&=&\!\!\!\!\!\!\!\!\!\!\scalebox{#1}{\tikzfig{#5}}\end{array}\end{equation}}
    % if adding/removing externalise, swap ss and sss
    \newcommand{\axiomsss}[5][1.0]{\begin{equation}  \tag{#3} \label{axiom:#3} #2~\begin{array}{rcl} \scalebox{#1}{\tikzfig{#4}}\!\!\!\!\!&=&\!\!\!\!\!\scalebox{#1}{\tikzfig{#5}}\end{array}\end{equation}}

\begin{figure}[htbp]
    \centering
    \renewcommand{\nspazer}{-0em}
    \begin{multicols}{2}
    \begin{minipage}{0.4\textwidth}
        \begin{equation}\tag{p\ensuremath{2\pi}}\label{axiom:p2pi} \hspace{-0.5cm}\vspace{0.15cm}~\begin{array}{rcl} \scalebox{0.8}{\tikzfig{LOpp-A-middle}}\!\!\!\!\!\!&=&\!\!\scalebox{0.8}{\tikzfig{LOpp-A-right}}\end{array}\end{equation}
        %\axiom[0.8]{\hspace{-0.5cm}\vspace{0.15cm}}{\ensuremath{p2\pi}}{LOpp-A-middle}{LOpp-A-right}
        %\vspace{\nspazer}
        %\axiom[0.8]{\hspace{-0.5cm}}{b0}{Rb0left}{idtensid}
    \end{minipage}\begin{minipage}{0.6\textwidth}
        \axiom[0.8]{}{swap}{swap}{swapright}
    \end{minipage}
\end{multicols}
\vspace{-0.5cm}
\axiom[0.9]{\quad\qquad}{E2}{E2left}{E2right} 
\axiom[0.8]{\qquad}{E3}{E3left}{E3right}
\vspace{0.1cm}

\begin{multicols}{2}   
    \begin{minipage}{0.4\textwidth}
        \axiom[0.8]{}{s0-0d}{Rsd0left}{sempty} 
    \end{minipage}\begin{minipage}{0.6\textwidth}
        \axiom[0.8]{\quad\qquad}{zero}{s0left}{s0right} 
   \end{minipage}
\end{multicols}
    %\vspace{0.05cm}

%  (ss)  and         (s-b)
        \begin{multicols}{2}
            \begin{minipage}{0.4\textwidth}
                \axiom[0.8]{\hspace{-0.5cm}}{ss}{sMleft}{sMright2}
            \end{minipage}\begin{minipage}{0.6\textwidth}
                \axioms[0.9]{\hspace{-0cm}}{s-b}{sAbleft2}{sAbright2}\end{minipage}
    \end{multicols}

%  (s-0d)  and         (s-p)
    \begin{multicols}{2}
                \begin{minipage}{0.42\textwidth}
                    \axiomsss{\hspace{-0.63cm}}{s-0d}{Roleft2}{Roright}
                \end{minipage}\begin{minipage}{0.58\textwidth}\axioms{}{s-p}{sApleft2}{sApright2}
            \end{minipage}\end{multicols}
            %\vspace{0cm}
%  (dd)  and         (b-d)
                \begin{multicols}{2}
                \begin{minipage}{0.4\textwidth}
                    \axiom[0.8]{\hspace{-0.5cm}}{dd}{dMleft}{dMright}
                \end{minipage}\begin{minipage}{0.6\textwidth}
                \axiomsss[0.9]{\hspace{0cm}}{b-d}{dAbleft}{dAbright}
            \end{minipage}\end{multicols}
%  (s0-d)  and         (p-d)  
           \begin{multicols}{2}
            \begin{minipage}{0.42\textwidth}
            \axiomsss{\hspace{-0.63cm}}{s0-d}{s0-dleft}{s0-dright}
            \end{minipage}\begin{minipage}{0.58\textwidth}
            \axiomsss{}{p-d}{dApleft}{dApright}\end{minipage}
    \end{multicols}

        \axiom{\quad}{h2}{hleft2}{hright2}

    \caption{\label{fig:axioms}Axioms of the \love-calculus. The angles of (\ref{axiom:E2}) and (\ref{axiom:E3}) are the same as in the axioms of the \lopp-calculus (\pref{fig:nLOpp}). $h$ is any linear function $\Fv{2}\rightarrow\Fv{2}$. The conventions for $\left\{\emptyset,\anyket,\anyketv,\anybra,\anybrav\right\}$, and the omitted sums are detailed in~\pref{def:conventions}. The interpretations of the axioms are given in~\pref{prop:soundness}.}
\end{figure}

Similarly to \pref{subsec:eqlopp}, we introduce an equational theory for \love~in \pref{fig:axioms}.

\begin{definition}[\love-calculus]
    \label{def:lov-calc}
    Two \love-circuits $C, C'$ are
    equivalent according to the rules of the \love-calculus,
    denoted $\love\vdash C=C'$, if one can transform $C$ into $C'$ using the equations given in Figure~\ref{fig:axioms}.
  \end{definition}

\begin{remark}
    The~\pref{nLOpp:pM} is not present for it can be derived with the Equations~(\ref{axiom:p2pi}), (\ref{axiom:E2}) and~(\ref{axiom:s0-0d}), alongside with~\pref{nLOpp:b0}, that can be derived with the rules of the PROP, and the Equations~(\ref{axiom:swap}) and~(\ref{axiom:E2}). The derivation is detailed in~\pref{app:loficompletelopp}. Note that the Equaiton~(\ref{axiom:h2}) can be considered an equation of \textit{diagrams with holes}.
\end{remark}

\begin{proposition}[Soundness]\label{prop:soundness}
    For any two \love-circuits $C$ and $C'$, if $\love\vdash C = C'$ then $\interps{C}=\interps{C'}$. 
\end{proposition}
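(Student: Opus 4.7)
The plan is to show that semantic equality is a congruence respecting the symmetric monoidal PROP structure, and then to verify each axiom of \pref{fig:axioms} individually. The congruence and PROP part is automatic: $\interps{\cdot}$ is defined inductively on $\circ$ and $\otimes$ in \pref{def:semmany}, and the PROP axioms hold because $\interps{\mathrm{swap}}$ is the canonical exchange of tensor factors on Fock space. Verification of the individual equations then splits into two groups according to whether the equation involves sources or detectors.

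For the four LOpp-style axioms (p$2\pi$), (swap), (E2) and (E3), I would reduce the many-photon claim to the single-photon one already established in \pref{prop:soundLOpp}. The tool is bosonic second quantization: to every unitary $U \in U(m)$ one associates the unique unitary $\Gamma(U)$ on $\Fbos{m}$ that fixes the vacuum and satisfies $\Gamma(U)\,a_j^\dagger\,\Gamma(U)^\dagger = \sum_i U_{ij}\,a_i^\dagger$, and the assignment $U \mapsto \Gamma(U)$ is a functor from unitaries to unitaries that is compatible with $\oplus$ and $\otimes$ in the expected sense. A direct check on the generators shows that the formulas for $P_\varphi$ and $B_\theta$ in \pref{def:semmany} are exactly the images under $\Gamma$ of their single-photon matrices from \pref{def:semLOpp}; by functoriality, $\interps{D} = \Gamma(\interpsone{D})$ for every LOpp-circuit $D$. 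Soundness of the four axioms in the many-photon semantics then follows from $\interpsone{D} = \interpsone{D'}$, which holds by \pref{prop:soundLOpp}.

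The remaining axioms, which involve the source and detector generators, I would verify directly on Fock basis states. Axioms (ss) and (dd) express the $\C$-linearity of the source and detector generators and are immediate from \pref{def:semmany}. The scalar normalisations (zero) and (s0-0d) are trivial. The absorption equations (s-b), (s-p), (s-0d), (s0-d), (b-d) and (p-d) each fold an adjacent phase shifter or beam splitter into the coefficients of a source or detector, so they reduce to the explicit formulas for $P_\varphi \ket{k}$ and $B_\theta \ket{k_1,k_2}$ combined with the Hermitian-conjugate identities $\bra{\ell} P_\varphi = (P_{-\varphi} \ket{\ell})^\dagger$ and $\bra{\ell_1,\ell_2} B_\theta = (B_{-\theta} \ket{\ell_1,\ell_2})^\dagger$ recorded after \pref{def:semmany}. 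Finally, (h2) is an equation of ``diagrams with holes'' in which $h$ stands for whatever linear map $\Fv{2} \to \Fv{2}$ is produced by either side, so it is sound by construction.

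The main obstacle will be the first group: carefully setting up the second-quantization functor $\Gamma$ and verifying that the combinatorial formula defining $B_\theta$ in \pref{def:semmany} really is $\Gamma$ applied to the $2 \times 2$ single-photon beam-splitter matrix. Once that identification is in place, all four LOpp axioms are inherited from \pref{prop:soundLOpp} with no further work, and the source/detector group is a matter of routine Fock-space bookkeeping.
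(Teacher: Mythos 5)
Your proposal is correct and follows essentially the same route as the paper: reduce to checking each axiom of \pref{fig:axioms} by congruence, inherit the \lopp-style axioms from \pref{prop:soundLOpp}, and verify the source/detector axioms directly from \pref{def:semmany} together with the Hermitian-conjugate identities. The only difference is one of explicitness: you spell out the second-quantization functor $\Gamma$ that justifies passing from single-photon to many-photon equality, whereas the paper's proof simply asserts this step and defers the supporting machinery (the creation-operator commutation rules of \pref{prop:commutelambda} and \pref{ex:adagps}, \pref{ex:adagbs}) to its appendix.
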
 
\begin{proof}
    Since semantic equality is a congruence, it suffices to check the soundness for every equation of \pref{fig:axioms}, which follows from \pref{prop:soundLOpp} and \pref{def:semmany}. Informally, \pref{axiom:zero} means that if there is the scalar\footnote{\tikzfig{s01d} is an impossible event and is one way to represent the scalar $0=\bra{1}\ket{0}=\interps{\tikzfig{s01d}}$.} $0$, then the semantics of all the circuit ($X\otimes 0=0$) is the null function. We can therefore nullify the other wires with the zeros \tikzfigbox{1.0}{s0} and \tikzfigbox{1.0}{d0}. This rule is specifically used for \pref{rem:NFzero}.~\pref{axiom:s-0d} means we can either (from LHS to RHS) project on $\ket{0}$ on the last mode or (from right to left) add any states $\vect{f}_k\otimes\ket{k}$ with $k\neq 0$ as they are trivially orthogonal and cancelling. \pref{axiom:h2} means we can \textit{shift} any function $h:\Fv{2}\rightarrow \Fv{2}$ from left to right where there are identity wires, direct consequence of the associativity: $\bra{\ell_1,\ell_2}(h\ket{k_1,k_2})=(\bra{\ell_1,\ell_2}h)\ket{k_1,k_2}$. The rules~(\ref{axiom:dd}), (\ref{axiom:b-d}), (\ref{axiom:p-d}) and~(\ref{axiom:s0-d}) are respectively the duals of~(\ref{axiom:ss}), (\ref{axiom:s-b}), (\ref{axiom:s-p}) and~(\ref{axiom:s-0d}).% This rule is informally illustrated \todo{} in \pref{app:h2}
\end{proof}

\begin{theorem}[Completeness]\label{thm:completeness}
    For any two \love-circuits $C$ and $C'$, if $\interps{C}=\interps{C'}$ then $\love\vdash C = C'$.
\end{theorem}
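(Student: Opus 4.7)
The plan is to follow the standard normal-form strategy for graphical calculi: (a) define a normal form, (b) show every \love-circuit rewrites to one using the equations of \pref{fig:axioms}, and (c) prove that two normal forms with equal semantics are syntactically identical. Based on the introduction and \pref{subsec:triangles}, a normal form (stated in \pref{def:NF}, not shown in the excerpt) should consist of a source block $\vect f$ on the left, a triangular $\Tmn$-circuit in the middle, and a detector block $\vect g$ on the right, together with non-degeneracy side conditions whose purpose is precisely to forbid absorption of trivial wires into the sources/detectors.

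For step (b), I would proceed by structural induction on $C$. Using~(\ref{axiom:s-p}) and~(\ref{axiom:s-b}) one slides every source past any phase shifter or beam splitter to its right; symmetrically, (\ref{axiom:p-d}) and~(\ref{axiom:b-d}) slide every detector past any phase shifter or beam splitter to its left. Once all sources are grouped on the input side and all detectors on the output side, adjacent sources and detectors are merged via~(\ref{axiom:ss}) and~(\ref{axiom:dd}). The remaining purely optical middle block is then normalized to a \T-circuit by invoking completeness of the \lopp-calculus (\pref{thm:compLOpp}), whose rules are available in \pref{fig:axioms} (\pref{nLOpp:pM} and \pref{nLOpp:b0} being derivable, as noted after \pref{def:lov-calc}). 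Finally one trims: identity rows carrying only a $\ket{0}$ source or a $\bra{0}$ detector are eliminated through~(\ref{axiom:s-0d}) and~(\ref{axiom:s0-d}), and zero-scalar edge cases are collapsed through~(\ref{axiom:zero}) and~(\ref{axiom:s0-0d}). The residual is a \Tmn-core satisfying the constraints of \pref{def:NF}.

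For step (c), suppose two normal forms $N=(\vect f,\Ntriangle,\vect g)$ and $N'=(\vect f',\Ntriangle',\vect g')$ realize the same linear map on Fock space. I would recover each component from the semantics. The key lever is \pref{prop:uniqSubT}: a \Tmn-circuit is determined by the $m\times n$ upper-left submatrix of its single-photon semantics. The idea is to first restrict attention to a setting where this determination is directly accessible, namely the \Tmnrec-refinement of \pref{def:Trec}, in which all angles are strictly nonzero and the geometry is rigid. Evaluating $\interps{N}$ on suitably chosen Fock-basis inputs that match the support of $\vect f$, and reading off specific output amplitudes associated with $\vect g$, should extract enough information to pin down the relevant single-photon submatrix of $\Ntriangle$, hence $\Ntriangle$ itself via \pref{prop:uniqSubT}. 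Once $\Ntriangle$ is fixed, $\vect f$ and $\vect g$ are recovered by dividing out the triangular core, and the non-degeneracy side conditions imposed by the normal form make this extraction unambiguous.

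The hard part will be step (c). The infinite-dimensionality of \Fbos{n} rules out naive linear-algebraic inversion, and the many-photon amplitudes of $\interps{\Ntriangle}$ are permanents of submatrices of its single-photon matrix, so even after fixing $\vect f$ and $\vect g$ one must invert a permanent-valued map to read off the triangular core. The novel technique alluded to in the contributions is presumably to first reduce to the \Tmnrec-case, where the rigidity and strict-positivity of the angles allow a minimal family of Fock amplitudes to pin down the full single-photon submatrix, and only then to transfer the conclusion back to general \Tmn-circuits by reabsorbing the redundant wires. This is where the interplay between \pref{prop:uniqSubT}, \pref{def:Trec}, and the combinatorics of permanents is expected to do the real work, and I would expect the bulk of \pref{sec:completeness} to be devoted to this inversion argument.
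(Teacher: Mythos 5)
Your overall architecture is the paper's: rewrite to a normal form of shape source--triangle--detector, then prove uniqueness of normal forms and conclude by soundness. But there are two concrete gaps in step (c), which is where the whole difficulty of the theorem lives. First, the normal form as you describe it, with an independent triple $(\vect{f},\Ntriangle,\vect{g})$, is not unique: one can move any invertible diagonal reindexing or scalar between $\vect{f}$ and $\vect{g}$ without changing $\interps{N}$ (already $\vect f\mapsto\lambda\vect f$, $\vect g\mapsto\lambda^{-1}\vect g$ breaks syntactic uniqueness). The paper removes this freedom by forcing $\vect{g}$ into the canonical form $\sum_{\ell\in\K}\bra{\Niso_m(\ell)}\bra{\ell}$ via rule~(\ref{rewriteremoveg}), which transfers all coefficients from the detector into the source; your ``non-degeneracy side conditions'' do not address this, and without it step (c) is false as stated.

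Second, your mechanism for recovering $\Ntriangle$ and then $\vect f$ would not go through. Pinning down the single-photon submatrix of $\Ntriangle$ from finitely many Fock amplitudes of $\interps{N}$ is obstructed precisely by the permanent structure you mention: for fixed photon numbers the contributions of $T$ and of $\vect f$ are entangled, and different pairs can match on any finite family of amplitudes. The paper's \pref{lem:uniqT} instead uses an asymptotic argument: for the base case it shows that the amplitudes $\bra{N,\ell}\interps{\cdot}\ket{n,k}$ factor as $e^{iN\varphi}a^N\sqrt{N!/n!}\,Q^\ell_k(n)$ with $Q^\ell_k$ a degree-$\ell$ polynomial, so the limit of ratios $\bra{n+k}W\ket{n}/\bra{n+k}W'\ket{n}$ as $n\to\infty$ separates $a=\cos\theta$ (via the exponential rate) and then $\varphi$ (via the residual phase), and the general case follows by induction before \pref{prop:uniqSubT} is invoked. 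Likewise, ``dividing out the triangular core'' to recover $\vect f$ has no meaning on the infinite-dimensional Fock space; the paper instead proves that the family $\Omegav{u}{v}(T)$ of \pref{def:OmegaDelta} is linearly independent, by a change of basis to the creation-operator family $\Deltav{s}{t}$ whose threshold properties (\pref{prop:deltathreshold}) make the transition triangular with nonzero diagonal. Neither the limit argument nor the $\Omega$/$\Delta$ machinery is replaceable by the inversion you sketch, so the proposal leaves the essential content of the completeness proof unestablished.
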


\begin{proof}
The proof is in~\pref{subsec:completeLOfi}, direct consequence of the uniqueness of the normal forms of~\pref{sec:completeness}.
\end{proof}
% Completeness 

\section{Unique normal forms leading to the completeness of the \lovetitle-calculus}

\label{sec:completeness}

\label{subsec:sketch}

We introduce a set of oriented rewriting rules in \pref{subsec:rewriting}, that converge to a set of \love-circuits with specific shape and properties, defined in~\pref{subsec:convergetoNF}. The proof of their uniqueness is summarised in \pref{subsec:uniqueNF}. As a direct collorary of the uniqueness of the normal forms, we prove the completeness of the \love-calculus in \pref{subsec:completeLOfi}.

\subsection{Deterministic rewriting procedure}
\label{subsec:rewriting}

A strongly normalising rewriting system, i.e.\ terminating to normal forms, has been introduced in~\cite{clement2022lov} for \lopp-circuits. We mainly reuse all the rules, alongside additional rules to now take into account the sources and the detectors.

\begin{definition}[Rewriting system]
    Our rewriting system is defined on the PRO\footnote{This is similar to~\cite{clement2022lov}, to prevent any deformation of the form $\tikzfig{PROPswapswap}=\tikzfig{PROPidid}$.} of \love-circuits with the rules of~\pref{fig:rewritelofi}. 
\end{definition}

\newcommand{\sizeRW}{0.9}
\begin{figure}[htbp]
    \begin{multicols}{2}
    \begin{minipage}{0.45\textwidth}
    \begin{align}
      \label{rewritephasemod2pi}\tikzfig{phasemod2pi-left}&\to\tikzfig{phasemod2pi-right}\\[1.5em]
        \label{rewritebsmod2pi}\tikzfigbox{\sizeRW}{bsmod2pi-left}&\to\tikzfigbox{\sizeRW}{bsmod2pi-right}\\[1.5em]
          \label{rewritefusionphaseshifts}\tikzfig{LOpp-D-left}&\to\tikzfig{LOpp-D-right}\\[1.5em]
             \label{rewritezerophaseshifts}\tikzfig{LOpp-A-left}&\to\tikzfig{LOpp-A-right}\\[1.5em]
             \label{rewritezerobs}\tikzfig{LOpp-B-left}&\to \tikzfig{LOpp-B-right}
      \end{align}
    \end{minipage}\begin{minipage}{0.55\textwidth}

      \begin{align}
      \label{rewritetopphase}\tikzfigbox{\sizeRW}{rewritetopphase-left}&\to \tikzfigbox{\sizeRW}{rewritetopphase-right}\\[0.6em]
      \label{rewritepisur2}\tikzfigbox{\sizeRW}{rewritepisur2-left}\ &\to\ %\tikzfig{convtp-phase0hbspisur2thetab}
    \tikzfigbox{\sizeRW}{rewritepisur2-right}\\[0.6em]
      \label{rewritethetabs}\tikzfigbox{\sizeRW}{rewritethetabs-left}\ &\to\ \tikzfigbox{\sizeRW}{rewritethetabs-right}\\[0.6em]
      \label{rewriteminuspi}%\tikzfig{bs--}\ &\to\ \tikzfig{convtp-bsphimoinspipis}
    \tikzfigbox{\sizeRW}{rewriteminuspi-left}\ &\to\ \tikzfigbox{\sizeRW}{rewriteminuspi-right}
      \end{align}
    \end{minipage}
    \end{multicols}
     \vspace{-0.5cm}
       \begin{equation}\label{rewriteE3}\begin{array}{rcl}\tikzfigbox{0.8}{rewriteE3-left}&\to&\tikzfigbox{0.8}{LOPP-G-right}\end{array}\end{equation}
       
     \vspace{-0.4cm}
    
      \begin{equation}  \label{rewriteE2}\qquad\qquad\begin{array}{rcl}\tikzfig{rewriteE2-left}&\to&\tikzfig{E2right}\end{array}\end{equation}

      \begin{equation}\label{rewritezerof}\begin{array}{rcl}\tikzfig{rewritezerof-left}&\to&\tikzfig{rewritezerof-right}\end{array}\end{equation}
      \begin{equation}\label{rewritezerog}\begin{array}{rcl}\tikzfig{rewritezerog-left}&\to&\tikzfig{rewritezerog-right}\end{array}\end{equation}
      \begin{equation}\label{rewriteremoveg}\begin{array}{rcl}\tikzfig{rewriteremoveg-sum-left}&\!\!\!\!\to\!\!\!\!&\tikzfig{rewriteremoveg-sum-right}\end{array}\end{equation}\vspace{-0.2cm}
      \begin{equation}\label{rewritewire}\begin{array}{rcl}\tikzfig{rewritewire-left}&\to&\tikzfig{rewritewire-right}\end{array}\end{equation}
    
      \caption{Rewriting system of the \love-calculus, alongside with the oriented version, from the LHS to the RHS, of the axioms~(\ref{axiom:ss}), (\ref{axiom:s-b}), (\ref{axiom:s-p}), (\ref{axiom:dd}), (\ref{axiom:b-d}), and (\ref{axiom:p-d}). 
        $\psi \in \mathbb R\setminus [0,2\pi)$, $ \varphi, \varphi_1, \varphi_2\in (0,2\pi)$, $\theta_4\in (\frac{\pi}{2}, \pi)$, $\theta_5\in [\pi,2\pi)$, $\theta, \theta_1, \theta_2, \theta_3 \in (0,\pi)$. 
      $\tikzfigbox{0.8}{phasestar}$ denotes either $\tikzfigbox{0.8}{phase}$ or $\tikzfigbox{0.8}{id}$. The angles of the RHS of (\ref{rewriteE2}) and (\ref{rewriteE3}) are either given by \pref{subsec:Eangles} or by \cite{clement2022lov}. $\Niso_m:\N\rightarrow\N^m$ is a bijection arbitrary chosen to be $\Nisoinv_{m}\coloneqq\Nisoinv_{2}\circ\Nisoinv_{m-1}$ for $m>2$, where  $\Nisoinv_{2}(\ell,\ell')\coloneqq\frac{1}{2}(\ell+\ell)(\ell+\ell'+1)+\ell'$ is the Cantor pairing function and $\Niso_1$ is the identity. By convention, the summation index is $k\in\mathcal{K}$ for the sources and $\ell\in\mathcal{L}$ for detectors, aside from the rule~(\ref{rewriteremoveg}) where the sum is explicit for clarity.
    %  ,
    %  in the same way as in \cref{Eulerscalaires} and \eqref{Eulerbsphasebs} respectively, by taking the missing angles
    %  on the left-hand side
    %  to be $0$.
      \label{fig:rewritelofi}}
    \end{figure}

%, \ref{rewritebsmod2pi}, \ref{rewritefusionphaseshifts}, \ref{rewritezerophaseshifts}, \ref{rewritetopphase}, \ref{rewritepisur2}, \ref{rewritethetabs}, \ref{rewriteminuspi}, 

We can check that all the rules are sound, and have the following meaning:
\begin{itemize}\label{item:meaning}
    \item The rules~\ref{rewritephasemod2pi}-\ref{rewriteE2} are either the same or just slightly different from the rules described in~\cite{clement2022lov}. With those rules, the $\lopp^{PRO}$-circuits will converge to the triangular \T-circuits defined in~\pref{subsec:triangles}.
    \item The rule~\ref{rewritezerof} removes any vector $\ket{f_{k'}}\sotimes\ket{k'}$ in the sources that is trivially cancelling with the detector on the connected last wire, meaning that $\bra{g_{k'}}=0$, i.e. that $k'\notin\mathcal{L}$.
    \item The rule~\ref{rewritezerog} removes any $\bra{g_{\ell'}}\sotimes\bra{\ell'}$ in the detectors that is trivially cancelling with the source on the connected last wire, meaning that $\ket{f_{\ell'}}=0$, i.e. that $\ell'\notin\mathcal{K}$.
    \item The rule~\ref{rewriteremoveg} allows, without changing the semantics, to transfer the generic coefficients from the detectors to the sources. Specifically, any term of the form $\sum_{\ell}\xi_{\ell}\bra{\mathcal{N}_{\mtilde}(L)}\bra{\ell}$ will be rewritten to $\bra{\mathcal{N}_{\mtilde}(L)}\bra{L}$. The coefficients $\xi$ will be in the source, as $\ket{f_L}\ket{L}$ will be rewriten to $\left(\sum_{i\in\mathcal{K}}\xi_i \ket{f_i}\right)\ket{L}$. At the end and by repeating this rule, there won't be any \textit{degree of freedom} in the detectors, and $\vect{g}=\sum_{\ell\in\mathcal{L}}\bra{\mathcal{N}_{\mtilde}(\ell)}\bra{\ell}$. The condition $(\xi_L\neq 1)\lor(\exists\ell\neq L, \xi_{\ell}\neq 0)$ is there to ensure that the rule is only used once for each $L$, and only when it's necessary.
    \item The rule~\ref{rewritewire} uses the bijection $\mathcal{N}_2: \mathbb{N}\to \mathbb{N}^2$ to remove one identity wire, by just relabelling the indexes in the sources and detectors. Note that one identity wire will always remain at the end.
    \item The oriented rule (from left to right) coming from the axioms~(\ref{axiom:ss}) and~(\ref{axiom:dd}) merge all the sources and detectors together.
    \item The oriented rule (from left to right) coming from the axioms~(\ref{axiom:s-b}), (\ref{axiom:s-p}), (\ref{axiom:b-d}) and (\ref{axiom:p-d}) reduce the number of phase shifters and beam splitters as much as possible, by making them be \textit{absorbed} into the sources and detectors.
\end{itemize}

\begin{definition}[Inputs of the rewriting system] For convenience, the inputs of the rewriting procedures are \love-circuits with at least one identity wire connecting sources and generators, and where all the sources (resp.\ detectors) are on the bottom left (resp.\ right).\end{definition}
\begin{remark} Note that choice is not restrictive, as the identity wire can always be added with~\pref{axiom:s0-0d}, and the sources and detectors can be placed at the desired position, without changing the semantics, with the rules of PROP and by adding SWAPs.\end{remark}

\begin{lemma}\label{lem:derivrewrite}
If $C_1$ rewrites to $C_2$ using the rules of~\pref{fig:rewritelofi}, then $\love\vdash C_1=C_2$. 
\end{lemma}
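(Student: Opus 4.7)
The plan is to exploit the fact that $\love \vdash \cdot = \cdot$ is by definition a congruence closed under the axioms of PROPs, so it suffices to verify that $\love \vdash L = R$ for every rewriting rule $L \to R$ of \pref{fig:rewritelofi} in isolation; contextual closure then gives $\love \vdash C_1 = C_2$ whenever $C_1$ one-step rewrites to $C_2$. The proof is therefore a straightforward (if tedious) case analysis over the rewriting rules, which I would group into four families.

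First I would dispatch the rules (\ref{rewritephasemod2pi})--(\ref{rewriteE2}), which involve only the \lopp-fragment. Each of these is a sound equation between \lopp-circuits (this is exactly the statement recalled in the text after \pref{fig:rewritelofi}), so by \pref{thm:compLOpp} each is derivable from the axioms of \pref{fig:nLOpp}, all of which are included in \pref{fig:axioms}. Several of these rules literally coincide with axioms of \pref{fig:nLOpp}; the remaining ones (e.g.\ (\ref{rewritefusionphaseshifts}), (\ref{rewritezerophaseshifts}), (\ref{rewritezerobs}), (\ref{rewritetopphase}), (\ref{rewritepisur2}), (\ref{rewritethetabs}), (\ref{rewriteminuspi})) are derived from (\ref{axiom:p2pi}), (\ref{axiom:swap}), (\ref{axiom:E2}), (\ref{axiom:E3}) via the standard Euler-angle manipulations already performed in \pref{app:proofcompLOpp} to re-derive the old axiomatisation of \pref{fig:oLOpp}.

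Second, the oriented versions of (\ref{axiom:ss}), (\ref{axiom:dd}), (\ref{axiom:s-b}), (\ref{axiom:s-p}), (\ref{axiom:b-d}), (\ref{axiom:p-d}) are immediate, as they are literal instances of those axioms read left-to-right. Third, for the zero-removal rules (\ref{rewritezerof}) and (\ref{rewritezerog}), the idea is that if $k' \notin \mathcal{L}$ then the summand $\vect{f}_{k'} \otimes \ket{k'}$ of the source is orthogonal to every $\bra{\vect{g}_\ell}\bra{\ell}$ with $\ell \in \mathcal{L}$ in the connected detector: using (\ref{axiom:ss}) to split it off from the rest of the source, then (\ref{axiom:s-b}), (\ref{axiom:s-p}) and the dual rules to push it down to the scalar $\bra{\ell}\ket{k'} = 0$ on the relevant wire produces a $\tikzfigbox{1.0}{s01d}$-like scalar $0$, which by (\ref{axiom:zero}) and (\ref{axiom:s0-0d}) annihilates the whole summand; one then reassembles using (\ref{axiom:ss}) with the remaining source terms. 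The dual derivation handles (\ref{rewritezerog}).

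The last and delicate family consists of (\ref{rewriteremoveg}) and (\ref{rewritewire}). For (\ref{rewriteremoveg}), the map $\vect{k} \mapsto \sum_i \xi_i \vect{f}_i \otimes \ket{\mathcal{N}_{\widetilde{m}}(L)}\otimes\ket{L}$ produced by moving the coefficients $\xi_\ell$ from the detector into the source is a linear function on the connecting wires, so (\ref{axiom:h2}) lets us slide it across them, and then (\ref{axiom:ss}) / (\ref{axiom:dd}) let us reabsorb it into the source and detector respectively. For (\ref{rewritewire}), the rewriting uses the bijection $\mathcal{N}_2$ to encode two wires into one, which is again a linear change of basis on the Fock states connected by the two wires being merged; that change of basis is exactly the sort of ``function between wires'' that (\ref{axiom:h2}) permits us to move freely, and combined with (\ref{axiom:ss}) and (\ref{axiom:dd}) it produces the desired identity. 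I expect these last two cases to be the main obstacle, because one must be careful to unfold the summation conventions of \pref{def:conventions} explicitly and check that the reindexed source/detector data on the right-hand side matches the transported one on the left; but in each case (\ref{axiom:h2}) provides enough expressive power to reduce the verification to a purely algebraic identity on $\Fv{\cdot}$ and $\Fvdual{\cdot}$.
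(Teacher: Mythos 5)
Your overall strategy---check each rule of \pref{fig:rewritelofi} in isolation and conclude by congruence closure---is the same as the paper's, and your treatment of the oriented axioms and of the rules (\ref{rewriteremoveg}) and (\ref{rewritewire}) matches the actual derivations, which do rest on sliding a suitably chosen linear map across the connecting identity wire with (\ref{axiom:h2}) and then reabsorbing it. However, two of your steps are not justified as written. First, your claim that all axioms of \pref{fig:nLOpp} ``are included in \pref{fig:axioms}'' is false: the phase-fusion rule (\ref{nLOpp:pM}) is deliberately absent from \pref{fig:axioms}, so reducing the \lopp-fragment rewrite rules to \pref{thm:compLOpp} leaves a hole. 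The paper closes it with \pref{lem:derivLOppwithLOfi} (\pref{app:loficompletelopp}), which derives (\ref{nLOpp:pM}) inside \love~from (\ref{axiom:s0-0d}), (\ref{axiom:p2pi}), (\ref{axiom:E2}) and the PROP-derivable (\ref{nLOpp:b0}); without some such derivation your first family does not go through.

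Second, and more seriously, your mechanism for (\ref{rewritezerof}) and (\ref{rewritezerog}) would fail. You propose ``using (\ref{axiom:ss}) to split [the summand $\vect{f}_{k'}\otimes\ket{k'}$] off from the rest of the source'', but (\ref{axiom:ss}) splits a source into a \emph{tensor product} of two sources; it cannot isolate a single \emph{summand} of the superposition $\sum_{k\in\K}\vect{f}_k\otimes\ket{k}$, and the \love-calculus has no sums of diagrams in which to carry the remaining terms while you annihilate one of them. The paper's derivation of the corresponding equations (\ref{eq:+F}) and (\ref{eq:+G}) avoids this entirely: it introduces one self-adjoint linear map $h:\Fv{2}\rightarrow\Fv{2}$ that swaps $\ket{\tilde{\ell},0}$ and $\ket{\tilde{\ell},1}$ and is the identity elsewhere, slides it across the auxiliary identity wire with (\ref{axiom:h2}), and lets (\ref{axiom:s0-d}) kill the unwanted component of the \emph{whole} source in one step. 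Some such global manipulation is needed; the summand-by-summand decomposition you describe is not expressible in the calculus.
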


\begin{proof} \label{proof:univNF}
The rules are derived in \pref{app:derivrewrite}.
\end{proof}

\subsection{Normal forms of the \lovetitle-calculus}
\label{subsec:convergetoNF}

Formally with the rules of~\pref{fig:rewritelofi} and informally with their meaning described in~\pref{subsec:rewriting}, we can show that an irreducible form is a \love-circuit defined as follows:

\newcommand{\NFdef}{\ensuremath{\left(\tikzfigbox{\bxs}{id}^{\otimes m}\otimes\left(\tikzfig{dgboldonemtilde}\right)\right)\circ (\Ttilde\otimes \tikzfigbox{\bxs}{id}) \circ \left(\tikzfigbox{\bxs}{id}^{\otimes n}\otimes\left(\tikzfig{sfboldonentilde}\right)\right)}}

\begin{definition}[Normal form]\label{def:NF}
    The normal forms of any nonzero \love-circuits are denoted $N(T,\vect{f}): n\rightarrow m$ and are of the form:

    \[\tikzfig{NFexpl-T} \]
where: 

\begin{itemize}
    \item $\vect{f}$ is a nonzero generic state of $\Fv{\ntilde+1}$. %,
    \vspace{0.2cm}
    \item $\vect{g}=\sum_{\ell \in \K} \bra{\Niso_m(\ell)}\otimes \bra{\ell}$, where $\Niso_m:\N\rightarrow\N^m$ is a bijection defined in~\pref{fig:rewritelofi} and $\K$ is the nonempty finite set $\left\{k\in\N\mid \vect{f}_k\neq 0\right\}$ of $\vect{f}=\sum_{k\in\K}\vect{f}_k\otimes\ket{k}$, with the convention $\K=\{0\}$ if $\ntilde=0$ or $\mtilde=0$.
    \vspace{0.2cm}
    \item $T: n+\ntilde \rightarrow m+\mtilde$ is a $\Tmn$-circuit as defined in \pref{def:subtriangular}.
    
\end{itemize}
   
\end{definition}
    
%, with $0\leq \ntilde \leq m$, with $0 \leq \mtilde \leq n$, 
\begin{remark}
    If $\ntilde=\mtilde=0$, then the normal form is a normal form of \lopp~(can be $\tikzfig{sempty}$ for $n=m=0$) tensored with the scalar \tikzfigbox{\bx}{scalarNF} which has the semantics of a global scalar $\alpha\in\C$.
\end{remark}

\begin{remark}\label{rem:NFzero}
    We could also consider the particular case of $\vect{f}=0$, i.e.\ $\K=\emptyset$, where $\interps{N}:\Fbos{n}\rightarrow \Fbos{m}$ is the null function. We show in \pref{app:NFzeroproof} that $N:n\rightarrow m$ can be written to $(\tikzfigbox{0.8}{s0})^{\otimes m}\circ(\tikzfigbox{0.8}{d0})^{\otimes n}$, which is a more fitted form for representing the null function.
\end{remark}

\begin{lemma}[Strongly normalising]\label{lem:strongly}
    The rewriting system of~\pref{fig:rewritelofi} is strongly normalising.
\end{lemma}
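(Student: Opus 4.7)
The plan is to exhibit a well-founded measure on \love-circuits that strictly decreases with every rewriting step, and then appeal to the standard fact that the lexicographic order on tuples of natural numbers is well-founded.

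I would split the rules of~\pref{fig:rewritelofi} into two groups: the rules (\ref{rewritephasemod2pi})--(\ref{rewriteE2}), which correspond to (essentially) the strongly-normalising rewriting system of~\cite{clement2022lov} for \lopp-circuits, and the ``interface'' rules (\ref{rewritezerof})--(\ref{rewritewire}) together with the six oriented axiom-rules coming from~(\ref{axiom:ss}), (\ref{axiom:dd}), (\ref{axiom:s-b}), (\ref{axiom:s-p}), (\ref{axiom:b-d}) and~(\ref{axiom:p-d}), which govern the sources and detectors. The key observation is that the rules of the first group act strictly inside the \lopp-subcircuit: they create no new source or detector, introduce no new adjacency between a gate and a source/detector, and remove no identity wire between them. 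The termination measure $\mu_{\mathrm{LOpp}}$ of~\cite{clement2022lov} can therefore be placed as the very last component of our lexicographic tuple, and handles this group for free.

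For the interface rules I would propose
\[\mu(C) = \bigl(n_{\mathrm{src}}+n_{\mathrm{det}},\ n^{\mathrm{adj}}_{BS,PS},\ \sigma_{\mathrm{det}},\ \sigma_{\mathrm{src}},\ w,\ \mu_{\mathrm{LOpp}}(C)\bigr),\]
where $n_{\mathrm{src}}+n_{\mathrm{det}}$ counts source and detector boxes, $n^{\mathrm{adj}}_{BS,PS}$ counts beam splitters and phase shifters directly adjacent to some source or detector, $\sigma_{\mathrm{det}}$ counts the ``non-canonical'' detector summands targeted by~(\ref{rewriteremoveg}) (i.e.\ those for which $\xi_L\neq[L{=}L_0]$), $\sigma_{\mathrm{src}}=\sum_{\mathrm{src}}|\mathcal K|$ is the total number of nonzero summands in all sources, and $w$ is the number of identity wires whose two endpoints are a source and a detector. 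I would then check, case by case, that: the merging rules from~(\ref{axiom:ss}) and~(\ref{axiom:dd}) decrease the first component; the absorption rules from~(\ref{axiom:s-b}), (\ref{axiom:s-p}), (\ref{axiom:b-d}), (\ref{axiom:p-d}) preserve it and decrease the second; rule~(\ref{rewriteremoveg}) preserves the first two and decreases $\sigma_{\mathrm{det}}$; rules~(\ref{rewritezerof}) and~(\ref{rewritezerog}) preserve the first three and decrease $\sigma_{\mathrm{src}}$; rule~(\ref{rewritewire}) decreases $w$; and every \lopp-internal rule leaves all earlier components unchanged while decreasing $\mu_{\mathrm{LOpp}}$.

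The main obstacle is rule~(\ref{rewriteremoveg}), since superficially it merely shuffles coefficients from detectors to sources and can even increase the number of nonzero summands on the source side. The side-condition $(\xi_L\neq 1)\lor(\exists\ell\neq L,\ \xi_\ell\neq 0)$ is precisely what is needed: it guarantees that at least one non-canonical summand of the detector becomes canonical, so $\sigma_{\mathrm{det}}$ strictly drops; and since $\sigma_{\mathrm{src}}$ sits \emph{later} in the lexicographic order, any growth of source summands is harmless. The remaining verifications -- that no interface rule inadvertently creates new sources/detectors, new BS/PS adjacencies, or new source-to-detector wires -- are local and routine inspections of each rule. Once done, well-foundedness of the lexicographic order on the tuple yields strong normalisation.
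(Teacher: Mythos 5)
Your overall strategy --- a lexicographic tuple of natural-number measures with the \lopp-termination measure of~\cite{clement2022lov} as one component --- is the same as the paper's, but two of your components do not decrease where you claim they do, so the tuple as stated is not a termination measure. The component $n^{\mathrm{adj}}_{BS,PS}$ fails on the absorption rules coming from~(\ref{axiom:s-b}), (\ref{axiom:s-p}), (\ref{axiom:b-d}) and~(\ref{axiom:p-d}): absorbing a beam splitter adjacent to a source removes that one adjacency, but every gate that was immediately downstream of the absorbed beam splitter becomes adjacent to the source, and a beam splitter has two output wires, so the count can stay equal or even increase (source--BS--BS gives one adjacent gate before and one after). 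The same objection hits your claim that the \lopp-internal rules ``introduce no new adjacency'': (\ref{rewriteE2}) and~(\ref{rewriteE3}) reshuffle gates and can change which of them touch a source or detector, and since $n^{\mathrm{adj}}_{BS,PS}$ sits \emph{before} $\mu_{\mathrm{LOpp}}$ in your tuple, this breaks the lexicographic argument. The fix, which is what the paper does, is to charge the absorption rules to weighted counts of \emph{all} beam splitters and phase shifters (weighted by wire position and by depth so that (\ref{rewriteE2}) and~(\ref{rewriteE3}) also decrease them): absorption deletes a gate outright, so such global counts genuinely drop.

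Second, rule~(\ref{rewritezerog}) removes a summand from a detector and touches nothing else, so it cannot decrease $\sigma_{\mathrm{src}}$ as you claim; and if the removed summand happens to be a canonical one, it does not decrease $\sigma_{\mathrm{det}}$ (your count of non-canonical summands) either, so no component of your tuple moves. The paper's combined quantity $\sum C_1(\vect{f})+\sum\left(2C_2(\vect{g})-C_3(\vect{g})\right)$ is built exactly for this: deleting a canonical detector summand lowers $2C_2$ by two while raising $-C_3$ by only one, for a net strict decrease. Your treatment of~(\ref{rewriteremoveg}) by placing $\sigma_{\mathrm{det}}$ before $\sigma_{\mathrm{src}}$ in the lexicographic order is in the right spirit, and arguably cleaner than the paper's bookkeeping, but it only works if $\sigma_{\mathrm{det}}$ is defined so that both cases of the side-condition of~(\ref{rewriteremoveg}) \emph{and} rule~(\ref{rewritezerog}) strictly decrease it; as written, it does not.
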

\begin{proof}
We introduce a ranking function $(x_1,\dots,x_6)\in\mathbb{N}^6$, where each component of the tuple is determined by properties of the circuit, like the number of beam splitter with angles out of $[0,2\pi)$, the number of sources and detectors, or the number of identity wires connecting them. One nontrivial component is $x_6$, that we explicit here.
% Since $\mathbb{N}^6$ is well-ordered with respect to the lexicographic order, to prove that the rewriting system is
% strongly normalizing, it suffices to prove that each of the rewriting rules strictly decreases the tuple
% $(x_1,x_2,\dots,x_6)$ with respect to this order. 

Let note the generic terms in the sources as $\vect{f}=\sum\alpha_{k_1,\dots,k_{\ntilde+1}}\ket{k_1,\dots,k_{\ntilde+1}}$ and in the detectors as $\vect{g}=\sum\beta_{\ell_1,\dots,\ell_{\mtilde+1}}\ket{\ell_1,\dots,\ell_{\mtilde+1}}$. We define:
 \[x_6\coloneq\sum\limits_{\vect{f}\in\text{sources}}C_1(\vect{f})+\sum\limits_{\vect{g}\in\text{detectors}}\left(2C_2(\vect{g})-C_3(\vect{g})\right)\]
 
 with $C_1(\vect{f})\coloneq\#\{\alpha_{k_1,\dots,k_{\ntilde+1}}\neq 0\}$, $C_2(\vect{g})\coloneq\#\{\beta_{\ell_1,\dots,\ell_{\mtilde+1}}\neq 0\}$, and \\ $C_3(\vect{g})\coloneq\#\{\beta_{\Niso_{\mtilde}(L),L}= 1, L\in \mathbb{N}\}$. The proof to show that the rule~(\ref{rewriteremoveg}) strictly decreases $x_6$ is the following. Let us consider the two cases: $(\xi_L\neq 1) \wedge (\forall \ell\neq L, \xi_{\ell}=0)$ and~$(\xi_L= 1) \wedge (\exists \ell\neq L, \xi_{\ell}=0)$. The first case doesn't change $C_1$ and $C_2$, but the term $-C_3$ strictly decreases by $1$. The second case doesn't change $C_3$, and the increase of $C_1$, i.e.\ the amount of new terms in $\vect{f}$, is bounded by $\#\{\xi_i\neq 0,i\neq L\}$, the number of terms removed in $\vect{g}$, which is the exact decrease of $C_2$. Therefore, $C_1+2C_2$ decreases by at least $\#\{\xi_i\neq 0,i\neq L\}>0$. We can conclude that the rule~(\ref{rewriteremoveg}) strictly decreases $x_6$.
The rest of the proof is given in \pref{app:strongly}.\end{proof}

Now that the normal forms are well-defined, it remains to prove their uniqueness, which is the purpose of the~\pref{subsec:uniqueNF}.
%The purpose of the following subsection is to prove the uniqueness of those normal forms by expressing them with other morphisms. While all the proofs should go through in $\Fbos{m}$, we prefer for simplicity to restrict $\Fbos{m}$ to $\Fv{m}$. In particular, we introduce unbounded operators in \pref{subsec:decompDelta}, so we are assured now that every morphism is well-defined. As the sums will always be finite, the sets of the summation indexes are sometimes omitted when there is no ambiguity. Now, we aim to prove the following stronger version of uniqueness.

\begin{lemma}[Uniqueness of the Normal Forms]\label{lem:uniqN}
    If two \love-circuits $N$ and $N'$ in normal forms are such that $\interps{N}=\interps{N'}$, then $N=N'$.
    \end{lemma}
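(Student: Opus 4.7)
The plan is to reconstruct the pair $(T,\vect{f})$ (and hence $\ntilde$, $\mtilde$, and $\vect{g}$, since $\vect{g}$ is determined by the support $\K$ of $\vect{f}$) from $\interps{N}$, making essential use of \pref{prop:uniqSubT} together with the $\Tmnrec$-machinery of \pref{def:Trec}.

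First I would expand, for Fock-basis input $\ket{\vec{x}}\in\Fbos{n}$ and output $\bra{\vec{y}}\in\Fbos{m}$, the matrix element
\[
  \bra{\vec{y}}\,\interps{N(T,\vect{f})}\,\ket{\vec{x}}
  \;=\; \sum_{k\in\K}\bigl(\bra{\vec{y}}\otimes\bra{\Niso_{\mtilde}(k)}\bigr)\,\interps{T}\,\bigl(\ket{\vec{x}}\otimes\ket{\vect{f}_k}\bigr),
\]
the single summation coming from the identity wire of the normal form collapsing $\ell$ to $k$ via $\langle k|\ell\rangle$ between source and detector. From this expansion I would read off the auxiliary sizes $\ntilde$ and $\mtilde$ and the support $\K$: property~\ref{propTt:noid} of \pref{def:subtriangular} rules out ``dummy'' auxiliary rows, and bijectivity of $\Niso_{\mtilde}$ makes the detection configurations $\{\Niso_{\mtilde}(k)\}_{k\in\K}$ pairwise distinct Fock states, which can be identified by varying the total photon number of the inputs and monitoring the possible output photon numbers.

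The heart of the proof is uniqueness of $T$. I would lift the argument into the $\Tmnrec$-framework of \pref{def:Trec}, where the triangle is extended to $\mtilde=n$ so that all $\ntilde\cdot n$ bottom beam-splitter angles are strictly nonzero. This nonvanishing property prevents cancellation when probing single-photon matrix elements, and by \pref{rem:coefpath} the coefficients $t_{i,j}$ of $\interpsone{T}$ can then be peeled off diagonal by diagonal. Using inputs and outputs that isolate a single $k\in\K$, I would recover the $m\times n$ submatrix of $\interpsone{T}$, and \pref{prop:uniqSubT} then yields $T=T'$.

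Finally, with $T$ fixed, the map $\vect{f}\mapsto\interps{N(T,\vect{f})}$ is linear, and I would show injectivity using unitarity of $\interps{T}$ together with orthogonality of the detection states $\ket{\Niso_{\mtilde}(k)}\otimes\ket{k}$ for distinct $k$, forcing $\vect{f}=\vect{f}'$, whence $\K=\K'$, $\vect{g}=\vect{g}'$, and $N=N'$. The main obstacle will be the third step: extracting single-photon information about $T$ from the many-photon semantics of $N$, since contributions from different source terms $\ket{\vect{f}_k}$ and from the full Fock-space formula $\Btheta$ for beam splitters superpose in every matrix element. My planned way around this is the $\Tmnrec$-embedding together with an inductive descent along the diagonals of $T$, exploiting the nonvanishing of the auxiliary angles as the mechanism that makes each single-photon coefficient independently accessible from the observable matrix elements of $\interps{N}$.
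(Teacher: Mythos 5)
Your two-stage architecture (first recover $T$, then recover $\vect{f}$) matches the paper's, but each stage has a genuine gap precisely where the paper has to work hardest. For the recovery of $T$: you propose to ``isolate a single $k\in\K$'' and thereby read off the single-photon submatrix of $\interpsone{T}$, but no mechanism for this isolation exists. Every observable matrix element $\bra{\vect{y}}\interps{N}\ket{\vect{x}}$ is a sum over all $k\in\K$ of \emph{many}-photon amplitudes of $T$ in which the auxiliary modes are contracted with $\ket{\vect{f}_k}$ and $\bra{\Niso_{\mtilde}(k)}$; the single-photon data of $T$ is never directly accessible, so \pref{rem:coefpath} and \pref{prop:uniqSubT} cannot be applied until that access is established. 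The paper's substitute is an asymptotic argument (\pref{prop:bspolynom}, \pref{app:uniqT}): matrix elements behave as $e^{iN\varphi}a^{N}\sqrt{N!/n!}\,Q^{\ell}_{k}(n)$ with $Q^{\ell}_{k}$ a polynomial, so comparing growth rates of $\bra{n+k}W\ket{n}$ as $n\to\infty$ forces the beam-splitter modulus and then the phase to agree, after which an induction on $\min(n,m)$ finishes. You acknowledge this as the main obstacle, but your ``inductive descent along the diagonals'' is a restatement of the problem rather than a solution.

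For the recovery of $\vect{f}$: injectivity of $\vect{f}\mapsto\interps{N(T,\vect{f})}$ does \emph{not} follow from unitarity of $\interps{T}$ together with orthogonality of the detection states. What is actually needed is linear independence of the family $\Omega^{\vect{i},\vect{j}}(T)$, and slices of a unitary of this form are in general linearly dependent: for the identity on two wires one gets $\Omega^{i,j}=\delta_{ij}\,\mathrm{id}$, so all $\Omega^{i,i}$ coincide even though the circuit is unitary and the detection states are orthogonal. Unitarity therefore cannot be the operative reason; what saves the day are the structural conditions of \pref{def:subtriangular} (in particular Property~4, which excludes the identity-like cases), the reduction to $\Tmnrec$-circuits with all auxiliary beam-splitter angles nonzero (\pref{prop:uniqOmegaDiamond}), and the creation-operator machinery: the triangular change of basis from the $\Omega$'s to the $\Delta$'s obtained by commuting $\adag{}$ through the circuit (\pref{lem:Deltadec}) and the reverse-lexicographic threshold property (\pref{prop:deltathreshold}) that makes the $\Delta$'s independent. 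Your proposal names the $\Tmnrec$-embedding but deploys it in the wrong step and omits the $\Delta$-decomposition that actually carries the proof.
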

    % \begin{lemma}[Uniqueness of the Normal Forms]\label{lem:uniqN}
    %     If two \love-circuits $N$ and $N'$ in normal forms are such that $\interpspre{N}=\interpspre{N'}$, then $N=N'$, where  $\interpspre{\cdot}\coloneqq\interps{\cdot}\vert_{pre}$ denotes the restriction of $\interps{\cdot}$ to $\Fv{}$.
    %     \end{lemma}

% \begin{remark}
%     We can check it indeed implies the uniqueness on $\Fbos{m}$: if $\interps{N}=\interps{N'}$, then necessarily $\interpspre{N}=\interpspre{N'}$ and $N=N'$ by \pref{lem:uniqN}.
% \end{remark}

\subsection{The normal forms are unique: sketch of the proof}
\label{subsec:uniqueNF}

Let $N(T,\vect{f})$ be a normal form. In order to prove the uniqueness of $T$ and $\vect{f}$, we proceed with the following steps.
\begin{enumerate}
    \item We first show that $T$ is unique.
    \item We introduce a set of operators $\Omega^{\vect{i},\vect{j}}(T)$, such that $\interps{N}=\sum_{\vect{i},\vect{j}}\omega_{\vect{i},\vect{j}}\Omega^{\vect{i},\vect{j}}(T)$. We show the $\omega_{\vect{i},\vect{j}}$ to be canonically and uniquely associated with the coefficients of $\vect{f}$.
    \item We introduce a set of operators $\Delta^{\vect{u},\vect{v}}(T)$, that have very convenient properties and that we show to be linearly independent. 
    \item We give an isomorphism between the $\Omega$ and $\Delta$ operators, therefore proving the linear independence of the $\Omega^{\vect{i},\vect{j}}(D)$, and proving the uniqueness of the coefficients of $\vect{f}$.
\end{enumerate}

\begin{lemma}[Uniqueness of $T$]\label{lem:uniqT} For any two normal forms $N(T,\vect{f})$ and $N'(T',\vect{f'})$, if $\interps{N}=\interps{N'}$ then $T=T'$. 
\end{lemma}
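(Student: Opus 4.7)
My plan is to reduce the statement to Proposition~\ref{prop:uniqSubT} (uniqueness of $\Tmn$-circuits on their $m\times n$ submatrix), which asserts that a $\Tmn$-circuit is determined by the $m\times n$ upper-left submatrix of its single-photon semantics. Writing $U \coloneqq \interpsone{T}$ and $U' \coloneqq \interpsone{T'}$, it suffices to show that $\submatrix{U}{m}{n}$ is determined by $\interps{N}$ (after a preliminary argument that $\ntilde=\ntilde'$ and $\mtilde=\mtilde'$, which follows from the normal form shape constraints); then $\interps{N}=\interps{N'}$ forces $\submatrix{U}{m}{n}=\submatrix{U'}{m}{n}$ and hence $T=T'$.

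The first step is to unfold the normal form. The identity wire joining $\vect{f}$ to $\vect{g}$ enforces $\bra{\ell}\ket{k} = \delta_{k,\ell}$, giving
\[
  \interps{N} \;=\; \sum_{k \in \K} \bigl(\operatorname{id}^{\otimes m} \otimes \bra{\Niso_\mtilde(k)}\bigr) \circ \interps{T} \circ \bigl(\operatorname{id}^{\otimes n} \otimes \vect{f}_k\bigr),
\]
a decomposition into branches indexed by $k \in \K$. Probing with single-photon inputs $\ket{e_j}$ for $j\in[1,n]$ and reading off the coefficient on $\ket{e_i}$ for $i\in[1,m]$, photon-number conservation restricts the contributing Fock components $\vec{\alpha}$ of each $\vect{f}_k$ to those with $|\vec{\alpha}|=|\Niso_\mtilde(k)|$. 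The standard permanent formula for the many-photon action of a linear optical unitary then expands each contributing amplitude as $U_{ij}\cdot\operatorname{perm}(U_{\mathrm{aux}}) + (\mathrm{cross\ terms})$, where $U_{\mathrm{aux}}$ is an auxiliary-only sub-permanent and the cross terms route the probe photon through an auxiliary mode, mixing $U_{ij}$ with entries of $U$ outside $\submatrix{U}{m}{n}$.

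The main obstacle is isolating $U_{ij}$ from the cross terms. My plan is to exploit that $\vect{f}\in\Fv{\ntilde+1}$ is a \emph{finite} linear combination of Fock states, so there is a well-defined minimum-photon-count Fock component. Restricting $\interps{N}$ to input-output pairs that realise this minimum photon-count shift selects branches in which the cross terms, which would require strictly larger auxiliary detection patterns, drop out; the amplitude then factorises cleanly as $U_{ij}$ times a nonzero branch-dependent scalar. This scalar is fixed by a normalising probe such as $\bra{0^m}\interps{N}\ket{0^n}$ (or an analogous higher-photon amplitude when that vanishes). Applying the same construction to $\interps{N'}$ and using $\interps{N}=\interps{N'}$, we obtain $\submatrix{U}{m}{n}=\submatrix{U'}{m}{n}$, and Proposition~\ref{prop:uniqSubT} concludes $T=T'$.
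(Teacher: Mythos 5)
Your endgame --- determine the $m\times n$ submatrix of $\interpsone{T}$ from $\interps{N}$ and invoke Proposition~\ref{prop:uniqSubT} --- is also how the paper concludes, but the step by which you propose to obtain that submatrix fails: the cross terms do not drop out, because they are not an artefact of mixing between different branches $k\in\K$; they are already present \emph{within a single branch}. Concretely, take $n=m=\ntilde=\mtilde=1$, let $T$ be a beam splitter of angle $\theta$ followed by a phase $\varphi$ on the first wire, so $U=\interpsone{T}=\begin{pmatrix} e^{i\varphi}\cos\theta & ie^{i\varphi}\sin\theta\\ i\sin\theta & \cos\theta\end{pmatrix}$, and let $\vect{f}=\ket{1}\otimes\ket{1}$, so $\K=\{1\}$ and $\vect{f}$ has a \emph{single} Fock component --- it is its own minimum-photon-count component, and every input--output pair with shift $0$ sees only this one branch. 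Then $\bra{0}\interps{N}\ket{0}=\bra{0,1}\interps{T}\ket{0,1}=\cos\theta$, while $\bra{1}\interps{N}\ket{1}=\bra{1,1}\interps{T}\ket{1,1}=e^{i\varphi}(\cos^2\theta-\sin^2\theta)$, which is \emph{not} $U_{11}\cdot\bra{0}\interps{N}\ket{0}=e^{i\varphi}\cos^2\theta$; the discrepancy $U_{12}U_{21}=-e^{i\varphi}\sin^2\theta$ is exactly the cross term in which the probe photon exits through the auxiliary detector while the auxiliary photon exits through the main output. No choice of input--output pair removes it, since it lives inside the one and only branch. So no single amplitude of $\interps{N}$ is proportional to $U_{ij}$, and the proposed clean factorisation is unavailable.

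This is precisely why the paper does not probe with a bounded number of photons but instead analyses the asymptotics of whole families of amplitudes: by Proposition~\ref{prop:bspolynom}, $\bra{N,\ell}\interps{T}\ket{n,k}=e^{iN\varphi}(\cos\theta)^{N}\sqrt{N!/n!}\,Q^{\ell}_{k}(n)$ with $Q^{\ell}_{k}$ a polynomial of degree $\ell$, so letting the probe photon number $n\to\infty$ separates the unknown $(\theta,\varphi)$ (which control the exponential growth rate and the phase drift) from the unknown, finitely supported coefficients of $\vect{f}$ (which only contribute a fixed-degree polynomial prefactor). This yields the base case, and an induction on $\min(n,m)$ --- plugging fixed nonzero states into some wires to reduce to smaller triangles --- recovers the full $m\times n$ submatrix before Proposition~\ref{prop:uniqSubT} is applied. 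In the example above, $(\theta,\varphi)$ is pinned down only by exploiting the infinite diagonal sequence $\bra{x}\interps{N}\ket{x}$, not by a single probe plus a normalisation; any repair of your argument would have to incorporate an analogous limiting or infinite-family step.
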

\begin{proof}For any nonzero $W=\interps{\tikzfigbox{0.9}{basecaseleft}}$ and $W'=\interps{\tikzfigbox{0.9}{basecaseright}}$, we first show that: 
\[ (\theta,\varphi)\neq (\theta',\varphi')\Rightarrow \exists k\in\mathbb{N},\lim_{n\to\infty}\frac{\bra{n+k}W\ket{n}}{\bra{n+k}W'\ket{n}}\neq 1.\]
As $W=W'$, the limit of the ratio is necessarily equal to $1$; the parameters can't be different and are therefore equal. The proof relies purely on the semantics defined in \pref{def:semmany}. We then prove the uniqueness of $D$ by induction on the $\min{(\text{number of inputs},\text{number of outputs})}$, in~\pref{app:uniqT}. \end{proof}

\begin{definition}[$\Omega$ and $\Delta$ morphisms]\label{def:OmegaDelta}
    For any \lopp-circuit $D$, $(\vect{i},\vect{j},\vect{u},\vect{v})\in (\N^{\ntilde},\N^{\mtilde},\N^{\mtilde},\N^m)$ we define $\Omegav{i}{j}(D):n\rightarrow m$ and $\Deltav{u}{v}(D)$ as:
\[\begin{array}{lr} \Omegav{i}{j}(D)\coloneqq \interpspre{\tikzfig{Omegafigure}} & \Deltav{u}{v}(D)\coloneqq \interpspre{\tikzfig{Deltafigure}}\end{array}\]
where $\adag{}:\ket{k}\mapsto\sqrt{k+1}\ket{k+1}$ is the creation operator and $\interpspre{\cdot}\coloneqq\interps{\cdot}\vert_{pre}$ is the restriction of $\interps{\cdot}$ to $\Fv{}$.\end{definition} 

% \begin{remark}
%     It is standard in the second quantification formalism of quantum physics to look at the evolution of the operators acting on the system rather than directly the evolution of the state. This formalism has shown to be simpler to work with as given a linear optical circuit $D$, while the state evolutions are given by $\interps{D}$, the evolution of the creation operators are linear, hence \textit{linear optics}, and directly given by $\interpsone{D}$~\cite{kok2007review,heurtel2023slos}. Examples of the evolution of \adag{} through a phase shifter, a beam splitter and a generic circuit are described in \todo{}. \pref{prop:commutelambda}.
% \end{remark}
\begin{remark}
    The utility and the links between $\adag{}$ and $\interpspre{.}$ are described in~\pref{app:decompDelta}.
\end{remark}
\begin{remark}
    All the proofs regarding the $\Omega$ and $\Delta$ morphisms only consider the semantics on $\interpspre{.}$. That ensures the soundness of the proofs involving the unbounded operator $\adag{}$, as now all sums will be finite.
\end{remark}

We give here two propositions that are the core of the proofs.
\begin{proposition}[Unique $\Omega$-decomposition of the normal forms]\label{prop:uniqOmegaNF} For any \Tmn-circuit $T:n+\ntilde\rightarrow m+\mtilde$ and finite set $\left\{\omegav{i}{j}, (\vect{i},\vect{j})\in(\N^{\ntilde},\N^{\mtilde})\right\}$, there exists an unique normal form $N(T,\vect{f}):n\rightarrow m$, such that $\interpspre{N}=\sum_{\vect{i},\vect{j}\in(\setindex{I},\setindex{J})}\omegav{i}{j} \Omegav{i}{j}(T)$.
    \end{proposition}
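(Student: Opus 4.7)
The plan is to unfold the semantics of the normal form and exhibit the $\Omega$-decomposition as a canonical reindexing of the coefficients of $\vect{f}$, via the bijection $\Niso_{\mtilde}:\N\to\N^{\mtilde}$ that is built into the definition of $\vect{g}$ in~\pref{def:NF}.

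Concretely, I would write $\vect{f}=\sum_{k\in\K}\vect{f}_k\otimes\ket{k}$ with $\vect{f}_k=\sum_{\vect{i}\in\setindex{I}}f_{\vect{i},k}\ket{\vect{i}}\in\Fv{\ntilde}$, and $\vect{g}=\sum_{\ell\in\K}\bra{\Niso_{\mtilde}(\ell)}\otimes\bra{\ell}$. Substituting into the diagram of~\pref{def:NF} and using bilinearity of composition and tensor on $\Fv{\cdot}$, the extra identity wire at the bottom of the normal form carries the source index ``$k$'' directly into the $\bra{\ell}$ factor of $\vect{g}$, contributing the orthogonality $\langle \ell | k \rangle=\delta_{\ell,k}$ and collapsing the double sum over $(k,\ell)$ into a single sum. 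Reading off~\pref{def:OmegaDelta} term by term, each summand becomes a scaled $\Omega^{\vect{i},\Niso_{\mtilde}(k)}(T)$, giving
\[\interpspre{N(T,\vect{f})}=\sum_{\vect{i}\in\setindex{I},\,k\in\K}f_{\vect{i},k}\,\Omega^{\vect{i},\Niso_{\mtilde}(k)}(T).\]

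Since $\Niso_{\mtilde}$ is a bijection, the substitution $\vect{j}=\Niso_{\mtilde}(k)$ together with the definition $\omega_{\vect{i},\vect{j}}\coloneqq f_{\vect{i},\Niso_{\mtilde}^{-1}(\vect{j})}$ rewrites the above sum as the required $\sum_{\vect{i},\vect{j}}\omega_{\vect{i},\vect{j}}\,\Omega^{\vect{i},\vect{j}}(T)$, and simultaneously exhibits an explicit one-to-one correspondence between finitely-supported families $(f_{\vect{i},k})$ and $(\omega_{\vect{i},\vect{j}})$. Existence is then immediate: given any finitely-supported $\omega$, set $\vect{f}\coloneqq\sum_{\vect{i},\vect{j}}\omega_{\vect{i},\vect{j}}\,\ket{\vect{i}}\otimes\ket{\Niso_{\mtilde}^{-1}(\vect{j})}\in\Fv{\ntilde+1}$; the induced $\K$ is a finite subset of $\N$, nonempty whenever $\omega\not\equiv 0$, and $\vect{g}$ is then determined by $\K$ per~\pref{def:NF}, so the decomposition holds by construction. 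Uniqueness is the injectivity of this bijection: two candidate normal forms producing the same $\omega$-data must share identical coefficient families $(f_{\vect{i},k})$ and hence identical $\vect{f}$.

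The argument is essentially careful bookkeeping and I do not foresee a conceptual obstacle. The delicate points are (i) tracking the identity wire that threads the shared index $k=\ell$ between $\vect{f}$ and $\vect{g}$, which is what makes the double sum collapse correctly into a single $\Omega$-series, and (ii) the degenerate cases $\ntilde=0$ or $\mtilde=0$, where the convention $\K=\{0\}$ of~\pref{def:NF} trivialises $\Niso_{\mtilde}$ and the decomposition reduces to a single global scalar. Note that this proposition only establishes the canonical parameterisation $\vect{f}\leftrightarrow\omega$; the genuine semantic uniqueness of the $\omega$-decomposition (equivalently, linear independence of the $\Omega^{\vect{i},\vect{j}}(T)$) is orthogonal and postponed to Steps~3--4 of the sketch via the $\Delta$-operators.
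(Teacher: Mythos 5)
Your proposal is correct and takes essentially the same route as the paper: the paper's one-line proof likewise invokes linearity of $\interpspre{\cdot}$ together with the canonical extraction of $\omega_{\vect{i},\vect{j}}$ from $\vect{f}$ via the bijection $\Niso_{\mtilde}$, i.e.\ exactly the reindexing correspondence $\vect{f}\leftrightarrow\omega$ that you spell out. Your closing caveat --- that this proposition only sets up the parameterisation, while semantic uniqueness rests on the linear independence of the $\Omegav{i}{j}(T)$ established later through the $\Delta$-operators --- matches the paper's intended reading.
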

    \begin{proof}
    It follows from the linearity of \interpspre{.} and that $\omega_{\vect{i},\vect{j}}=\tikzfig{omegascalarfromf}$.
    \end{proof}
\begin{proposition}[Threshold properties of the $\Delta$-morphisms]\label{prop:deltathreshold}For any $\Tmnrec$-circuit $\Ndiamond:n+m\rightarrow n+m$ and $(\vect{u},\vect{v})\in(\N^n,\N^m)$, $\bra{\vect{y}}\Delta^{\vect{u},\vect{v}}(\Ndiamond)\ket{\vect{x}}$ is nonzero for $(\vect{x},\vect{y})=(\vect{v},\vect{u})$ and is zero if\\ $(\vect{x}\lrev \vect{v})\lor (\vect{y}\lrev\vect{u})$, where $\lrev$ is the reverse lexicographical order, i.e.\ $\vect{y}\lrev\vect{v}$ if there exists $k$ such that $y_n=v_n,\dots,y_{k+1}=v_{k+1}$ and $y_k<v_k$.\end{proposition}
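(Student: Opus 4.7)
The proof will proceed by induction on the size of the $\Tmnrec$-circuit $\Ndiamond : n+m \to n+m$, peeling the rightmost beam splitter diagonal at each step. The algebraic engine is the Heisenberg action of a beam splitter on creation operators: commuting $(\adag{})^{u}$ past a beam splitter $B_\theta$ produces, via the binomial theorem, a polynomial of degree $u$ in the input creation operators whose coefficients are nonzero monomials in $\cos\theta$ and $i\sin\theta$. By the $\Tmnrec$ constraint (remark after \pref{def:Trec}), every beam splitter angle in $\Ndiamond$ is strictly positive, which ensures that no Heisenberg coefficient vanishes accidentally and keeps every leading term alive throughout the peeling.

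\textbf{Induction.} The base case $n = 0$ is immediate: the recursive part is absent, so $\Delta^{\emptyset,\vect{v}}(\Ndiamond)$ is the bare state preparation $\ket{\vect{v}}$ and the identity $\bra{\vect{x}}\vect{v}\rangle = \delta_{\vect{x},\vect{v}}$ handles both claims at once. For $n \geq 1$, I decompose $\Ndiamond = \Ndiamond' \circ D$ where $D$ is the rightmost diagonal of beam splitters (all with nonzero angles) and $\Ndiamond'$ is a strictly smaller $\Tmnrec$-circuit. Propagating the output creation operators $(\adag{})^{u_k}$ through $D$ rewrites $\Delta^{\vect{u},\vect{v}}(\Ndiamond)$ as a finite linear combination of $\Delta^{\vect{u}',\vect{v}}(\Ndiamond')$-terms with explicit Heisenberg coefficients. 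Two observations close the argument. First, exactly one $\vect{u}'$ in the expansion matches the diagonal target $(\vect{v},\vect{u})$ after the inductive step, and its contribution is a nonzero product of beam splitter weights times the (inductively nonzero) diagonal contribution from $\Ndiamond'$. Second, every $\vect{u}'$ appearing in the expansion preserves the total photon count $|\vect{u}|$ and obeys a monotonicity bound on its last components, so whenever $\vect{y} \lrev \vect{u}$ or $\vect{x} \lrev \vect{v}$, each summand falls strictly below the inductive diagonal and is annihilated by the inductive hypothesis.

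\textbf{Main obstacle.} The delicate point is aligning the reverse lexicographical order on $\vect{y}$ with the order in which beam splitter diagonals are peeled. The Heisenberg expansion of $(\adag{})^{u_n}$ through the last diagonal mixes the $n$-th mode with its neighbor, so I need a monotonicity lemma guaranteeing that no term in the expansion produces a $\vect{u}'$ whose $n$-th component strictly exceeds $u_n$; otherwise the inductive hypothesis for the smaller circuit could fail to kill the lex-smaller $(\vect{x},\vect{y})$ contributions. I expect this monotonicity to follow directly from the binomial expansion $(\alpha\,\adag{} + \beta\,\adag{})^{u_n} = \sum_{a=0}^{u_n} \binom{u_n}{a} \alpha^{a}\beta^{u_n-a} (\adag{})^{a}(\adag{})^{u_n-a}$, which is naturally compatible with $\lrev$. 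Once this bookkeeping is in place, the nonvanishing at the diagonal and the vanishing below the diagonal both descend through the induction in lockstep, using only the explicit $B_\theta$-formula from \pref{def:semmany} and the nonzero-angle guarantee from \pref{def:Trec}.
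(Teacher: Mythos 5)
Your plan takes a genuinely different route from the paper, but it has real gaps. The most important one: the inductive peeling attacks the wrong side of the problem. In $\Delta^{\vect{u},\vect{v}}(\Ndiamond)=(\Lambda^{\vect{u}}\otimes id)\circ\Omega^{\vect{0},\vect{v}}(\Ndiamond)$ the creation operators sit \emph{outside} the circuit, so $\bra{\vect{y}}\Lambda^{\vect{u}}\propto\bra{\vect{y}-\vect{u}}$ and vanishes as soon as a single component satisfies $y_j<u_j$; the vanishing for $\vect{y}\lrev\vect{u}$ is therefore immediate, with no beam splitter and no induction involved. By pushing the creation operators through the rightmost diagonal you manufacture the difficulty you then flag, and your hoped-for monotonicity lemma is in fact false: the factor $(\adag{n-1})^{u_{n-1}}$ is also expanded through the beam splitter touching mode $n$ and contributes up to $u_{n-1}$ extra powers of $\adag{n}$, so exponents strictly above $u_n$ do occur; symmetrically, exponents strictly below $u_n$ occur, i.e.\ the expansion contains $\vect{u}'$ with $\vect{u}'\lrev\vect{u}$, for which the inductive hypothesis says nothing about $\bra{\vect{u}}\Delta^{\vect{u}',\vect{v}}(\Ndiamond')\ket{\vect{v}}$. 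Hence the claim that ``exactly one $\vect{u}'$ matches the diagonal target'' is unjustified and cancellations in the diagonal entry are not excluded. Two further problems: the condition $\vect{x}\lrev\vect{v}$ on the \emph{input} is the only part of the statement that genuinely uses the $\Tmnrec$ shape, and peeling output-side diagonals leaves $\vect{v}$ and the input untouched, so the induction gives no handle on it; and ``no Heisenberg coefficient vanishes accidentally'' is wrong, since the angles of a $\Tmnrec$-circuit lie in $(0,\frac{\pi}{2}]$, so $\sin\theta\neq 0$ but $\cos\theta$ may be $0$.

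The paper's proof is direct and induction-free. The output side is the trivial componentwise observation above (which is strictly stronger than the $\lrev$ statement). The input side is photon-number conservation plus connectivity: the auxiliary inputs carry vacuum and, by the parallelogram shape of $\Ndiamond$, the last auxiliary detector is reachable only from the last main input, so detecting $v_n$ there forces $x_n\geq v_n$; if $x_n=v_n$ those photons are all consumed and the next detector forces $x_{n-1}\geq v_{n-1}$, and so on --- exactly the reverse-lexicographic condition. The diagonal entry is nonzero because this counting forces a unique photon configuration whose amplitude is a product of $i\sin\theta_{i,j}$ factors, all nonzero for angles in $(0,\frac{\pi}{2}]$. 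If you wish to keep an induction, you should first prove the stronger componentwise vanishing on the output side; that kills every off-diagonal $\vect{u}'$ with $|\vect{u}'|=|\vect{u}|$ and rescues the ``one surviving term'' argument --- but at that point the induction no longer does any work that the direct argument does not already do.
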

\begin{proof}
    It is a consequence of the shape of the $\Tmnrec$-circuits (\pref{def:Trec}), and the properties of $\Delta^{\vect{u},\vect{v}}$. As there is no photon in the auxiliary sources, the input needs a certain number of photons for them to be detected in the auxiliary detectors. Similarly, as we create photons at the output with the creation operators $\adag{}$, the output needs a certain number of photons. More details are in \pref{app:uniqfullproof}.
\end{proof}

The linear independence of $\Delta$ will be a consequence of \pref{prop:deltathreshold} and a decomposition of the $\Omega$ with $\Delta$ morphisms will give the independence of the $\Omega$, thus the uniqueness of the $\omega_{\vect{i},\vect{j}}$, and therefore the uniquess of the normal forms with \pref{prop:uniqOmegaNF}. Everything is detailed in~\pref{app:uniqfullproof}.

\subsection{Completeness of the \lovetitle-calculus: Proof of Theorem~\ref{thm:completeness}}\label{subsec:completeLOfi}

Let $C,C'$ two \love-circuits such that $\interps{C}=\interps{C'}$. They can be rewritten to normal forms by \pref{lem:derivrewrite}: $\love\vdash C=N$ and $\love\vdash C'=N'$. By soundness of~\love, we have $\interps{N}=\interps{C}=\interps{C'}=\interps{N'}$ thus $\interps{N}=\interps{N'}$. By~\pref{lem:derivrewrite}, the normal forms are unique. Therefore, $N=N'$ and we have $\love\vdash~C=N=N'=C'$, thus $\love\vdash C=C'$, proving the completeness of the \love-calculus.\qed

\section{Outlook}

The formalism of the \love-calculus helped to find normal forms for linear optical circuits, and the new operators introduced in Section~\ref{sec:completeness} were particularly relevant for proving their uniqueness. It is an open problem to know if those normal forms and new operators can have further applications in simulation, compilation or the synthesis of linear optical circuits, or even broader reach as the \lopp-calculus had for quantum circuits~\cite{clement2023complete}. As those normal forms make only sense with finite states, it is also an open problem to determine whether normal forms exist in the infinite case, let alone their uniqueness.

\section*{Acknowledgements}
We would like to thank Marc de Visme and Vladimir Zamdzhiev for helpful discussions, Alexandre Clément for the insight into the angles of~(\ref{nLOpp:E3}) and the derivation of~(\ref{oLOpp:E3}), and particularly Shane Mansfield, Benoît Valiron and Renaud Vilmart for helpful discussions, support and reviews of some parts of the paper. We would also like to thank all the anonymous reviewers for their insightful comments and suggestions, which greatly helped to improve the quality of this manuscrip.

\clearpage
\bibliographystyle{plainurl} %bibstyl of LIPIcs
\bibliography{reference}

\appendix

\clearpage
% \crefalias{section}{appendix}
% \crefalias{subsection}{appendix}
% \crefalias{subsubsection}{appendix}
% \crefalias{paragraph}{appendix}
% \crefalias{subparagraph}{appendix}

% \etocsettocstyle{\section*{Appendices}}{} % Customize the title of the local ToC
% \localtableofcontents % This will list only the sections defined after it

% \tableofcontents

%\section*{Plan of the Appendix}

\section{Notations}
\begin{table}[h]
    \centering
    \begin{tabular}{c|l}
        \textbf{Symbol} & \textbf{Meaning} \\
        \hline
        $C,C'$ & \love-circuits. \\
        $D,D',T,\Ntriangle,\Ndiamond$ & \lopp-circuits, cf \pref{tab:triangles} for the specific classes of circuits. \\ 
        $\varphi,\theta$ & Parameters (angles) of phase shifters and beam splitters\\
        $n,m,\ntilde,\mtilde$ & Integers used for the number of inputs ($n$ or $n+\ntilde$) and outputs ($m$ or $m+\mtilde$) \\
        $i,j,k,\ell,p,q$ & Integers used for indexing. \\
        $\vect{s},\vect{t},\vect{u},\vect{v},\vect{x},\vect{y}$ & Fock basis vectors.\\
        $\setindex{S,T,U,V}$ & Finite set of indexes associated with their lowercase vector. Often omitted in the sums. \\
        $\Fbos{m}$ & Hilbert space of the bosonic Fock space over $m$ modes, cf \pref{def:Fbos}. \\
        $\Fv{m} $ & Pre-Hilbert space of the bosonic Fock space over $m$ modes, cf \pref{def:Fv}. \\
        $\vect{f},\vect{f}'$ & Vectors of $\Fv{}$. \\
        $\vect{g},\vect{g}'$ & Vectors of $(\Fv{})^*$, the dual of \Fv{}. \\
        $\opc{a}{j}$& Creation operator over the mode $j$, introduced in \pref{def:OmegaDelta} \\
        $\Lambdav{\cdot}$& Operator defined in \pref{prop:commutelambda}\\
        $\Omegav{\cdot}{\cdot},\Deltav{\cdot}{\cdot}$ & Operators defined in \pref{def:OmegaDelta}. \\
        $\sum,\prod$ & Finite sums and products when the upper bound or the set of indexes is omitted. \\
        $\lessv,\lesseqv$ & Weak order on vectors, cf \pref{lem:Deltadec}.\\
        $\prec,\preceq$  & Lexicographic order on vectors, cf \pref{lem:Deltadec}.\\
        $\lrev,\leqrev$ & Reverse lexicographic order on vectors, cf \pref{prop:deltathreshold}.\\
    \end{tabular}
    \caption{\label{tab:notations}Notations used in the paper.}
\end{table}

\section{Section~\ref{sec:loqc}}\label{app:lopp}

\subsection{Euler angles of (\ref{nLOpp:E2}) and (\ref{nLOpp:E3})}\label{subsec:Eangles}

\newcommand{\genericUtwo}{\ensuremath{
    \begin{pmatrix}
        u_{1,1} & u_{1,2}\\
        u_{2,1} & u_{2,2}
        \end{pmatrix}}}

        \newcommand{\genericUtwoprime}{\ensuremath{
            \begin{pmatrix}
                u_{1,1}' & u_{1,2}'\\
                u_{2,1}' & u_{2,2}'
                \end{pmatrix}}}

\newcommand{\genericUthree}{\ensuremath{
    \begin{pmatrix}
        u_{1,1} & u_{1,2} & u_{1,3} \\
        u_{2,1} & u_{2,2} & u_{2,3} \\
        u_{3,1} & u_{3,2} & u_{3,3} \\
    \end{pmatrix}
}}
\newcommand{\absz}[1]{\ensuremath{\lvert #1 \rvert}}

\newcommand{\rhsmatrix}{\ensuremath{
    \begin{pmatrix}
        e^{i\beta_0}\cos({\beta_2}) &  ie^{i(\beta_0+\beta_1)}\sin({\beta_2})\\
        ie^{i\beta_3}\sin({\beta_2}) & e^{i(\beta_1+\beta_3)}\cos({\beta_2})
        \end{pmatrix}}}

Let us define the following matrices: 

\[R_X(\theta) \coloneqq \begin{pmatrix}
    \cos\left(\frac{\theta}{2}\right) & -i\sin\left(\frac{\theta}{2}\right) \\
    -i\sin\left(\frac{\theta}{2}\right) & \cos\left(\frac{\theta}{2}\right)
    \end{pmatrix} \quad\quad  H \coloneqq \frac{1}{\sqrt{2}}\begin{pmatrix}
            1 & 1 \\
            1 & -1
            \end{pmatrix}  \]
\[ R_z(\varphi) \coloneqq \begin{pmatrix}
    \cos(\varphi) & -\sin(\varphi) & 0 \\
    \sin(\varphi) & \cos(\varphi) & 0 \\
    0 & 0 & 1
    \end{pmatrix}\quad\quad  R_x(\theta) \coloneqq \begin{pmatrix}
        1 & 0 & 0 \\
        0 & \cos(\theta) & -\sin(\theta) \\
        0 & \sin(\theta) & \cos(\theta)
        \end{pmatrix} \]  

\[ P_{iy} \coloneqq \begin{pmatrix}
    1 & 0 & 0 \\
    0 & i & 0 \\
    0 & 0 & 1
    \end{pmatrix} \]  

    with $HH^{\dagger}=I_2=P_{iy}P_{iy}^{\dagger}$ where $I_2$ is the $2\times 2$ identity matrix and $\dagger$ is the conjugate transpose.

Let us also note:

\[ B(\theta) = \begin{pmatrix}
    \cos(\theta) & i\sin(\theta) \\
    i\sin(\theta) &  \cos(\theta )
    \end{pmatrix}\quad\quad P(\varphi) =\begin{pmatrix}
        1 & 0 \\
        0 &  e^{i\varphi} \\
        \end{pmatrix}  \] 

with 
\[  B_{1,2}(\theta) \coloneqq \begin{pmatrix}
    \cos(\theta) & i\sin(\theta) & 0\\
    i\sin(\theta) &  \cos(\theta ) & 0 \\
    0&0&1 \\
    \end{pmatrix}\quad\quad  B_{2,3}(\theta) \coloneqq \begin{pmatrix}
        1&0&0\\
        0&\cos(\theta) & i\sin(\theta) \\
        0& i\sin(\theta) &  \cos(\theta )  \\
        \end{pmatrix}\]

\subsubsection{Euler angles of (\ref{nLOpp:E2})}
\label{subsec:E2angles}
Let us note $U_{E_2}\coloneqq \genericUtwo $ with $\interpsone{LHS}= U_{E_2} =\interpsone{RHS}$. It is straightforward to compute the elements of $U_{E_2}$ with either the angles of the LHS or the RHS with the definitions (cf \pref{def:semLOpp}). Therefore, it is sufficient to give the angles from the elements of $U_{E_2}$. In the following paragraphs, we first give one solution for the angles, then precise all the angles that satisfy the equations.

\paragraph*{From $U_{E_2}$ to RHS}

We have:

\[ \interpsone{RHS} = \rhsmatrix = \genericUtwo \]

The following is a solution:
\begin{equation}\label{eq:E2onesolutionRHS}
    \left\{
    \begin{aligned}
    \beta_0 &= arg(u_{1,1}), \\
    \beta_1 &= arg(u_{2,2})-arg(u_{2,1})+\frac{\pi}{2},\\
    \beta_2 &= acos\left(\absz{u_{1,1}}\right), \\
    \beta_3 &= arg(u_{2,1}) -\frac{\pi}{2}, \\
    \end{aligned}
    \right.
\end{equation}
We can note that:
\begin{itemize}
    \item if $\beta_2+=\pi$ then the RHS is the same with $\beta_0+=\pi$ and $\beta_3+=\pi$,
    \item if $\beta_2\mapsto \pi - \beta_2$ then the RHS is the same with $\beta_0+=\pi$ and $\beta_1+=\pi$ ,
    \item all the angles can be taken modulo $2\pi$ without changing the RHS.
\end{itemize}

The complete set of solutions for $\beta_0,\beta_1,\beta_2$ and $\beta_3$ includes all possible configurations that can be derived from the values of (\ref{eq:E2onesolutionRHS}) and the three transformations outlined above.

Note there is a unique set of angles such that $\beta_2 \in [0,\frac{\pi}{2}]$, $\beta_0,\beta_1,\beta_3\in[0,2\pi)$ satisfying $\beta_1=0$ if $\beta_2\in\{0,\frac{\pi}{2}\}$. In particular, if $\absz{u_{1,1}}=1$, then $\beta_2=0$, $\beta_1=0$, and if $\absz{u_{1,1}}=0$, then $\beta_2=\frac{\pi}{2}$ and $\beta_1=0$.

\paragraph*{From $U_{E_2}$ to LHS}

Let us first notice that:

\[HP(\varphi) H = e^{i\frac{\varphi}{2}}R_X(\varphi)=e^{i\frac{\varphi}{2}}B({-\frac{\varphi}{2}})\]

We can therefore come back to a form similar to the RHS:

\[\begin{array}{rcl}
    H\interpsone{LHS}H&=& e^{i\alpha_0}H\left(B(\alpha_1)P(\alpha_2-\alpha_0)B(\alpha_3)\right)H\\
    &=& e^{i\alpha_0}\left(HB(\alpha_1)H\right)\left(HP(\alpha_2-\alpha_0)H\right)\left(HB(\alpha_3)H\right)\\
    &=& e^{i\alpha_0}\left(e^{i\alpha_1}P(-2\alpha_1)\right)\left(e^{i\frac{\alpha_2-\alpha_0}{2}}B\left(-\frac{\alpha_2-\alpha_0}{2}\right)\right)\left(e^{i\alpha_3}P(-2\alpha_3)\right)\\ 
    &=& e^{i\left(\frac{\alpha_0+2\alpha_1+\alpha_2+2\alpha_3}{2}\right)}P(-2\alpha_1)B\left(\frac{\alpha_0-\alpha_2}{2}\right)P(-2\alpha_3) \\
\end{array}\]

By noting $\genericUtwoprime=HUH$, and from the previous paragraph we can infer the following solution:

\begin{equation*}
    \left\{
    \begin{aligned}
    \frac{\alpha_0+2\alpha_1+\alpha_2+2\alpha_3}{2}&= arg(u_{1,1}'), \\
    \frac{\alpha_0-\alpha_2}{2} &= acos\left(\absz{u_{1,1}'}\right), \\
    -2\alpha_3+\frac{\alpha_0+2\alpha_1+\alpha_2+2\alpha_3}{2} &= arg(u_{2,1}') -\frac{\pi}{2}, \\
    -2\alpha_1 &= arg(u_{2,2}')-arg(u_{2,1}')+\frac{\pi}{2}
    \end{aligned}
    \right.
\end{equation*}

Thus we have the following solution:
\begin{equation}\label{eq:E2onesolutionLHS}
    \left\{
    \begin{aligned}
    \alpha_0&= arg(u_{1,1}')+arg(u_{2,2}'), \\
    \alpha_1 &= \frac{arg(u_{2,1}')-arg(u_{2,2}')}{2}-\frac{\pi}{4}\\
    \alpha_2&= arg(u_{1,1}')+arg(u_{2,2}') - 2acos\left(\absz{u_{1,1}'}\right), \\
    \alpha_3&= acos\left(\absz{u_{1,1}'}\right)- \frac{arg(u_{2,1}')+arg(u_{2,2}')}{2} +\frac{\pi}{4}, \\
    \end{aligned}
    \right.
\end{equation}

We can note that:
\begin{itemize}
    \item if $\alpha_1+=\pi$ then the LHS is the same with $\alpha_0+=\pi$ and $\alpha_2+=\pi$,
    \item if $\alpha_3+=\pi$ then the LHS is the same with $\alpha_0+=\pi$ and $\alpha_2+=\pi$,
    \item all the angles can be taken modulo $2\pi$ without changing the LHS.
\end{itemize}

The complete set of solutions for $\alpha_0,\alpha_1,\alpha_2$ and $\alpha_3$ includes all possible configurations that can be derived from the values of (\ref{eq:E2onesolutionLHS}) and the three transformations outlined above.

Note there is a unique set of angles such that $\alpha_1\in[0,\frac{\pi}{2}),\alpha_3 \in [0,\pi)$, $\alpha_0,\alpha_2\in[0,2\pi)$ satisfying $\alpha_1=0$ if $\alpha_0-\alpha_2=0~\text{mod}~\pi$. In particular, if $\absz{u_{1,2}}=0$, then $\alpha_0-\alpha_2=0~\text{mod}~\pi$ and $\alpha_1=\alpha_3=0$, and if $\absz{u_{1,1}}=0$, then  $\alpha_0-\alpha_2=0~\text{mod}~\pi$, $\alpha_1=0$ and $\alpha_3=\frac{\pi}{2}$.

\subsubsection{Euler angles of (\ref{nLOpp:E3})}

\label{subsec:E3angles}

Note that $P_{iy}^{\dagger}B_{1,2}(\theta)P_{iy}=R_z(\theta)$ and $P_{iy}^{\dagger}B_{2,3}(\theta)P_{iy}=R_x(\theta)$ with $P_{iy}^{\dagger}P_{iy}=I_3$, so we have:
\[ \begin{array}{rcl}
    P_{iy}^{\dagger}\interpsone{LHS}P_{iy} & =&  P_{iy}^{\dagger}B_{1,2}(\gamma_1)B_{2,3}(\gamma_2)B_{1,2}(\gamma_3)P_{iy} \\ 
    &=&   \left(P_{iy}^{\dagger}B_{1,2}(\gamma_1)P_{iy}\right)\left(P_{iy}^{\dagger}B_{2,3}(\gamma_2)P_{iy}\right)\left(P_{iy}^{\dagger}B_{1,2}(\gamma_3)P_{iy}\right)  \\
    &=&R_z(\gamma_1)R_x(\gamma_2)R_z(\gamma_3) \\
\end{array}\]

and
\[ \begin{array}{rcl}
    P_{iy}^{\dagger}\interpsone{RHS}P_{iy} & = & P_{iy}^{\dagger}B_{2,3}(\delta_1)B_{1,2}(\delta_2)B_{2,3}(\delta_3)P_{iy} \\ 
    &=& \left(P_{iy}^{\dagger}B_{2,3}(\delta_1)P_{iy}\right)\left(P_{iy}^{\dagger}B_{1,2}(\delta_2)P_{iy}\right)\left(P_{iy}^{\dagger}B_{2,3}(\delta_3)P_{iy}\right)\\ 
    &=& R_x(\delta_1) R_z(\delta_2)R_x(\delta_3)\\
\end{array}
\]

Thus the angles of (\ref{nLOpp:E3}) are exactly the Euler angles for rotations in $\mathbb{R}^3$:

\[ \begin{array}{cccc}
    &\interpsone{LHS}&=&\interps{RHS}  \\
\Leftrightarrow &P_{iy}^{\dagger}\interpsone{LHS}P_{iy}&=&P_{iy}^{\dagger}\interps{RHS}P_{iy} \\
\Leftrightarrow & R_z(\gamma_3)R_x(\gamma_2)R_z(\gamma_1) &=&R_x(\delta_3) R_z(\delta_2)R_x(\delta_1) \\
\end{array} \]

One can look at~\cite{goldstein2002classical} for the theory. For the explicit formula, let us define the $3\times 3$ real matrix $R_{E_3}\coloneq \begin{pmatrix}
    r_{1,1} & r_{1,2} & r_{1,3} \\
    r_{2,1} & r_{2,2} & r_{2,3} \\
    r_{3,1} & r_{3,2} & r_{3,3} \\
\end{pmatrix}=P_{iy}^{\dagger}\interpsone{LHS}P_{iy}=P_{iy}^{\dagger}\interpsone{RHS}P_{iy}$. It is straightforward to compute the elements of $R_{E_3}$ with either the angles of the LHS or the RHS with the definitions (cf \pref{def:semLOpp}). Therefore, it is sufficient to give the angles from the elements of $R_{E_3}$.

 Here, we give the formula directly taken from~\cite{eberly2020euler}.

\paragraph*{From $R_{E_3}$ to RHS}

\begin{itemize}
    \item If $-1<r_{1,1}<1$ then $\delta_1= atan2(r_{1,3},-r_{1,2})$, $\delta_2=acos(r_{1,1})$ and $\delta_3= atan2(r_{3,1},r_{2,1})$.
    \item If $r_{1,1}=1$, then necessarily $\delta_2=0$, and the solutions are all the angles such that $\delta_1+\delta_3=atan2(r_{3,2},r_{3,3})$.
    \item If $r_{1,1}=-1$, then necessarily $\delta_2=\pi$, and the solutions are all the angles such that $\delta_1-\delta_3=atan2(r_{3,2},r_{3,3})$.
\end{itemize}
All those angles can be taken modulo $2\pi$.
\paragraph*{From $R_{E_3}$ to LHS}

\begin{itemize}
    \item If $-1<r_{3,3}<1$ then $\gamma_2=acos(r_{3,3}),~\gamma_3=atan2(r_{1,3},-r_{2,3})$ and $\gamma_1=atan2(r_{3,0},r_{3,2})$.
    \item If $r_{3,3}=1$, then necessarily $\gamma_2=0$, and the solutions are all the angles such that $\gamma_1+\gamma_3=atan2(-r_{1,2},r_{1,1})$.
    \item If $r_{3,3}=-1$, then necessarily $\gamma_2=\pi$, and the solutions are all the angles such that $\gamma_1-\gamma_3=atan2(-r_{1,2},r_{1,1})$.
\end{itemize}

All those angles can be taken modulo $2\pi$.

\subsection{Completeness of \lopp: Theorem~\ref{thm:compLOpp}}
\label{app:proofcompLOpp}

To prove \pref{nLOpp:p0}, we have:
\begin{longtable}{rcl}
\tikzfig{LOpp-A-left} & $\overset{(p2\pi)}{=}$ & \tikzfig{derivLOpp-p0-1} \\\\
 & \eqeqref{nLOpp:pM} & \tikzfig{LOpp-A-middle} \\\\
 & $\overset{(p2\pi)}{=}$ & \tikzfig{LOpp-A-right} \\\\
\end{longtable}

To prove \pref{nLOpp:b0}, we have:
\[\begin{array}{rcl}
\tikzfig{LOpp-B-left} & \overset{PROP}{=} \tikzfig{derivLopp-b0-1} \\\\ & \eqeqref{nLOpp:swap}&\tikzfig{derivLopp-b0-2} \\\\
& \eqeqref{nLOpp:E2} & \tikzfig{derivLopp-b0-3} \\\\
& \overset{(\ref{nLOpp:swap}),PROP}{=} \tikzfig{LOpp-B-right} \\\\
\end{array}\]

To prove \pref{oLOpp:pp}, we have:
\begin{longtable}{rcl}
\tikzfig{LOpp-E-left} & \eqeqref{nLOpp:b0} & \tikzfig{derivLOpp-ppb1} \\\\
& \eqeqref{nLOpp:E2} & \tikzfig{derivLOpp-ppb2} \\\\
& \eqeqref{nLOpp:p0} & \tikzfig{LOpp-E-right} \\\\
\end{longtable}

\begin{lemma}[Useful \lopp~equations]\label{lem:usefulLOppeq}
   We can derive the three following equations in \lopp:
    \begin{equation}\label{eq:movepbl}
        \begin{array}{rcl}\qquad\qquad\qquad \tikzfig{movepbl-left}&= &\tikzfig{movepbl-right}\end{array}
    \end{equation}
    \begin{equation}\label{eq:moveptr}
        \begin{array}{rcl} \qquad\qquad\qquad\tikzfig{moveptr-left}&= &\tikzfig{moveptr-right}\end{array}
    \end{equation}
    \begin{equation}\label{eq:pswap}
        \begin{array}{rcl}\qquad\qquad\qquad\tikzfig{pswap-left} = \tikzfig{pswap-right}\end{array}
    \end{equation}
    If $\sin(\delta_1)\neq 0$, we can also derive:
    \begin{equation}\label{eq:E3swap}
        \begin{array}{rcl}\tikzfig{E3swap-left} = \tikzfig{E3swap-right}\end{array}
    \end{equation}
    with $\alpha=\atan(\frac{c_{\delta_1}}{s_{\delta_1}c_{\delta_2}})$.
\end{lemma}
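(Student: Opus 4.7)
The plan is to derive each of the four equations from the axioms of \pref{fig:nLOpp}, relying principally on the Euler decomposition (\ref{nLOpp:E2}) for the first three and on (\ref{nLOpp:E3}) for the last. Soundness (\pref{prop:soundLOpp}) plus completeness (\pref{thm:compLOpp}) guarantee in advance that each equation holds at the level of the semantics, so the only real task is to exhibit an explicit rewriting path.

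For (\ref{eq:movepbl}) and (\ref{eq:moveptr}), the idea is that both sides of each equation interpret to the same element of $U(2)$. I would instantiate (\ref{nLOpp:E2}) in the special angle configuration where one of the outer factors collapses (for instance by taking $\beta_2\in\{0,\pi/2\}$, or by forcing one of $\alpha_1,\alpha_3$ to $0$ using the convention recalled in \pref{fig:nLOpp}) so that the LHS of the instantiated E2 matches the LHS of (\ref{eq:movepbl}); then read off the right-hand angles from the explicit formulas of \pref{subsec:E2angles}. The same trick, with the dual conventions (phase-beam-phase vs.\ beam-phase-beam) gives (\ref{eq:moveptr}). At most two applications of (\ref{nLOpp:E2}), together with (\ref{nLOpp:pM}) and (\ref{nLOpp:Rp2pi}) to absorb residual phases, should suffice.

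For (\ref{eq:pswap}), the fastest route is to notice that the swap equals the beam splitter at $\theta=\pi/2$ up to phases (which is exactly the content of the single-photon semantics of \pref{def:semLOpp} together with (\ref{nLOpp:swap})). Hence a phase shifter crossed with a swap is just a two-wire unitary, and (\ref{nLOpp:E2}) applied to it directly yields the claimed identity; alternatively one can use (\ref{nLOpp:swap}) to push one wire past the other and then reapply (\ref{nLOpp:swap}) to recombine. Either way the derivation is a few axiom steps.

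For (\ref{eq:E3swap}), the strategy is first to apply (\ref{nLOpp:E3}) to the three-rotation block on the LHS, reshuffling it into the dual three-axis form, and then to apply (\ref{eq:pswap}) (already derived) to move the leftover phase past the swap that appears. The value $\alpha=\atan(c_{\delta_1}/(s_{\delta_1}c_{\delta_2}))$ will come out by matching the appropriate entries of the $3\times3$ real rotation matrix (after the conjugation by $P_{iy}$ used in \pref{subsec:E3angles}) on both sides: specifically, comparing one ratio of coefficients of the rotation that equals the LHS to those that must equal the RHS. The hypothesis $\sin(\delta_1)\neq 0$ plays two roles: it ensures the $\atan$ is well-defined (the denominator is nonzero away from the degenerate Euler gimbal-lock case), and it is precisely the condition under which the Euler angles on the LHS are unambiguously determined, so that the instantiation of (\ref{nLOpp:E3}) is reversible.

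I expect (\ref{eq:E3swap}) to be the main obstacle: the first three are routine instantiations of (\ref{nLOpp:E2}), but (\ref{eq:E3swap}) requires tracking how (\ref{nLOpp:E3}) interacts with a swap and solving a small system of trigonometric equations to pin down $\alpha$. The bookkeeping on the phases created and absorbed when pulling the swap through — and the verification that the $\atan$ formula is correct in all non-degenerate regimes of $(\delta_1,\delta_2)$ — is the only place where mistakes are likely.
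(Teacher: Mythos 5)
Your overall route is workable and close in spirit to the paper's, but two points need attention.

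\textbf{Comparison of approaches.} For (\ref{eq:movepbl}) and (\ref{eq:moveptr}) the paper does not instantiate (\ref{nLOpp:E2}) directly: it first derives the auxiliary rule (\ref{oLOpp:pp}) (equal phases on both wires commute past a beam splitter), and then obtains both equations by inserting zero phases with (\ref{nLOpp:p0}), splitting them with (\ref{nLOpp:pM}), and applying (\ref{oLOpp:pp}). Since (\ref{oLOpp:pp}) is itself derived from (\ref{nLOpp:E2}), your direct instantiation of (\ref{nLOpp:E2}) is the same idea, just less factored; it buys nothing but also loses nothing, provided you actually check that the degenerate instantiation of (\ref{nLOpp:E2}) you need (one outer beam splitter collapsed) produces exactly the phase placement of the target equations rather than a variant that still needs (\ref{nLOpp:pM}) and (\ref{nLOpp:p0}) to clean up. For (\ref{eq:pswap}) your second alternative (expand the swap with (\ref{nLOpp:swap}), deform, merge phases with (\ref{nLOpp:pM}), recombine with (\ref{nLOpp:swap})) is exactly the paper's derivation. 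For (\ref{eq:E3swap}) the paper's argument is precisely the matrix-entry matching you describe: it computes the $(3,3)$ entry of the LHS, $-s_{\alpha}s_{\delta_1}c_{\delta_2}+c_{\alpha}c_{\delta_1}$, requires it to vanish to match the RHS shape, and solves for $\alpha$.

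\textbf{Two genuine issues.} First, you open by saying that soundness \emph{plus completeness} (\pref{thm:compLOpp}) ``guarantee in advance'' that the equations hold. This is circular: Lemma~\ref{lem:usefulLOppeq} is used in \pref{app:proofcompLOpp} to derive (\ref{oLOpp:E2}) and (\ref{oLOpp:E3}), which is the proof of \pref{thm:compLOpp} itself. You may check semantic validity by direct computation (as the paper does for (\ref{eq:E3swap})), but you may not invoke completeness of the new axiom set at this stage; your derivations must be, and fortunately are sketched as, purely syntactic. Second, your claim that $\sin(\delta_1)\neq 0$ makes the denominator of $\alpha=\atan\left(\frac{c_{\delta_1}}{s_{\delta_1}c_{\delta_2}}\right)$ nonzero is incomplete: the denominator is $s_{\delta_1}c_{\delta_2}$, and the stated hypothesis does not exclude $c_{\delta_2}=0$. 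That degenerate case (where the required $\alpha$ is $\pi/2$) needs to be either handled separately or absorbed into the convention for $\atan$; as written, your justification of well-definedness has a hole.
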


\begin{proof}

    To derive \pref{eq:movepbl}, we have:
    \begin{longtable}{rcl}
        \tikzfig{movepbl-left} & \eqdeuxeqref{nLOpp:p0}{nLOpp:pM} & \tikzfig{movepbl1} \\\\
        & \eqeqref{oLOpp:pp} & \tikzfig{movepbl-right} \\\\
        \end{longtable}

        To derive \pref{eq:moveptr}, we have:
        \begin{longtable}{rcl}
            \tikzfig{moveptr-left} & \eqdeuxeqref{nLOpp:p0}{nLOpp:pM} & \tikzfig{moveptr1} \\\\
            & \eqeqref{oLOpp:pp} & \tikzfig{moveptr-right} \\\\
            \end{longtable}

        To derive \pref{eq:pswap}, we have:
    \begin{longtable}{rcl}
        \tikzfig{pswap-left} & \eqeqref{nLOpp:swap} & \tikzfig{pswap1} \\\\
        & $\overset{PROP}{=}$ & \tikzfig{pswap2} \\\\
        & \eqdeuxeqref{nLOpp:pM}{nLOpp:pM} & \tikzfig{pswap3} \\\\
        & \eqeqref{nLOpp:swap} & \tikzfig{pswap-right} \\\\
        \end{longtable}

For \pref{eq:E3swap}, by looking at the matrices of both sides of the equation we have:

\[  \interpsone{LHS}=\begin{pmatrix}
    * & * & * \\
    * & * & * \\
    * & * & -s_{\alpha}s_{\delta_1}c_{\delta_2}+c_{\alpha}c_{\delta_1} \\
    \end{pmatrix}   = \begin{pmatrix}
        * & * & * \\
        * & * & * \\
        * & * & 0 \\ 
        \end{pmatrix} = \interpsone{RHS} \]

Therefore, for $\delta_1\neq 0$, we have: \[-s_{\alpha}s_{\delta_1}c_{\delta_2}+c_{\alpha}c_{\delta_1}=0 \Leftrightarrow \tan(\alpha)=\frac{c_{\delta_1}}{s_{\delta_1}c_{\delta_2}} \Leftrightarrow~\alpha=\atan(\frac{c_{\delta_1}}{s_{\delta_1}c_{\delta_2}}).\]

\end{proof}

To derive \pref{oLOpp:E2}, we have:

\begin{longtable}{rcl}
    \tikzfig{LOpp-F-left} & \eqeqref{nLOpp:b0} & \tikzfig{derivLOpp-E21} \\\\
    & \eqeqref{nLOpp:E2} & \tikzfig{derivLOpp-E22} \\\\
    & \eqdeuxeqref{eq:movepbl}{nLOpp:pM} & \tikzfig{LOpp-F-right} \\\\
    \end{longtable}

To derive \pref{oLOpp:E3} if $\gamma_3 =0$, we have:
\begin{longtable}{rcl}
    \tikzfig{derivLOpp-E3zero0} & \eqeqref{nLOpp:b0} & \tikzfig{derivLOpp-E3zero1} \\\\
    & \eqeqref{nLOpp:E2} & \tikzfig{derivLOpp-E3zero2} \\\\
    & \eqdeuxeqref{nLOpp:p0}{nLOpp:b0} & \tikzfig{derivLOpp-E3zero3} \\\\
    \end{longtable}

To derive \pref{oLOpp:E3} if $\gamma_3 \neq 0$, we have:
\begin{longtable}{rcl}
    &&\tikzfig{derivLOpp-E30} \\\\
    &\eqtroiseqref{nLOpp:E2}{nLOpp:p0}{nLOpp:b0} & \tikzfig{derivLOpp-E31} \\\\
    & \eqeqref{eq:E3swap} & \tikzfig{derivLOpp-E32} \\\\
    & \eqeqref{nLOpp:E2} & \tikzfig{derivLOpp-E33} \\\\
    & \eqdeuxeqref{eq:moveptr}{nLOpp:pM} & \tikzfig{derivLOpp-E34} \\\\
    & \eqeqref{eq:pswap} & \tikzfig{derivLOpp-E35} \\\\
    & \eqeqref{nLOpp:E3} & \tikzfig{derivLOpp-E36} \\\\
    & \eqeqref{nLOpp:E2} & \tikzfig{derivLOpp-E37} \\\\
    & \eqtroiseqref{eq:movepbl}{nLOpp:pM}{nLOpp:p0} & \tikzfig{derivLOpp-E38} \\\\
    \end{longtable}

\subsection{Minimality of \lopp: Equation~(\ref{nLOpp:E3}) is necessary} 
\label{proof:minimality}

Here is the sketch of the proof:
\newcommand{\equiph}{\ensuremath{\sim_{\varphi}}}
\begin{itemize}
    \item We define an equivalence relation $\equiph$ on three-wire \lopp-circuits.
    \item We introduce a confluent rewriting procedure that is conserving the relation $\equiph$, and that is converging to normal forms.
    \item All the rules of the PROP, (\ref{nLOpp:p0}), (\ref{nLOpp:swap}), (\ref{nLOpp:pM}) and (\ref{nLOpp:E2}) also conserve the relation $\equiph$.
    \item We conclude that (\ref{nLOpp:E3}) is necessary, because the LHS and RHS are different normal forms, and therefore can't be transformed from one to the other without (\ref{nLOpp:E3}).
\end{itemize}

\begin{definition}[Minimal block decomposition]
    Let $D$ be a \lopp-circuit on three wires. We define the minimal block decomposition as the minimal partition of $D$ into sub-circuits $B_1,B_2,\dots,B_n$ such that $D$, up to deformation, is of the form:

    \[\tikzfig{Bblocks}\]
    
    where for each $i$, $B_i$ contains as many as phase shifters as possible that are not part of the $B_1,\dots,B_{i-1}$, and $n$ is minimal.
\end{definition}

Note that $B_1$ and $B_n$ can be identities or swap, and that $B_{n-1}$ can be a SWAP (but not the identity).

\begin{definition}[Equivalence relation]
    Let $D$,$D'$ two three-wire \lopp-circuits. We have $D\equiph D'$ if and only if their minimal block decomposition $B_1,\dots,B_n$ and $B_1',\dots,B_n'$ have the same number of blocks, and are semantically equal up to phases that can be exchanged between the blocks. 
\end{definition}
It means that $D\equiph D'$ if there exists $\varphi_1,\dots,\varphi_n,\psi_1,\dots,\psi_n$ such that the semantics of the $i^{th}$ dash box of the following figure: 

\[\tikzfigbox{0.75}{Bequivblocks}\]

is equal to $\interpsone{\tikzfig{Biblock}}$.

\begin{definition}[Rewriting system]
    We consider the rewriting system that is rewriting each block $B_i$ containing at least one beam splitter by \tikzfigbox{0.7}{phipsi} if $\interpsone{B_i}=\begin{pmatrix}
    e^{i\varphi}&0\\0&e^{i\psi}
    \end{pmatrix}$ and by \tikzfigbox{0.7}{phipsiswap} if $\interpsone{B_i}=\begin{pmatrix}
        0&e^{i\psi}\\e^{i\varphi}&0
        \end{pmatrix}$.
\end{definition}
% take the block, add phases

We can notice that:
\begin{itemize}
    \item The system strictly decreases the number of beam splitters, hence is strongly normalising.
    \item It follows from the definition of \equiph that the systems preserves \equiph.
    \item The system is locally confluent. Starting from $D$, if we rewrite $B_i$ giving $D_1$ or $B_j$ giving $D_2$, we can independently rewrite $B_i$ in $D_2$ and $B_j$ in $D_1$, giving the same circuit $D'$.
\end{itemize}

Finally, we can notice that the rules (\ref{nLOpp:p0}), (\ref{nLOpp:swap}), (\ref{nLOpp:pM}), (\ref{nLOpp:E2}) and all the rules of the PROP also converse the relation $\approx_1$.

As the LHS and the RHS are distinct normal forms of our system, we know they can't be derived from the rules conserving~\equiph. Therefore, (\ref{nLOpp:E3}) is necessary.

\subsection{Properties of the triangular forms of Section~\ref{subsec:triangles}}

\renewcommand{\nspazer}{1em}
\begin{table}[h]
    \centering
    \begin{tabular}{|m{8cm}|m{5cm}|}
     \hline Shape & Properties \\\hline 
     & \T-circuits (\pref{def:triangular})  \\
     %\vspace{\nspazer}
     \tikzfigbox{0.5}{LOpp-T-s-new} \vspace{\nspazer} & 
     Uniquely determined by $\interpsone{\cdot}$ (\pref{prop:uniqT}).  \\
     
     \hline
     & \Tmn-circuits (\pref{def:subtriangular}) \\
     \tikzfigbox{0.5}{LOpp-T-34-new} \vspace{\nspazer} &  
     Uniquely determined by the submatrix $\submatrix{\interpsone{\cdot}}{m}{n}$ (\pref{prop:uniqSubT}). Used for the normal forms of \love. %
     \\

       \hline
      & \Tmnrec-circuits (\pref{def:Trec}) \\
      \vspace{\nspazer}\quad\quad\quad\quad\tikzfigbox{0.7}{LOpp-Trec-34-new} \vspace{\nspazer}& They have exactly $\ntilde \times \mtilde$ nonzero beam splitters, with no identity wire. They are used in the proofs of \pref{sec:completeness}.
     \\\hline
    \end{tabular}
    \caption{\label{tab:triangles}Shapes and properties of classes of triangle \lopp-circuits: $n+\ntilde\rightarrow m+\mtilde$. $(\anyphi,\anytheta)$ are angles in $[0,2\pi)\times [0,\frac{\pi}{2}]$ that satisfy the properties of Defitions~\ref{def:triangular} and~\ref{def:subtriangular}. We emphasis the nonzero angles of \Ndiamond~ by noting $\anynztheta$ an arbitrary angle in $(0,\frac{\pi}{2}]$. The angles which are necessarily zero for the property 3 and 4 of \pref{def:subtriangular} are in red. We have $n=\ntilde=3,m=4$ and $\mtilde=2$ for the first two figures, and $\mtilde=n=2$ and $\ntilde=m=3$ for the third.}
\end{table}

\subsubsection*{Proof of Proposition~\ref{prop:uniqT}}\label{app:uniqTproof}
The coefficient $t_{i,j}$ of $\interpsone{\triangleone}$ is determined by the sum of all the paths from the $j^{th}$ input wire to the $i^{th}$ output wire, where for each path, we multiply by a $\cos$ (resp.\ $\sin$) term when the photon is reflected on (resp.\ transmitted through) a beam splitter, and by a phase when the path crosses a phase shifter. For instance: \[\begin{array}{rcl} t_{1,2}&=&\cos(\theta_{1,2})e^{i\varphi_{1,2}}i\sin(\theta_{1,1})e^{i\varphi_{1,1}} \text{ and} \\ t_{2,2}&=&i\sin(\theta_{1,2})\cos(\theta_{2,2})e^{\varphi_{2,2}}i\sin(\theta_{2,1})e^{i\varphi_{2,1}}+\cos(\theta_{1,2})e^{\varphi_{1,2}}i\sin(\theta_{1,1})\cos(\theta_{2,1})e^{\varphi_{1,2}}.\end{array}\] More generally, we have $t_{i,j}=e^{i\varphi_{i,j}}\cos(\theta_{i,j}) \times q_{i,j} + r_{i,j}$ where $q_{i,j},r_{i,j}$ are terms depending uniquely on the the angles with lower indexes. We can notice there is at most one path from the $j^{th}$ input wire to the $i^{th}$ output wire involving $\theta_{i,j}$ and $\varphi_{i,j}$ and that $q_{i,j}\neq 0$ if and only if all $\theta_{k<i,j}$ and $\theta_{i,\ell<j}$ are nonzero. If one $\theta_{i,\ell<j}$ is zero, then we have $\varphi_{i,j}=\theta_{i,j}=0$ by the properties of the \T-circuits. If there are $K$ values of $\theta_{k<i,j}$ which are zero, then all the $K$ diagonals $\theta_{k,\ell'\geq j}$ are zero. By now considersing the path from the $(j+K)^{th}$ input wire to the $i^{th}$ output wire, we recover the same type of equation with $q_{i,j}\neq 0$. Now, we can substract $r_{i,j}$ and dividing by $q_{i,j}$, so that we have $e^{i\varphi_{i,j}}\cos(\theta_{i,j})=z_{i,j}$ with $z_{i,j}=(t_{i,j}-r_{i,j})/q_{i,j}$. If $z_{i,j}\neq0$ then $\theta_{i,j}\in [0,\frac{\pi}{2})$ and $\varphi \in [0,2\pi)$ are uniquely determined. If $z_{i,j}=0$, then $\theta_{i,j}=\frac{\pi}{2}$ and by the properties of \T-circuits, we have $\varphi_{i,j}=0$.

\begin{remark}The existence and the uniqueness have been shown in~\cite{clement2022lov} for very similar circuits that have two minor differences; the phases were on the top left of the beam splitters, and the range of the thetas and phases, expect the last layer, were all in $[0,\pi)$. We can therefore have an alternative proof by changing the strongly normalising and confluent rewriting system so the thetas are always in $[0,\frac{\pi}{2}]$ and the phases stay on the bottom left instead of the top left, without restricting their range.
\end{remark}

\section{Some properties of the \lovetitle-circuits}

\subsection{Choice of the generators}\label{app:notsobad}

The sources and detectors of the \love-calculus allow any and arbitrary finite support state on many modes, which may seem to be too powerful or far from the physical implementation. In that regard, we would like to highlight that:\begin{itemize} \item Some sources can directly generate more generic states such as a coherent superposition with the vacuum of the 2-photon state~\cite{loredo2019generation}, or even directly create entangled states~\cite{coste2023highrate}.
    \item Linear optical circuits are very modular, and each building block is usually used many times. It would therefore be more convenient to sometimes represent those building blocks directly by specifying what they do, instead of how they are implemented, as illustrated in \pref{fig:ex-Bell}.\item Optical interactions are very combinatorics, thus being unlikely to have a complete equational theory with only single mode sources\footnote{We can note that~\cite{felice2022path} bypasses that problem by allowing sums of diagrams}.
    \item This formalism still allows finding new results for linear optics, like the unique normal forms~\pref{sec:completeness}.\end{itemize}

% \subsection{Informal explanation of the equational theory} 
% \label{app:h2}
% \todo{the equation for h2}

\subsection{Remarkable properties}

In this subsection, we show a list of remarkable properties that either illustrate the behavior of some circuits or give more insight on how they work.

\textbf{Circuit with an identity wire have the semantics of a sum of smaller diagrams.}
%This illustrates the~\pref{rem:sumidwire}.
\label{app:sumidwire}

\[\begin{array}{rcl}\interps{\tikzfigbox{0.8}{sumofdiagleft}}&=&\sum\limits_{k \in \K}\sum\limits_{\ell \in \setindex{L}}\langle \ell \mid k \rangle \interps{\tikzfigbox{0.8}{sumofdiagright1}} \\ &=& \sum\limits_{j \in \setindex{L} \cap \setindex{K}} \interps{\tikzfigbox{0.8}{sumofdiagright}} \end{array}\] % 

\textbf{\interpspre{C} can always be expressed as a sum of $\Omega$.}
%This illustrates the paragraph after~\pref{def:Omega}.

\label{app:sumOmega}

\begin{proposition}[Canonical decomposition]
    For every \love-circuit $C$, $\interpspre{C}$ can be expressed as a linear combination of $\Omegav{u}{v}(D)$, with $D$ a \lopp-circuit.
\end{proposition}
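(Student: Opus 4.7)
The plan is to proceed by structural induction on the \love-circuit $C$, leveraging the multilinearity of $\interpspre{\cdot}$ together with the closure of the $\Omega$-form under tensor product and composition.

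For the base cases, if $C$ is a phase shifter, beam splitter, identity, or swap, then $C$ is itself a \lopp-circuit $D$ and $\interpspre{C} = \Omega^{\emptyset,\emptyset}(D)$, with empty auxiliary source and detector. For a source $\vect{f} = \sum_{\vect{k}\in\setindex{K}} f_{\vect{k}} \ket{\vect{k}}$, linearity yields $\interpspre{\vect{f}} = \sum_{\vect{k}\in\setindex{K}} f_{\vect{k}}\,\Omega^{\vect{k},\emptyset}(\mathrm{id}_{\ntilde})$, where the \lopp-circuit is simply $\ntilde$ parallel identity wires. The dual case of detectors is symmetric.

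For the inductive step, suppose $\interpspre{C_i} = \sum_\alpha \lambda^{(i)}_\alpha\,\Omega^{\vect{u}^{(i)}_\alpha,\vect{v}^{(i)}_\alpha}(D^{(i)}_\alpha)$ for $i=1,2$. By bilinearity of $\otimes$ and $\circ$, it suffices to prove that any tensor product $\Omega^{\vect{u}_1,\vect{v}_1}(D_1) \otimes \Omega^{\vect{u}_2,\vect{v}_2}(D_2)$ or composition $\Omega^{\vect{u}_2,\vect{v}_2}(D_2) \circ \Omega^{\vect{u}_1,\vect{v}_1}(D_1)$ is itself of the form $\Omega^{\vect{u},\vect{v}}(D')$ for some \lopp-circuit $D'$. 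For the tensor product, one takes $\vect{u}=(\vect{u}_1,\vect{u}_2)$, $\vect{v}=(\vect{v}_1,\vect{v}_2)$, and builds $D'$ from $D_1 \otimes D_2$ by inserting swaps that route the two groups of auxiliary wires together at the bottom. For the composition, the external wires of $C_1$ and $C_2$ connect directly while the auxiliary structures remain disjoint; the required lopp-circuit is obtained by composing $D_1$ and $D_2$ along their external wires, tensored with identities on the other's auxiliary wires, together with swaps collecting all sources at the bottom-left and all detectors at the bottom-right.

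The main obstacle is purely combinatorial: the explicit wire-permutation patterns needed for the tensor and composition constructions are routine but tedious to write down diagrammatically, particularly for composition where auxiliary wires of $\Omega^{\vect{u}_1,\vect{v}_1}(D_1)$ must be threaded past those of $\Omega^{\vect{u}_2,\vect{v}_2}(D_2)$. Since the PROP structure provides swaps freely, such permutations are always realizable, and the finite induction over the structure of $C$ concludes the proof.
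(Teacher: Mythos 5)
Your proof is correct. The paper's own argument is shorter and non-inductive: it takes for granted that any \love-circuit can be deformed, using only the PROP axioms (a source, having no inputs, slides to the far left; a detector, having no outputs, slides to the far right; adjacent sources, respectively detectors, merge into a single one on the bottom wires), into a sandwich of a single generic source $\vect{f}$, a \lopp-circuit $D$, and a single generic detector $\vect{g}$; it then expands $\vect{f}=\sum_{\vect{u}}\alpha_{\vect{u}}\ket{\vect{u}}$ and $\vect{g}=\sum_{\vect{v}}\beta_{\vect{v}}\bra{\vect{v}}$ in the Fock basis and concludes $\interpspre{C}=\sum_{\vect{u},\vect{v}}\alpha_{\vect{u}}\beta_{\vect{v}}\Omegav{u}{v}(D)$ by linearity of $\interpspre{\cdot}$. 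Your structural induction proves exactly that implicit normalization step on the fly, by showing that the linear span of the $\Omega$-forms is closed under $\otimes$ and $\circ$: the wire permutations you invoke are precisely the deformations the paper absorbs into ``up to PROP,'' and the key observation making composition and tensor land on a \emph{single} $\Omega$ rather than a sum is, as you note, that a tensor of Fock basis states is again a Fock basis state. What your version buys is self-containedness --- no appeal to a pre-normalized shape of $C$ --- at the cost of the combinatorial bookkeeping you acknowledge; what the paper's version buys is brevity, since the sandwich form is in any case delivered by its rewriting system. Both arguments ultimately rest on the same two facts: multilinearity of $\interpspre{\cdot}$ and the basis expansion of the auxiliary state and effect.
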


\begin{proof}
    This is a direct consequence of the linearity of $\interpspre{\cdot}$. With $\vect{f}=\sum\limits_{\vect{u}\in\setindex{U}}\alpha_{\vect{u}}\ket{\vect{u}}$, $\setindex{U}\finsubset \N^{\ntilde}$ and $\vect{g}=\sum\limits_{\vect{v}\in\setindex{V}}\beta_{\vect{v}}\bra{\vect{v}}$, $\setindex{V}\finsubset \N^{\mtilde}$, we have: 

\[    \begin{array}{rcl}
\interpspre{C}&=&\interpspre{\tikzfigbox{1.0}{Omegadec1}} \\
&=& \sum\limits_{\vect{u}\in \setindex{U}}\sum\limits_{\vect{v}\in \setindex{V}}\alpha_{\vect{u}}\beta_{\vect{v}}\interpspre{\tikzfigbox{1.0}{Omegadec2}} \\
&=& \sum\limits_{\vect{u},\vect{v}}\alpha_{\vect{u}}\beta_{\vect{v}}\Omegav{u}{v}(D) \\
&=&\sum\limits_{\vect{u},\vect{v}}\omega_{\vect{u},\vect{v}}\Omegav{u}{v}(D)
 
\end{array} \]

\end{proof}

 \section{Rewriting system of Section~\ref{sec:completeness}}

\subsection{\lovetitle~is complete for \lopp}
\label{app:loficompletelopp}
\begin{lemma}[\love~is complete for \lopp]\label{lem:derivLOppwithLOfi}
    We can derive the equations of \pref{fig:nLOpp} with the rules of the \love-calculus.
\end{lemma}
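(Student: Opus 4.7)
The plan is to observe that four of the five equations of \pref{fig:nLOpp} are already syntactically among the axioms of \love~in \pref{fig:axioms}, and then to derive the last one following the recipe sketched in the remark after Definition~\ref{def:lov-calc}. Indeed, (\ref{nLOpp:Rp2pi}), (\ref{nLOpp:swap}), (\ref{nLOpp:E2}) and (\ref{nLOpp:E3}) coincide verbatim with the \love-axioms (\ref{axiom:p2pi}), (\ref{axiom:swap}), (\ref{axiom:E2}) and (\ref{axiom:E3}) respectively, and so require no derivation at all. It remains to derive (\ref{nLOpp:pM}), the phase-composition rule $P_{\varphi_1}\circ P_{\varphi_2}=P_{\varphi_1+\varphi_2}$ on a single wire.

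I would first derive the auxiliary equation (\ref{nLOpp:b0}), stating that a beam splitter with $\theta=0$ equals the identity on two wires, using only the PROP axioms together with (\ref{axiom:swap}) and (\ref{axiom:E2}). The idea is to instantiate (\ref{axiom:E2}) with LHS angles chosen so that the LHS unitary is the identity on two wires, and to read off from the explicit Euler formulas of \pref{subsec:E2angles} that the RHS beam splitter angle is forced to zero; any ambient phase shifters collect into multiples of $2\pi$ and are discharged by (\ref{axiom:p2pi}), while any residual wire permutations are undone via (\ref{axiom:swap}).

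To then derive (\ref{nLOpp:pM}) itself, I would use (\ref{axiom:s0-0d}) in reverse to pass from the 1-wire circuit $P_{\varphi_1}\circ P_{\varphi_2}$ to a 2-wire context, by attaching a $\ket{0}$ source and a $\bra{0}$ detector on an auxiliary wire whose scalar contribution is $1$ and hence does not change the interpretation. In this 2-wire context, introduce trivial beam splitters using (\ref{nLOpp:b0}) in reverse so that the resulting diagram matches the LHS pattern of (\ref{axiom:E2}) with Euler angles $\alpha_1=\alpha_3=0$ and $\alpha_2-\alpha_0=\varphi_1+\varphi_2$. Apply (\ref{axiom:E2}) to rewrite the circuit into its RHS pattern; by the explicit angle formulas of \pref{subsec:E2angles}, the corresponding RHS angles satisfy $\beta_2=0$, so the RHS beam splitter again collapses via (\ref{nLOpp:b0}), the remaining phases on the working wire combine into a single $P_{\varphi_1+\varphi_2}$, and any leftover phases on the auxiliary wire reduce to multiples of $2\pi$ discharged by (\ref{axiom:p2pi}). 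Finally, eliminate the auxiliary wire by re-applying (\ref{axiom:s0-0d}) in the forward direction, recovering $P_{\varphi_1+\varphi_2}$ on a single wire.

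The main obstacle is the angle bookkeeping in the use of (\ref{axiom:E2}): one must verify, using the explicit Euler formulas of \pref{subsec:E2angles}, that the chosen angle specializations really do reduce both sides of (\ref{axiom:E2}) to the intended simple forms after the trivial beam splitters are discharged by (\ref{nLOpp:b0}) and (\ref{axiom:p2pi}). The auxiliary-wire trick via (\ref{axiom:s0-0d}) is precisely what bridges the 1-wire statement (\ref{nLOpp:pM}) and the intrinsically 2-wire Euler decomposition (\ref{axiom:E2}); once this bridge is in place, the remainder of the derivation is routine diagrammatic manipulation.
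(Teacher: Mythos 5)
Your proposal is correct and follows essentially the same route as the paper: the paper likewise notes that (\ref{nLOpp:Rp2pi}), (\ref{nLOpp:swap}), (\ref{nLOpp:E2}) and (\ref{nLOpp:E3}) are verbatim axioms of \love, derives (\ref{nLOpp:b0}) from the PROP rules, (\ref{axiom:swap}) and (\ref{axiom:E2}), and then obtains (\ref{nLOpp:pM}) by exactly your auxiliary-wire trick --- expanding with (\ref{axiom:s0-0d}), (\ref{axiom:p2pi}) and (\ref{nLOpp:b0}), applying (\ref{axiom:E2}), and contracting again with the same three rules. The angle bookkeeping you flag does work out as you expect ($|u_{1,1}|=1$ forces $\beta_2=0$ in the notation of \pref{subsec:E2angles}), so there is no gap.
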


\begin{proof}
    All the rules are present aside from~(\ref{nLOpp:pM}). Note that~\pref{nLOpp:b0} can be derived with the rules of the PROP, and the Equations~(\ref{axiom:swap}) and~(\ref{axiom:E2}), as detailed in~\pref{app:proofcompLOpp}. We can now derive~(\ref{nLOpp:pM}) as follows:

    % \begin{longtable}{rcl}
    %     \tikzfig{LOpp-A-middle}&\eqtroiseqref{axiom:s0-0d}{axiom:p0}{axiom:b0}&\tikzfig{p2pi1}\\\\
    %     &\eqeqref{axiom:E2}&\tikzfig{p2pi2}\\\\
    %     &\eqtroiseqref{axiom:s0-0d}{axiom:p0}{axiom:b0}& \tikzfig{LOpp-A-left}
    % \end{longtable} 
    \begin{longtable}{rcl}
        \tikzfig{LOpp-D-left}&\eqtroiseqref{axiom:s0-0d}{axiom:p2pi}{nLOpp:b0}&\tikzfig{LOpp-D-left1}\\\\
        &\eqeqref{axiom:E2}&\tikzfig{LOpp-D-left2}\\\\
        &\eqtroiseqref{axiom:s0-0d}{axiom:p2pi}{nLOpp:b0}& \tikzfig{LOpp-D-right}
    \end{longtable} 

\end{proof}

\subsection{Derivation of the rewriting system: Lemma~\ref{lem:derivrewrite}}
\label{app:derivrewrite}
Note that all equations including \lopp-circuits of~\pref{fig:rewritelofi} can be derived, as a direct collorary of \pref{lem:derivLOppwithLOfi}. For the other equations, we show that we can derive all equations of \pref{fig:usefuleq}, including the rules (\ref{rewritezerof}), (\ref{rewritezerog}) and (\ref{rewritewire}). The rule (\ref{rewriteremoveg}) is proven in \pref{lem:derivNm}.

\newcommand{\usefuleq}[5][1.0]{
    \begin{equation}  \tag{#3} \label{eq:#3} #2~\begin{array}{rcl} \scalebox{#1}{\tikzfig{#4}}\!\!\!\!\!&=&\!\!\!\!\!\scalebox{#1}{\tikzfig{#5}}\end{array}\end{equation}}
\begin{figure}
    \usefuleq{\qquad}{Rw}{Rwleft2}{Rwright2}
    \usefuleq{\qquad\qquad}{+F}{addortholeft}{addorthoFright}
    \usefuleq{\qquad\qquad}{+G}{addortholeft}{addorthoGright}
    \usefuleq{\qquad\qquad}{h1}{Aioneleft2}{Aioneright2}
    \usefuleq{\qquad\qquad\qquad}{s-md}{s-mdleft}{s-mdright}
    \usefuleq{\qquad\qquad\qquad}{sn-d}{sn-dleft}{sn-dright}

    \caption{\label{fig:usefuleq}Useful equations derived by the \love-calculus as show in~\pref{lem:usefulLOfieq}.~\pref{eq:Rw} is removing one identity wire when there are two, \pref{eq:+F} (resp.\ (\ref{eq:+G})) add any term in the source (resp.\ detector) trivially orthogonal on the last mode, \pref{eq:h1} is the same as \pref{axiom:h2} on one mode, Equations~(\ref{eq:s-md}) and (\ref{eq:sn-d}) are projections.}%Those equations will be useful to prove~\pref{lem:univNF}
\end{figure}

\begin{lemma}[Useful \love-equations]\label{lem:usefulLOfieq}
    We can derive the equations of \pref{fig:usefuleq} with the rules of the \love-calculus.
\end{lemma}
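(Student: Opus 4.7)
The plan is to treat each equation in \pref{fig:usefuleq} separately, using the very expressive axiom (\ref{axiom:h2}) as the main workhorse: since it permits sliding an \emph{arbitrary} linear function $h:\Fv{2}\to\Fv{2}$ across a pair of identity wires, most algebraic identities about finitely supported Fock vectors can be replayed graphically by choosing $h$ appropriately. The axioms (\ref{axiom:s-0d}), (\ref{axiom:s0-d}) and (\ref{axiom:s0-0d}) provide the ability to introduce and remove orthogonal $\ket{0}/\bra{0}$ components, while (\ref{axiom:ss}) and (\ref{axiom:dd}) let us merge sources and detectors. I would derive the six equations in the following order, each one reusing the previous.

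First I would derive (h1) from (h2): introduce a fresh $\ket{0}/\bra{0}$-connected auxiliary wire with (\ref{axiom:s0-0d}), replace the one-mode function $h$ by $h\otimes\mathrm{id}_{\Fv{1}}$, apply (\ref{axiom:h2}) to shift it across, and then collapse the auxiliary wire back with (\ref{axiom:s0-0d}). Next, for (+F) and (+G), I would prove that adding a summand $\vect{f}_{k'}\otimes\ket{k'}$ to a source is sound whenever $\bra{\ell}\ket{k'}=0$ on the connected mode. The strategy is to transport the general detector index $\ell$ down to $0$ so that the existing axiom (\ref{axiom:s-0d}) can be applied directly. Concretely, let $\sigma$ be the involution on $\N$ swapping $0$ and $\ell$ and fixing every other index, and take for $h$ the basis-permutation $\ket{a,b}\mapsto\ket{a,\sigma(b)}$ on $\Fv{2}$. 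Two applications of (\ref{axiom:h2}), with $h$ inserted on one side and $h$ on the other (it is its own inverse), move the detector from $\bra{\ldots,\ell}$ to $\bra{\ldots,0}$, at which point (\ref{axiom:s-0d}) adds the desired orthogonal term, and then $h$ is pushed back. The derivation of (+G) is dual, using (\ref{axiom:s0-d}) instead.

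Equations (s-md) and (sn-d) are then obtained essentially for free: starting from a source $\ket{m}$ on the last mode and a detector $\bra{\ell}$, I would use (+F) (or its complement) to adjust the source so the ``diagonal'' case $\ell=m$ projects to a scalar via (\ref{axiom:s-0d}) and its dual, while the off-diagonal case $\ell\neq m$ produces the zero scalar, which can be identified with the empty diagram using (\ref{axiom:s0-0d}). For (Rw), I would apply (\ref{axiom:h2}) with the Cantor pairing: take $h:\Fv{2}\to\Fv{2}$ defined on the basis by $\ket{k,k'}\mapsto \ket{\Nisoinv_2(k,k'),0}$, which is linear and well-defined on $\Fv{2}$ since it only permutes/embeds basis vectors. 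A symmetric dual map $h'$ acts on the detector side. Sliding $h$ and $h'$ across using (\ref{axiom:h2}) turns the two-wire presentation into a one-wire presentation with an auxiliary $\ket{0}/\bra{0}$ pair, which is then removed by (\ref{axiom:s0-0d}); the sums in the source and detector automatically become the reindexed ones required.

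The hard part will be (+F)/(+G): the chosen $h$ must act cleanly on the full $\Fv{2}$ (not just the basis vectors involved in the source/detector at hand), and I need to check that the two applications of (\ref{axiom:h2}) leave no residual structure on the circuit beyond the desired added term. The other derivations are largely mechanical once (h1), (+F) and (+G) are in place, and the use of (\ref{axiom:ss}), (\ref{axiom:dd}) merges whatever residual sources and detectors are produced during the shifts back into the single source/detector form required by the statements.
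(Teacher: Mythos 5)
Your derivations of (\ref{eq:h1}) and (\ref{eq:Rw}) match the paper's: extend the one-mode map to two modes over an auxiliary $\ket{0}$/$\bra{0}$ wire introduced by (\ref{axiom:s0-0d}), respectively reindex through a pairing bijection, and slide the resulting map across with (\ref{axiom:h2}). The gap is in (\ref{eq:+F}) and (\ref{eq:+G}), which you yourself flag as the hard part. Your plan is to conjugate the last mode by the transposition $\sigma$ exchanging $0$ and $k'$ and then invoke (\ref{axiom:s-0d}). But (\ref{axiom:s-0d}) only applies when the effect sitting on the relevant mode is the \emph{single basis effect} $\bra{0}$: read right-to-left it appends terms $\vect{f}_k\otimes\ket{k}$ with $k\neq 0$ to a source whose last mode is plugged into $\bra{0}$. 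After your permutation the detector on that mode is still a general sum $\sum_{\ell}\vect{g}_\ell\otimes\bra{\sigma(\ell)}$ that merely \emph{lacks} a $\bra{0}$ component, and the term you want to append now sits at index $0$ rather than at an index $\neq 0$. No permutation of $\N$ turns a multi-term detector into the singleton $\bra{0}$, so the configuration you reach is never an instance of (\ref{axiom:s-0d}); indeed, ``append a $\ket{k'}$ term to the source when the detector misses $\bra{k'}$'' for a general detector is exactly the statement (\ref{eq:+F}) you are trying to prove.

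The missing idea is to manufacture the orthogonality on a \emph{fresh} mode rather than on the existing one. The paper first uses (\ref{axiom:s0-0d}) together with (\ref{axiom:ss}) to adjoin an auxiliary wire carrying exactly $\ket{0}\cdots\bra{0}$, then applies (\ref{axiom:h2}) with the self-adjoint involution $h$ that swaps $\ket{\tilde{\ell},0}$ and $\ket{\tilde{\ell},1}$ on the pair (last original mode, auxiliary mode) and fixes all other basis states. For (\ref{eq:+F}) this parks the to-be-added source summand on the $\ket{1}$-component of the auxiliary mode, whose detector genuinely is the single effect $\bra{0}$, so (\ref{axiom:s-0d}) applies; for (\ref{eq:+G}) the new detector summand dually lands on $\bra{1}$ against the auxiliary source $\ket{0}$ and (\ref{axiom:s0-d}) applies. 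Undoing $h$ and discharging the auxiliary wire completes the derivation. Separately, your sketch of (\ref{eq:s-md}) and (\ref{eq:sn-d}) leans on the broken (\ref{eq:+F})/(\ref{eq:+G}) step; the paper instead derives them directly from (\ref{eq:Rw}), an application of (\ref{eq:h1}) with $\tilde{h}(\ket{0})=\tilde{\vect{f}}$, and (\ref{axiom:s0-d}), which is the cleaner route once (\ref{eq:h1}) and (\ref{eq:Rw}) are in place.
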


\begin{proof}
    We consider a linear function $h:\Fv{2} \rightarrow \Fv{2}$ such that for every $k\in\N$, $h(\ket{k,0})=\tilde{h}(\ket{k})$ and $(\bra{k,0})h=(\bra{k})\tilde{h}$.
    \begin{longtable}{rcl}
        \tikzfig{Aioneleft2}&\eqeqref{axiom:s0-0d}&\tikzfig{Aioneleft2dec1}\\\\
        &\eqeqref{axiom:ss}&\tikzfig{Aioneleft2dec2}\\\\
        &=&\tikzfig{Aioneleft2dec3}\\\\
        &\eqeqref{axiom:h2}&\tikzfig{Aioneleft2dec4}\\\\
        &=&\tikzfig{Aioneleft2dec5}\\\\
        &\eqeqref{axiom:ss}&\tikzfig{Aioneleft2dec6}\\\\
        &\eqeqref{axiom:s0-0d}&\tikzfig{Aioneright2}\\\\

    \end{longtable}

    We consider a linear function $h:\Fv{2} \rightarrow \Fv{2}$ and a finite set $\setindex{K}\subset \N$ such that for every $k\in\setindex{K}$, $h(\ket{k,0})=\ket{k_1,k_2}$. 
    \begin{longtable}{rcl}
        \tikzfig{Rwleft2}&=&\tikzfig{Rwleft2dec1}\\\\
        &\eqeqref{axiom:h2}&\tikzfig{Rwleft2dec2}\\\\
        &\eqeqref{axiom:ss}&\tikzfig{Rwleft2dec3}\\\\
        &\eqeqref{axiom:s0-d}&\tikzfig{Rwright2}
    \end{longtable}

We consider a linear function $h:\Fv{2} \rightarrow \Fv{2}$ such that $h=h^{\dagger}$, $h$ is swapping the states $\ket{\tilde{\ell},0}$ and $\ket{\tilde{\ell},1}$ and is the identity everywhere else. We have:
\begin{longtable}{rcl}
    \tikzfig{addorthoGright}&\eqdeuxeqref{axiom:s0-0d}{axiom:ss}&\tikzfig{addorthoright1}\\\\
    &=&\tikzfig{addorthoright2}\\\\
    &\eqeqref{axiom:h2}&\tikzfig{addorthoright3}\\\\
    &=&\tikzfig{addorthoright4}\\\\
    &\eqeqref{axiom:ss}&\tikzfig{addorthoright5}\\\\
    &\eqeqref{axiom:s0-d}&\tikzfig{addortholeft}
\end{longtable}

Similarly and symmetrically, we can derive \pref{eq:+F}.

We consider a linear map $\tilde{h}:\Fv{1}\rightarrow\Fv{1}$ such that $\tilde{h}(\ket{0})=\tilde{\vect{f}}$. We have:

\begin{longtable}{rcl}
    \tikzfig{s-mdleft}&\eqeqref{eq:Rw}&\tikzfig{s-mdleft1}\\\\
    &=&\tikzfig{s-mdleft2}\\\\
    &\eqeqref{eq:h1}&\tikzfig{s-mdleft3}\\\\
    &\eqeqref{axiom:s0-d}& \tikzfig{s-mdright} \\\\
\end{longtable}

Similarly and symmetrically, we can derive \pref{eq:sn-d}.

\end{proof}

\begin{lemma}\label{lem:derivNm}
    We can derive the equation~(\ref{rewriteremoveg}) in the \love-calculus:
    \begin{equation*}
        \begin{array}{rcl}
            \tikzfig{rewriteremoveg-sum-left}&=& \tikzfig{rewriteremoveg-sum-right}\\
        \end{array}
    \end{equation*}
\end{lemma}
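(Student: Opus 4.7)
The equation transfers coefficient data from the detector side of an identity wire to the source side. The semantic content is the Fock-basis orthogonality $\langle \ell | k \rangle = \delta_{\ell,k}$: contracting a source $\vect{f}=\sum_k \vect{f}_k \otimes \ket{k}$ with a detector slot $\sum_\ell \xi_\ell \bra{\mathcal{N}_\mtilde(L)}\bra{\ell}$ across the connecting wire produces the effective state $\left(\sum_k \xi_k \vect{f}_k\right)\otimes \bra{\mathcal{N}_\mtilde(L)}$, which also arises from contracting the modified source $\left(\sum_i \xi_i \vect{f}_i\right)\otimes \ket{L}$ with the simplified slot $\bra{\mathcal{N}_\mtilde(L)}\bra{L}$. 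The derivation must realise this relabelling using the axioms, and the natural tool is the operator-shift equation.

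\textbf{Step 1.} First I would apply (\ref{axiom:dd}) in reverse to separate the targeted detector slot $\sum_\ell \xi_\ell \bra{\mathcal{N}_\mtilde(L)}\bra{\ell}$ from the rest of the detector, and dually apply (\ref{axiom:ss}) in reverse where necessary, so that the transformation is localised to the source, the identity wire, and the isolated slot. The other detector slots, the remaining source fibres, and the surrounding circuitry then remain syntactically common to both sides of (\ref{rewriteremoveg}).

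\textbf{Step 2.} The crux is to construct a linear map $\tilde h \colon \Fv{1}\to \Fv{1}$ by $\ket{k}\mapsto \xi_k \ket{L}$, which is well-defined because $\{\xi_\ell\}$ has finite support. Using the single-mode operator-shift equation (\ref{eq:h1}) (and, where two modes are involved, the axiom (\ref{axiom:h2})), I would insert the corresponding two-mode extension $h$ on the identity wire and move it back and forth. On the detector side, the adjoint of $\tilde h$ converts $\bra{\mathcal{N}_\mtilde(L)}\bra{L}$ into $\sum_\ell \xi_\ell \bra{\mathcal{N}_\mtilde(L)}\bra{\ell}$ (matching the LHS form), while on the source side, $\tilde h$ converts the $L$-th fibre $\vect{f}_L\otimes\ket{L}$ of the source into $\left(\sum_i \xi_i \vect{f}_i\right)\otimes\ket{L}$ after repackaging with (\ref{axiom:ss}). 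Combining the two directions of the shift in (\ref{eq:h1}) yields the desired equality between the isolated sub-circuits.

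\textbf{Step 3.} Finally I would reassemble the slot with the rest of the detector using (\ref{axiom:dd}) and regroup the source using (\ref{axiom:ss}), obtaining the RHS of (\ref{rewriteremoveg}).

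\textbf{Main obstacle.} The delicate point is step 2: one must verify that $\tilde h$ and its adjoint, acting on finitely-supported Fock states, genuinely implement the claimed transformations on both sides of the wire, and that the two-mode extension demanded by (\ref{axiom:h2}) / (\ref{eq:h1}) can be supplied without interacting with the other auxiliary mode in an unwanted way. A further subtlety is that the new $L$-th fibre $\sum_i \xi_i \vect{f}_i$ syntactically references all original fibres; careful use of (\ref{axiom:ss}) is needed so that after the shift, the unchanged fibres (for $k\neq L$) are recovered alongside the rewritten one, rather than replaced. If this direct path proves too fragile, an alternative route is to peel off the coefficients $\xi_\ell$ one index at a time, using (\ref{eq:+F}) to add orthogonal source fibres, (\ref{axiom:s0-d}) together with (\ref{eq:s-md})/(\ref{eq:sn-d}) to contract away non-matching pairings, and finally (\ref{axiom:ss}) and (\ref{axiom:dd}) to recombine.
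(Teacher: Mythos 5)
You have correctly identified the paper's mechanism: the equation is realised by constructing a single-mode linear map $\tilde h$ on the identity wire and shifting it from the detector side to the source side with the operator-shift equation~(\ref{eq:h1}). However, as written the proposal has two genuine gaps. First, Step~1 cannot be carried out: the detector of~(\ref{rewriteremoveg}) is a \emph{sum} of effects $\sum_{\ell}\xi_{\ell}\bra{\Niso_{\mtilde}(L)}\bra{\ell}+\sum_{\ell\neq L}\bra{g_{\ell}}\bra{\ell}$ on the same $\mtilde+1$ wires, not a tensor product of effects on disjoint wires, so~(\ref{axiom:dd}) read in reverse cannot ``isolate the targeted slot''; the argument must be run on the whole entangled detector and source at once. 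Second, and more importantly, the map you propose, $\tilde h\colon\ket{k}\mapsto\xi_k\ket{L}$, is the wrong map. Shifting it across the wire sends every detector slot $\bra{g_{\ell}}\bra{\ell}$ with $\ell\neq L$ to zero (since $\bra{\ell}\tilde h=0$ for $\ell\neq L$) and collapses the entire source to $\left(\sum_k\xi_k\ket{f_k}\right)\ket{L}$, destroying the fibres $\ket{f_k}\ket{k}$ with $k\neq L$ that must survive on the right-hand side. You flag exactly this difficulty in your ``main obstacle'' paragraph but do not resolve it, and the alternative ``peel off one coefficient at a time'' route is left undeveloped.

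The fix is to choose $\tilde h$ so that it is \emph{invisible} on everything except the $L$-component: take $\tilde h\colon\Fv{1}\rightarrow\Fv{1}$ with $\bra{L}\tilde h=\sum_{\ell}\xi_{\ell}\bra{\ell}$ and $\bra{j}\tilde h\ket{i}=\delta_{ij}$ for $i,j\in(\setindex{K}\cup\setindex{L})\setminus\{L\}$ (and zero elsewhere), i.e.\ $\tilde h\ket{k}=\ket{k}+\xi_k\ket{L}$ for $k\neq L$ and $\tilde h\ket{L}=\xi_L\ket{L}$. With this choice the composite $\vect{g}'\circ(\mathrm{id}\otimes\tilde h)$ of the \emph{right-hand-side} detector with $\tilde h$ reproduces the left-hand-side detector term by term (the slots with $\ell\neq L$ are untouched because $\bra{\ell}\tilde h=\bra{\ell}$ there), and pushing $\tilde h$ through~(\ref{eq:h1}) onto the source yields exactly $\left(\sum_{i\in\setindex{K}}\xi_i\ket{f_i}\right)\ket{L}+\sum_{k\neq L}\ket{f_k}\ket{k}$, with no separate bookkeeping needed to recover the unchanged fibres. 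This is the paper's proof; your version needs this corrected $\tilde h$ before the rest of the argument goes through.
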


\begin{proof}
    Let $\bra{\psi_L}=\sum_{\ell\in\setindex{L}}\xi_{\ell}\bra{\ell}$. We have $\vect{g}=\bra{\Niso_{\mtilde}(L)}\bra{\psi_{L}}+ \sum_{\ell\in\setindex{L}\setminus\{L\}}\bra{g_{\ell}}\bra{\ell}$. 
Let $\tilde{h}:\Fv{1}\rightarrow\Fv{1}$ be a linear function which is the identity on $\ket{i}$ for every $i\in(\setindex{K}\cup\setindex{L})\setminus \{L\}$, such that $\bra{L}\tilde{h}=\bra{\psi_L}$, and zero elsewhere. We can check that $\tilde{h}\ket{k}=\ket{k}+\xi_{k}\ket{L}$ for $k\neq L$, and $\tilde{h}\ket{L}=\xi_L\ket{L}$. We have:

\[\begin{array}{rcl}
    \vect{g}&=&\bra{\Niso_{\mtilde}(\ell)}\bra{\psi_{L}}+ \sum_{\ell\in\setindex{L}\setminus\{L\}}\bra{g_{\ell}}\bra{\ell} \\ 
    &=& \bra{\Niso_{\mtilde}(L)}\bra{L}\tilde{h} + \sum_{\ell\in\setindex{L}\setminus\{L\}}\bra{g_{\ell}}\bra{\ell}\tilde{h}\\ 
\end{array}\]

The linear function $\tilde{h}$ can therefore be removed with the equation~(\ref{eq:h1}), leading to:

\[\begin{array}{rcl}
    \vect{f}&=& \ket{f_L}\tilde{h}(\ket{L}) + \sum_{k\neq L} \ket{f_k}\tilde{h}(\ket{k}) \\ 
    &=&  \xi_L\ket{f_L}\ket{L}+\sum_{k\neq L}\ket{f_k}(\ket{k}+\xi_k\ket{L})\\ 
    &=& \left(\sum_{i\in\setindex{K}}\xi_i \ket{f_i}\right)\ket{L}+\sum_{k\neq L} \ket{f_k}\ket{k}
\end{array}\]

\end{proof}

\subsection{Strongly normalising: Proof of~Lemma~\ref{lem:strongly}}\label{app:strongly}

Given a circuit $C$ generated by the PRO version of the \love-circuits, we define $(x_1,x_2,x_3,\dots,x_{6})$ as:\begin{itemize}
    \item $x_1=\sum\limits_{i=0}^{n-2}(n-i)c(i)$ where $c(i)$ is the number of beam splitters in $C$ between the position $i$ and $i+1$.
    \item $x_2$ is the sum of the number of beam splitters with angles not in $[0,\frac{\pi}{2}]$, with those not in $[0,\pi)$ and those not in $[0,2\pi)$. For instance, a beam splitter with an angle of $3\pi$ will be counted three times, and those with $\frac{3}{2}\pi$ will be counted twice.
    \item To define $x_3$, let us define the depth of a phase shifter $p$ of $C$, denoted $d(p)$, as the maximal number of beam splitters that a photon starting from $p$ and going to the right would be able to traverse before reaching an output port, if it were allowed to choose each time whether to be reflected or
    transmitted. Then $e\coloneq \sum\limits_{\text{p phase shifter in }C}w(p)9^{d(p)}$ where, given a phase shifter $p=\tikzfig{justphase}$ of $C$, we define $w(p)$ as:
    
    $w(p)\coloneq\left\{
        \begin{array}{l}
        4\ \text{if $p$ belongs to a pattern of the form } \tikzfig{x5pattern} \\
        3 \text{if $p$ does not belong to such a pattern and if }\varphi\notin[0,2\pi) \\
        2 \text{in other case.}
        \end{array}
        \right.$
        \item $x_4$ is the number of identity wire connecting the sources and detectors.
        \item $x_5$ is the number of sources and detectors in $C$,
        \item to define $x_6$ let write in the Fock basis the generic terms in the sources as $\vect{f}=\sum\limits_{k_1,\dots,k_{\ntilde+1}}\alpha_{k_1,\dots,k_{\ntilde+1}}\ket{k_1,\dots,k_{\ntilde+1}}$ and in the detectors as $\vect{g}=\sum\limits_{\ell_1,\dots,\ell_{\mtilde+1}}\beta_{\ell_1,\dots,\ell_{\mtilde+1}}\ket{\ell_1,\dots,\ell_{\mtilde+1}}$. Then $x_6\coloneq\sum\limits_{\vect{f}\in\text{sources}}C_1(\vect{f})+\sum\limits_{\vect{g}\in\text{detectors}}\left(2C_2(\vect{g})-C_3(\vect{g})\right)$, with $C_1(\vect{f})\coloneq\#\{\alpha_{k_1,\dots,k_{\ntilde+1}}\neq 0\}, C_2(\vect{g})\coloneq\#\{\beta_{\ell_1,\dots,\ell_{\mtilde+1}}\neq 0\}$, and $C_3(\vect{g})\coloneq\#\{\beta_{\Niso_{\mtilde}(L),L}= 1, L\in \mathbb{N}\}$.
\end{itemize} 

We can highlight that $(x_1,x_2,x_3)$ are mainly inspired from~\cite{clement2022lov}. We conclude the proof by noticing that:
\begin{itemize}
    \item Rule~\ref{rewritephasemod2pi} strictly decreases $x_3$ without changing previous components of the tuple.
    \item Rule~\ref{rewritebsmod2pi} strictly decreases $x_2$ without changing $x_1$.
    \item Rule~\ref{rewritefusionphaseshifts} strictly decreases $x_3$ without changing previous components of the tuple.
    \item Rule~\ref{rewritezerophaseshifts} strictly decreases $x_3$ without changing previous components of the tuple.
    \item Rule~\ref{rewritezerobs} strictly decreases $x_1$.
    \item Rule~\ref{rewritetopphase} strictly decreases $x_3$ without changing previous components of the tuple.
    \item Rule~\ref{rewritepisur2} strictly decreases $x_3$ without changing previous components of the tuple.
    \item Rule~\ref{rewritethetabs} strictly decreases $x_2$ without changing $x_1$.
    \item Rule~\ref{rewriteminuspi} strictly decreases $x_2$ without changing $x_1$.
    \item Rule~\ref{rewriteE3} strictly decreases $x_1$.
    \item Rule~\ref{rewriteE2} strictly decreases $x_1$.
    \item Rule~\ref{rewritezerof} strictly decreases $C_1$ without changing anything else, therefore strictly decreasing $x_6$. 
    \item Rule~\ref{rewritezerog} strictly decreases $2C_2-C_3$, without changing anything else. Indeed, if we remove a coefficient where $\beta_{\Niso_{\mtilde}(L),L}=1$, then $-C_3$ increases by $1$ but $2C_2$ decreases by two. Therefore, $2C_2-C_3$ strictly decreases and $x_6$ is strictly decreasing.
    \item Rule~\ref{rewriteremoveg} can change the three values of $C_1,C_2$ and $C_3$, but still strictly decreases $x_6$. Let us consider the two cases: $(\xi_L\neq 1) \wedge (\forall \ell\neq L, \xi_{\ell}=0)$ and~$(\xi_L= 1) \wedge (\exists \ell\neq L, \xi_{\ell}=0)$. The first case doesn't change $C_1$ and $C_2$, but the term $-C_3$ strictly decreases by $1$. The second case doesn't change $C_3$, and the increase of $C_1$, i.e.\ the amount of new terms in $\vect{f}$, is bounded by $\#\{\xi_i\neq 0,i\neq L\}$, the number of terms removed in $\vect{g}$, which is the exact decrease of $C_2$. Therefore, $C_1+2C_2$ decreases by at least $\#\{\xi_i\neq 0,i\neq L\}>0$. We can check that the rule~\ref{rewritezerog} is not changing any other component of the tuple, and can conclude it only strictly decreases $x_6$.
    \item Rule~\ref{rewritewire} strictly decreases $x_4$ without changing previous components of the tuple.
    \item The rules coming from the axioms~\ref{axiom:ss} and~\ref{axiom:dd} strictly decrease $x_5$ without changing the previous components.
    \item The rules coming from the axioms~\ref{axiom:s-p} and~\ref{axiom:p-d} strictly decrease $x_3$.
    \item The rules coming from the axioms~\ref{axiom:s-b} and~\ref{axiom:b-d} strictly decrease $x_1$.
\end{itemize}

\subsection{Rewriting to the zero normal form: Proof of Remark~\ref{rem:NFzero}}\label{app:NFzeroproof}

In the normal form of \pref{def:NF}, it is possible that $\K=\emptyset$, meaning the semantics of the all circuit is the null function. Note the zero vector on $\Fv{2}$ can be written as $0=0 \otimes \ket{0}$ and the zero vector on $\Fvdual{2}$ as $0=0 \otimes \bra{1}$. We have:

\begin{longtable}{rcl}
    &&\tikzfig{NFzero1}\\\\
    &=&\tikzfig{NFzero2}\\\\
    &\eqdeuxeqref{axiom:ss}{axiom:dd}&\tikzfig{NFzero3}\\\\
    &$\overset{PROP,(\ref{axiom:zero})}{=}$&\tikzfig{NFzero4}\\\\
    &\eqdeuxeqref{axiom:s-p}{axiom:s-b}&\tikzfig{NFzero5}\\\\
    &\eqdeuxeqref{axiom:ss}{axiom:dd}&\tikzfig{NFzero6}\\\\
    &=&\tikzfig{NFzero7}\\\\
    &\eqdeuxeqref{eq:sn-d}{eq:s-md}&   \tikzfig{NFzero8}\\\\
    \\\\

\end{longtable}

\section{Uniqueness of the normal forms of Section~\ref{sec:completeness}}
\label{app:uniqfullproof}
\subsection{Uniqueness of $T$ in the normal form: Lemma~\ref{lem:uniqT}}
\label{app:uniqT}

\begin{proposition}\label{prop:bspolynom}
    Let $\interpsone{\Ntriangle}=\interpsone{\tikzfig{bs-ps}}=\begin{pmatrix} e^{i\varphi}\costheta  & ie^{i\varphi}\sintheta \\ i\sintheta & \costheta \end{pmatrix}=\begin{pmatrix} ae^{i\varphi} & be^{i\varphi} \\ b & a \end{pmatrix}$. Let $k,\ell,n\in\N$, with $n>>k,\ell$ and $N=n+k-\ell$. For $a\neq 0$, we have:
    \[ \begin{array}{rcl}\bra{N,\ell}\interps{\tikzfig{bs-ps}}\ket{n,k}&=&e^{iN\varphi}a^N\sqrt{\frac{N!}{n!}} Q^{\ell}_k(n) \end{array}\] 
    where $Q^{\ell}_k$ is a real polynom of degree $\ell$.
\end{proposition}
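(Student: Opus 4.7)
The proof will be a direct unfolding of Definition~\ref{def:semmany} followed by a bookkeeping change of summation variable. Since the diagram $\tikzfig{bs-ps}$ factors as a beam splitter followed by a phase shifter on the top output and $\bra{N,\ell}$ has $N$ photons on that mode, the phase shifter contributes the global scalar $e^{iN\varphi}$, so it suffices to compute $\bra{N,\ell}B_\theta\ket{n,k}$ and multiply by $e^{iN\varphi}$ at the end.

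Applying Definition~\ref{def:semmany} to $B_\theta\ket{n,k}$ and projecting on $\ket{N,\ell}$ fixes $\ell_1 = N$ in the outer sum, while the inner sum is parametrised by $p$ with $q = N - p$ and $\delta = 2p - N$. Writing $a = \cos\theta$ and $b = i\sin\theta$, this gives
\[
\bra{N,\ell}B_\theta\ket{n,k} \;=\; \sqrt{\tfrac{N!\,\ell!}{n!\,k!}} \sum_{p} \binom{n}{p}\binom{k}{N-p}\, a^{k+2p-N}\, b^{n+N-2p}.
\]
I would then substitute $m := n - p$. The identity $\binom{n}{n-m} = \binom{n}{m}$ trades the $n$-dependence of the large binomial for one indexed by the small index $m$, and using $N = n+k-\ell$ we get $\binom{k}{N-p} = \binom{k}{\ell - m}$. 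The non-vanishing range becomes $\max(0,\ell-k) \le m \le \ell$, which is independent of $n$ as soon as $n \gg k,\ell$. The exponents reduce to $a^{n+\ell-2m}$ and $b^{2m+k-\ell}$; writing $a^{n+\ell-2m} = a^{N}\,a^{2\ell-k-2m}$ and splitting $\sqrt{N!\ell!/(n!k!)} = \sqrt{N!/n!}\,\sqrt{\ell!/k!}$ lets us factor out $a^N\sqrt{N!/n!}$, yielding
\[
Q^\ell_k(n) \;=\; \sqrt{\tfrac{\ell!}{k!}} \sum_{m = \max(0,\ell-k)}^{\ell} \binom{n}{m}\binom{k}{\ell-m}\, a^{2\ell-k-2m}\, b^{2m+k-\ell}.
\]

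It remains to check that $Q^\ell_k$ is a polynomial in $n$ of degree $\ell$. Each $\binom{n}{m} = n(n-1)\cdots(n-m+1)/m!$ is a polynomial of degree exactly $m$ in $n$, and every other factor in the sum is $n$-independent, so $Q^\ell_k$ has degree at most $\ell$. The top-degree contribution comes from $m = \ell$, namely $\binom{n}{\ell}\cdot\binom{k}{0}\cdot a^{-k}\,b^{k+\ell}$: its coefficient on $n^\ell$ is $a^{-k}b^{k+\ell}/\sqrt{k!\,\ell!}$, nonzero under the standing hypothesis $a \neq 0$ (together with the implicit non-degeneracy $b \neq 0$ of the beam splitter; otherwise $B_\theta$ is trivial and the statement is vacuous for $\ell > 0$).

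The only real obstacle is bookkeeping: one must check that the change of variable $m := n-p$ exchanges the $n$-dependence of $\binom{n}{p}$ for bounds on $m$ that are independent of $n$, so that $Q^\ell_k(n)$ is genuinely a polynomial and not just an asymptotic expression. The stable bounds $\max(0,\ell-k) \le m \le \ell$ are automatic as soon as $n \ge \ell$, which is precisely the regime $n \gg k,\ell$ assumed in the statement.
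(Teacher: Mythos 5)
Your proof is correct and follows essentially the same route as the paper's: unfold Definition~\ref{def:semmany}, re-index the inner sum so that the $n$-dependent binomial has a lower index bounded independently of $n$ (your $m=n-p$ is the paper's index $q$ shifted by $\ell-k$), factor out $e^{iN\varphi}a^N\sqrt{N!/n!}$, and read off the degree from the top surviving term. Your explicit leading coefficient and the remark that exact degree $\ell$ additionally needs $\sin\theta\neq 0$ (guaranteed in the paper's application, where $\theta\in(0,\tfrac{\pi}{2}]$ for $\Tmn$-circuits) are accurate refinements of what the paper states.
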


\begin{proof} It directly follows from \pref{def:semmany}, by noting that $n-N=\ell-k$:
\[ \begin{array}{rcl}&&\bra{N,\ell}\interps{\tikzfig{bs-ps}}\ket{n,k}\\\\
    &=& e^{iN\varphi}\sqrt{\frac{N!\ell!}{n!k!}} \sum\limits_{\substack{p+q=N\\ \delta=p-q}}{\binom{n}{p}}{\binom{k}{q}}a^{k+\delta}b^{n-\delta} \ket{\ell_1,\ell_2}\\
    &=&e^{iN\varphi}\sqrt{\frac{N!\ell!}{n!k!}}\sum\limits_{q=0}^{k}{\binom{n}{N-q}}{\binom{k}{q}}a^{k+(N-q)-q}b^{n-((N-q)-q)}\\
    &=&e^{iN\varphi}\sqrt{\frac{N!\ell!}{n!k!}}\sum\limits_{q=0}^{k}{\binom{n}{n-N+q}}{\binom{k}{q}}a^{N+k-2q}b^{n-N+2q}\\
    &=&e^{iN\varphi}\sqrt{\frac{N!\ell!}{n!k!}}\sum\limits_{q=0}^{k}{\binom{n}{\ell-k+q}}{\binom{k}{q}}a^{N+k-2q}b^{(\ell-k)+2q}\\
     &=&e^{iN\varphi}a^N\sqrt{\frac{N!}{n!}}\left(\sqrt{\frac{\ell!}{k!}} \sum\limits_{q=0}^{k}\frac{n(n-1)\dots(n-\ell+k-q)}{(\ell-k+q)!}\binom{k}{q} a^{k-2q}b^{(\ell-k)+2q}\right)
     \\ &=&e^{iN\varphi}a^N\sqrt{\frac{N!}{n!}} Q^{\ell}_k(n)\end{array}\]

    \end{proof}

\paragraph*{Base case of \pref{lem:uniqT}}

\label{app:proofuniqT}
\begin{proof}
    Let $\interpsone{\Ntriangle}=\interpsone{\tikzfig{bs-ps}}=\begin{pmatrix} ae^{i\varphi} & be^{i\varphi} \\ b & a \end{pmatrix}$ and $\interpsone{\Ntriangle'}=\interpsone{\tikzfig{bs-ps-prime}}=\begin{pmatrix} a'e^{i\varphi'} & b'e^{i\varphi'} \\ b' & a' \end{pmatrix}$ with $0< a\leq 1$ and $0< a'\leq 1$ as\footnote{
        Note the case $a=0$ is not considered as $\Ntriangle$ or $\Ntriangle'$ would be a phase tensored the identity, and therefore wouldn't be a $\Tmn$-circuit.} $\theta,\theta'\in(0,\frac{\pi}{2}]$. Let $W=\sum\limits_{k,\ell}\omega_{k,\ell}\Omega^{k,\ell}(\Ntriangle)=\sum\limits_{k,\ell}\omega_{k,\ell}'\Omega^{k,\ell}(\Ntriangle')=W'$. Without loss of generality, let us consider that $0<a\leq a'\leq 1$. By hypothesis, there exists one $\omega_{k,\ell}'\neq 0$.  Let us consider the maximum $\ell_{max}$ such that there exists $k_m$ with $\omega_{k_m,\ell_{max}}\neq 0$. We have that for every $n>>\ell_{max},k$ with $N=k_m-\ell_{max}+n$:
\[ \begin{array}{crcl} 
    & \bra{N}W\ket{n} &=& \bra{N}W'\ket{n} \\
    \Leftrightarrow &e^{iN\varphi}a^N\sqrt{\frac{N!}{n!}} \left(\sum\limits_{k-\ell=k_m-\ell_{max}}\omega_{k,\ell} Q^{\ell}_k(n) \right) &=& e^{iN\varphi'}a'^N\sqrt{\frac{N!}{n!}} \left(\sum\limits_{k-\ell=k_m-\ell_{max}}\omega_{k,\ell}' Q'^{\ell}_k(n) \right) \\
 \end{array} \]
where $Q^{\ell}_k$ and $Q'^{\ell}_k$ are real polynomials of degree $\ell$ (\pref{prop:bspolynom}). Therefore for every such $n,N$ we have:

\[\begin{array}{rcl} 
    e^{iN(\varphi-\varphi')}\left(\frac{a}{a'}\right)^N\left(\sum\limits_{k-\ell=k_m-\ell_{max}}\omega_{k,\ell} Q^{\ell}_k(n) \right)=\left(\sum\limits_{k-\ell=k_m-\ell_{max}}\omega_{k,\ell}' Q'^{\ell}_k(n) \right)
\end{array} \]

As $\omega_{k_m,\ell_{max}}\neq 0$, the RHS is a nonzero complex polynomial of degree $\ell_{max}$ evaluated in $n$. Note the limit of the absolute value of the RHS is $\infty$ for $n\rightarrow\infty$. Then the LHS has necessarily a nonzero $\omega_{k,\ell}$ and $a'\leq a$, otherwise the limit of the absolute value of the LHS is $0$. Therefore, $a=a'$, thus $\cos(\theta)=\cos(\theta')$ and $\theta=\theta'$ as they are in $[0,\frac{\pi}{2}]$.

As $\theta=\theta'$, we have $a=a'$ and $b=b'$, thus $Q^{\ell}_k=Q'^{\ell}_k$ and for every $N>>k_m,\ell_{max}$ with $n=\ell_{max}-k_m+N$:

\[\begin{array}{rcl} 
    \sum\limits_{k-\ell=k_m-\ell_{max}}\left(e^{iN(\varphi-\varphi')}\omega_{k,\ell}-\omega_{k,\ell}'\right) Q^{\ell}_k(n) &=&0
\end{array} \]

Necessarily, there exists $N_0$ such that for every $N\geq N_0$, we have $e^{iN(\varphi-\varphi')}\omega_{k_m,\ell_{max}}=\omega_{k_m,\ell_{max}}'\neq 0$. Otherwise, the absolute value of the LHS doesn't converge to $0$ and therefore can't be equal to $0$ for every $N$. As it is true for every $N\geq N_0$, is it in particular true for $N=N_0$ and $N=N_0+1$, thus $e^{i(\varphi-\varphi')}=1$, thus $\varphi=\varphi'$.
\end{proof}

\paragraph*{Induction of \pref{lem:uniqT}}

For the induction step, we add arbitrary nonzero inputs on the first $k$ (resp.\ last $k'$) wires and arbitrary nonzero $\ell$ (resp.\ last $\ell'$) outputs on $\Ntriangleone$ and $\Ntriangletwo$. That projects both sides of the equation on a strictly smaller space. By the induction hypothesis, we can deduce an equality between smaller submatrices, i.e.\ $\submatrixg{\interpsone{\Ntriangleone}}{k}{k'}{\ell}{\ell'}=\submatrixg{\interpsone{\Ntriangletwo}}{k}{k'}{\ell}{\ell'}$. By taking complementary subspaces, we recover that $\submatrix{\interpsone{\Ntriangleone}}{m}{n}=\submatrix{\interpsone{\Ntriangletwo}}{m}{n}$. We conclude that $\Ntriangleone=\Ntriangletwo$ with \pref{prop:uniqSubT}.

\subsection{Linear independence of the $\Omega$-morphisms}

\begin{lemma}[Linear independence of $\Omega(\Ttilde)$]\label{lem:uniqOmega}
    For any $\Tmn$-circuit  $\Ntriangle:n+\ntilde\rightarrow m+\mtilde$, if
    $\sum_{\vect{u},\vect{v}}\omegav{u}{v}\Omegav{u}{v}(\Ntriangle)=\sum_{\vect{u},\vect{v}}\omegav{u}{v}'\Omegav{u}{v}(\Ntriangle)$ then  and $\omegav{u}{v}=\omegav{u}{v}'$ for every $\vect{u},\vect{v} \in \setindex{U},\setindex{V}$.
\end{lemma}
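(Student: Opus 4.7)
The proof is by contrapositive. Setting $c_{\vect{u},\vect{v}} \coloneqq \omega_{\vect{u},\vect{v}} - \omega'_{\vect{u},\vect{v}}$, the statement reduces to showing that $\sum_{\vect{u},\vect{v}} c_{\vect{u},\vect{v}} \Omegav{u}{v}(\Ntriangle) = 0$ forces $c_{\vect{u},\vect{v}} = 0$ for all indices. Following the strategy announced in \pref{subsec:uniqueNF}, the plan is to transport this to a linear independence statement for the $\Delta$-morphisms on an associated $\Tmnrec$-circuit $\Ndiamond$, where the threshold property of \pref{prop:deltathreshold} becomes directly applicable.

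The first step is to establish linear independence of the family $\{\Deltav{u}{v}(\Ndiamond)\}$ when $\Ndiamond$ is a $\Tmnrec$-circuit. Suppose $\sum_{\vect{u},\vect{v}} d_{\vect{u},\vect{v}} \Deltav{u}{v}(\Ndiamond) = 0$. Pick $\vect{v}_0$ minimal under the reverse lexicographic order $\lrev$ among the $\vect{v}$'s appearing with nonzero coefficient, and then $\vect{u}_0$ minimal under $\lrev$ among the $\vect{u}$'s with $d_{\vect{u},\vect{v}_0} \neq 0$. Evaluating the matrix element $\bra{\vect{u}_0}(\cdot)\ket{\vect{v}_0}$ of the equation: for any $(\vect{u},\vect{v}) \neq (\vect{u}_0,\vect{v}_0)$ with $d_{\vect{u},\vect{v}} \neq 0$, one has either $\vect{v}_0 \lrev \vect{v}$ strictly (when $\vect{v} \neq \vect{v}_0$) or $\vect{u}_0 \lrev \vect{u}$ strictly (when $\vect{v} = \vect{v}_0$), so by \pref{prop:deltathreshold} the contribution vanishes. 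The surviving term $d_{\vect{u}_0,\vect{v}_0} \bra{\vect{u}_0} \Deltav{u_0}{v_0}(\Ndiamond) \ket{\vect{v}_0}$ is nonzero by the same proposition, forcing $d_{\vect{u}_0,\vect{v}_0} = 0$, a contradiction.

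Next, I would exhibit an invertible triangular transformation between the $\Omega$-coefficients and the $\Delta$-coefficients. After extending $\Ntriangle$ to a $\Tmnrec$-circuit $\Ndiamond$ by appending rows of nonzero beam splitters so that $\mtilde = n$, the identity $\ket{\vect{k}} = \prod_i \frac{(\opc{a}{i})^{k_i}}{\sqrt{\vect{k}!}} \ket{\vect{0}}$ lets us rewrite the Fock sources and dual detectors occurring in $\Omegav{i}{j}(\Ntriangle)$ as polynomials in creation operators acting on vacuum. This expresses each $\Omegav{i}{j}$ as a combination of $\Deltav{u}{v}(\Ndiamond)$'s through a triangular coefficient matrix whose diagonal entry $\tfrac{1}{\sqrt{\vect{i}!\,\vect{j}!}}$ is nonzero, since creation operators only produce Fock states of higher or equal occupation.

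The main obstacle will be making the triangularity of this $\Omega$-to-$\Delta$ change of basis rigorous, including the careful choice of the $\Tmnrec$-extension $\Ndiamond$ and the reconciliation of the index-dimensions $(\vect{i},\vect{j}) \in (\N^{\ntilde},\N^{\mtilde})$ on the $\Omega$ side versus $(\vect{u},\vect{v}) \in (\N^{\mtilde},\N^{m})$ on the $\Delta$ side, so that the lower-triangular terms land at strictly smaller indices in the chosen order. Once this invertible correspondence is in place, the hypothesis $\sum c_{\vect{u},\vect{v}} \Omegav{u}{v}(\Ntriangle) = 0$ yields a relation $\sum d_{\vect{u},\vect{v}} \Deltav{u}{v}(\Ndiamond) = 0$ whose coefficients $d$ are linearly tied to the $c$'s through an invertible map; the linear independence of the $\Delta$'s established above forces $d \equiv 0$, hence $c \equiv 0$, concluding $\omega = \omega'$.
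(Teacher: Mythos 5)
Your first step --- deducing the linear independence of the $\Deltav{u}{v}(\Ndiamond)$ family from the threshold property of \pref{prop:deltathreshold} by evaluating $\bra{\vect{u}_0}\cdot\ket{\vect{v}_0}$ at a pair that is minimal for the reverse lexicographic order --- is correct and is essentially the paper's \pref{lem:uniqDelta}. The overall architecture (reduce to a \Tmnrec-circuit, expand $\Omega$ in terms of $\Delta$ through a triangular change of basis, conclude by invertibility) is also the paper's. However, the two steps you yourself flag as ``the main obstacle'' are precisely where the substance of the proof lies, and your sketches of them would not go through as written. First, the passage from $\Ntriangle$ to $\Ndiamond$ is not obtained by \emph{appending} rows of beam splitters: appending new components changes the circuit whose ancilla ports define the $\Omega$ operators, and a relation $\sum c_{\vect{u},\vect{v}}\Omegav{u}{v}(\Ntriangle)=0$ does not transfer to the augmented circuit in any obvious way. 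What the paper does instead (\pref{prop:uniqOmegaDiamond}, via \pref{lem:Trec}) is \emph{factor} the given \Tmn-circuit, using the completeness of \lopp~and repeated applications of \pref{oLOpp:E3}, as $D'\circ(id^{\otimes n-\mtilde}\otimes\Ndiamond)\circ D$ with $D,D'$ unitary \lopp-circuits acting only on the non-ancilla wires; since $\interpspre{D}$ and $\interpspre{D'}$ are invertible and tensoring with the identity is injective, the linear-independence question transfers exactly to $\Ndiamond$, and the index dimensions are reconciled because the factorization shrinks $n$ to $\mtilde$ rather than altering $\ntilde$ or $\mtilde$.

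Second, the triangularity of the $\Omega$-to-$\Delta$ decomposition is not a consequence of the fact that creation operators raise occupation numbers, and the diagonal coefficient is not $1/\sqrt{\vect{i}!\,\vect{j}!}$. The actual mechanism (\pref{lem:Deltadec}, via \pref{prop:commutelambda}) is the commutation rule $\interpspre{\Ndiamond}\,\adag{j}=\bigl(\sum_i d_{i,j}\adag{i}\bigr)\interpspre{\Ndiamond}$ with $(d_{i,j})=\interpsone{\Ndiamond}$: one writes only the \emph{source} $\ket{\vect{u}}$ as a monomial in creation operators applied to the vacuum, pushes it through the circuit, and then uses that $\bra{\vect{v}}\Lambdav{s'}$ vanishes unless $\vect{s'}\lesseqv\vect{v}$ to absorb the remaining creation operators into the detector. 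The constraint $\vect{s}\preceq\vect{u}$ and the nonvanishing of the diagonal coefficient $\delta_{\vect{u},\vect{v}}$ then come from the zero pattern $d_{i,j}=0$ for $i<j-n$ together with the guaranteed nonzero beam-splitter angles of a \Tmnrec-circuit (\pref{def:Trec}); for a general \lopp-circuit neither holds, which is exactly why the reduction of the first step is needed before this one. Finally, note that the orders appearing in the triangular expansion ($\preceq$ on $\vect{s}$, $\lesseqv$ on $\vect{t}$) differ from the order $\lrev$ used for the $\Delta$-independence, so the concluding extraction of coefficients must peel off a maximal pair $(\vect{u},\vect{v})$ for $(\preceq,\lesseqv)$ at each stage, as the paper does, rather than invoke a single global ``invertible map.''
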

The proof of \pref{lem:uniqOmega} is the focus of Section~\ref{app:decompDelta} and~\ref{app:uniqOmega}.
In order to simplify the proofs, we show that the linear independence of $\Omega(\Ntriangle)$ with \Tmn-circuits is equivalent to a linear independence on \Tmnrec-circuits introduced in~\pref{def:Trec}. That is the following proposition. 

\begin{proposition}[Linear independence of $\Omega(\Trectilde)$]\label{prop:uniqOmegaDiamond}
    For any $\Tmn$-circuit  $\Ntriangle:n'+\ntilde\rightarrow m'+\mtilde$, there exist $n\leq n'$, $m\leq m'$ and a \Tmnrec-circuit $\Ndiamond:n+\ntilde\rightarrow m+\mtilde$ such that
    $\sum_{\vect{u},\vect{v}}\omegav{u}{v}\Omegav{u}{v}(\Ntriangle)=\sum_{\vect{u},\vect{v}}\omegav{u}{v}'\Omegav{u}{v}(\Ntriangle)$ if and only if $\sum_{\vect{u},\vect{v}}\omegav{u}{v}\Omegav{u}{v}(\Ndiamond)=\sum\limits_{\vect{u},\vect{v}}\omegav{u}{v}'\Omegav{u}{v}(\Ndiamond)$.
\end{proposition}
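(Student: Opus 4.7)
The plan is to factor any $\Tmn$-circuit $\Ntriangle$ into an interacting rectangular core of $\Tmnrec$-type together with two \lopp-circuits acting only on extra main wires, and then transport any $\Omega$-identity through that factorisation using the invertibility of the main-wire factors.

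First, I would exploit the structural conditions in \pref{def:subtriangular}. Properties~\ref{propTt:absinput} and~\ref{propTt:absoutput} confine the nontrivial beam splitters of $\Ntriangle$ to indices $(i \le m', j \le n')$, and the triangular layout then splits naturally into two main-wire corners (top-left and top-right) and a rectangular inner block where main-wire diagonals cross auxiliary-wire diagonals. Setting $n := \mtilde$ and $m := \ntilde$, which lie below $n'$ and $m'$ in the non-degenerate case, I would produce a decomposition of the form
\[
\Ntriangle \;=\; (T_{\mathrm{out}} \otimes \mathrm{id}_{\mtilde}) \circ (\mathrm{id}_{m'-m} \otimes \Ndiamond \otimes \mathrm{id}_{0}) \circ (T_{\mathrm{in}} \otimes \mathrm{id}_{\ntilde}),
\]
where $T_{\mathrm{in}}, T_{\mathrm{out}}$ are \lopp-circuits on the extra main wires and $\Ndiamond : n + \ntilde \to m + \mtilde$ is the rectangular inner block. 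One then checks that $\Ndiamond$ satisfies \pref{def:Trec}: it has exactly $\ntilde \times n$ beam splitters, all with nonzero angles by Property~\ref{propTt:noid}, and $\mtilde = n$ by construction.

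Next, because the state $\ket{\vect{i}}$ and the effect $\bra{\vect{j}}$ of \pref{def:OmegaDelta} live entirely on auxiliary wires, on which $T_{\mathrm{in}}$ and $T_{\mathrm{out}}$ act as the identity, one obtains
\[
\Omegav{i}{j}(\Ntriangle) \;=\; \interpspre{T_{\mathrm{out}}} \circ \Omegav{i}{j}(\Ndiamond) \circ \interpspre{T_{\mathrm{in}}}.
\]
Both $\interpspre{T_{\mathrm{in}}}$ and $\interpspre{T_{\mathrm{out}}}$ are bijective on $\Fv{}$, being restrictions of the unitary Fock-space operators induced by \lopp-circuits (\pref{def:semmany}). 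Consequently the identity $\sum_{\vect{u},\vect{v}} \omegav{u}{v}\, \Omegav{u}{v}(\Ntriangle) = \sum_{\vect{u},\vect{v}} \omegav{u}{v}'\, \Omegav{u}{v}(\Ntriangle)$, after left- and right-composition with the inverses of $\interpspre{T_{\mathrm{out}}}$ and $\interpspre{T_{\mathrm{in}}}$, is equivalent to the same identity with $\Ndiamond$ in place of $\Ntriangle$, which is exactly the conclusion.

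The main obstacle is the first step: producing a clean factorisation and verifying that the middle block really satisfies the $\Tmnrec$ conditions. This requires careful bookkeeping of the index conventions in \pref{def:triangular} and handling the edge cases where $n'$ or $m'$ are smaller than the minimum needed to host all auxiliary interactions. I expect that Property~\ref{propTt:noid} of $\Tmn$-circuits is precisely what excludes such degenerate configurations; the remaining checks about the nonzero angles and rectangular shape of $\Ndiamond$ are then mechanical from the triangular layout.
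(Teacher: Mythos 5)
Your proposal follows essentially the same route as the paper: the proof in Appendix~D.1 likewise factors $\Ntriangle$ as $U'\circ(I_k\otimes\Ndiamond)\circ U$ with $U=D\otimes id^{\otimes\ntilde}$, $U'=D'\otimes id^{\otimes\mtilde}$ built from \lopp-circuits acting only on the main wires (so $n=\mtilde$, $m=\ntilde$), and then cancels the invertible $\interpspre{D}$, $\interpspre{D'}$ and the spare identity factor $I_k$ on both sides of the $\Omega$-identity, exactly as you do. The one place where you are more optimistic than the paper is the factorisation itself: the triangular layout does \emph{not} split into blocks by mere regrouping of the existing gates, and the paper's extraction lemma obtains $\Ndiamond$ by repeatedly applying Equation~(\ref{oLOpp:E3}) to move beam splitters below the diagonal of nonzero angles guaranteed by Property~4 of \pref{def:subtriangular} --- this semantic rewriting is precisely the ``careful bookkeeping'' you defer, and it is the main technical content of the paper's argument.
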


\begin{proof}
By rewriting the shape of \Ntriangle~with the rules of~\lopp, we can show that $\interpspre{\Ntriangle}=\interpspre{D'}\interpspre{(id^{\otimes n-\mtilde})\otimes\Ndiamond}\interpspre{D}$. By applying the inverse of those morphisms to the first equation with the $\Omega(\Ntriangle)$ and by reversibly removing the identities, we obtain the second one with the $\Omega(\Ndiamond)$. More details are provided in \pref{app:uniqOmegaDiamondproof}.
\end{proof}

\subsection{Equivalent decomposition with the creation operators}

 \label{app:decompDelta}
 It is standard in the second quantification formalism of quantum physics to look at the evolution of the operators acting on the system rather than directly the evolution of the state. In particular, we associate for each mode $i$ a creation operator \adag{i} such that $\adag{j}\ket{x_1,\dots,x_m}=\sqrt{x_j+1}\ket{x_1,\dots,x_j+1,\dots,x_m}$. This formalism has shown to be simpler to work with as given a linear optical circuit $D$, while the state evolutions are given by $\interpspre{D}$, the evolution of the creation operators are linear, hence \textit{linear optics}, and directly given by $\interpsone{D}$~\cite{kok2007review,heurtel2023slos}. We give here the examples of the evolution of \adag{j} through a phase shifter and a beam splitter, while the case for a generic circuit $D$ is described in \pref{prop:commutelambda}.

\begin{example}[\adag{j} through a phase shifter]\label{ex:adagps}
    Let $D=\tikzfig{ps-xs}$ with $\interpsone{D}=e^{i\varphi}$. Through $\interpspre{D}$, the creation operator has the following transformation: $\adag{1} \mapsto e^{i\varphi}\adag{1}$. We can check it is sound with $\Pphipre\coloneqq\interpspre{\tikzfig{ps-xs}}$. By first noticing $\ket{k}=\frac{(\adag{1})^{k}}{\sqrt{k!}}\ket{0}$, we have: 
    \[\begin{array}{ccccccccc} \Pphipre\ket{k}&=&\Pphipre\frac{(\adag{1})^{k}}{\sqrt{k!}}\ket{0}&=&\frac{(e ^{i\varphi}\adag{1})^{k}}{\sqrt{k!}}\Pphipre\ket{0}&=&  \frac{(e ^{i\varphi}\adag{1})^{k}}{\sqrt{k!}}\ket{0} &=& e^{ik\varphi}\ket{k} \\
    \end{array}\]
\end{example}

\begin{example}[\adag{j} through a beam splitter]\label{ex:adagbs}
    Let $D=\tikzfig{bs-xs}$ with $\interpsone{D}=\begin{pmatrix}\costheta & i\sintheta \\i\sintheta & \costheta \end{pmatrix}$. Similarly to \pref{ex:adagps} and as described in \pref{app:adagbs}, the creation operators have the following transformations: $\adag{1} \mapsto \costheta\adag{1}+i\sintheta\adag{2}$ and $\adag{2} \mapsto i\sintheta\adag{1}+\costheta\adag{2}$. 
\end{example}

\begin{proposition}[General commutation rule with $\interpsone{\cdot}$] \label{prop:commutelambda}
    For any \lopp-circuit $D:n+\ntilde\rightarrow m+\mtilde$ with $(d_{i,j})=\interpsone{D}$, we have: \[\Lambdav{u}\interpspre{D}=\prod\limits_{j=1}^{\ntilde} \left(\sum\limits_{i=1}^{n+\ntilde} d_{i,j}\opc{a}{i}\right)^{u_j}\interpspre{D}=\sum\limits_{\#\vect{s}+\#\vect{s'}=\#\vect{u}}\delta_{\vect{s},\vect{s'}}'\left(\Lambdav{s}\otimes\Lambdav{s'}\right)\interpspre{D}\]  where $\#\vect{x}\coloneqq\sum_k x_k$ and $\Lambdav{u}\coloneqq \prod_{j=1}^{\ntilde} (a_j^\dagger)^{u_j}$.

\end{proposition}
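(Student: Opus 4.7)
I would first establish the commutation relation for a single creation operator, namely
\[\interpspre{D}\cdot a_j^\dagger \;=\; \Bigl(\sum_{i} d_{i,j}\,a_i^\dagger\Bigr)\cdot\interpspre{D},\]
from which the full statement follows by iteration and multinomial expansion. Working on $\Fv{}$ (where all sums are finite and the unbounded operators $a_i^\dagger$ make sense) I would verify this identity first for the four primitive generators. For the identity and the swap it is immediate; for the phase shifter it is exactly the computation sketched in Example~\ref{ex:adagps} (namely $\interpspre{P_\varphi}a^\dagger = e^{i\varphi}a^\dagger\interpspre{P_\varphi}$); for the beam splitter it is Example~\ref{ex:adagbs}, which can be checked by applying both sides to the vacuum-based basis $\ket{k_1,k_2}=\tfrac{(a_1^\dagger)^{k_1}(a_2^\dagger)^{k_2}}{\sqrt{k_1!k_2!}}\ket{0,0}$ and matching with \pref{def:semmany}.

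I would then extend by induction on the PROP structure of $D$. For sequential composition $D=D_2\circ D_1$ with one-photon matrices $U_1,U_2$, applying the identity twice gives
\[\interpspre{D_2\circ D_1}\,a_j^\dagger \;=\; \interpspre{D_2}\Bigl(\sum_k u_{1,k,j}a_k^\dagger\Bigr)\interpspre{D_1} \;=\; \Bigl(\sum_i (U_2U_1)_{i,j}\,a_i^\dagger\Bigr)\interpspre{D},\]
which is exactly what we want since $\interpsone{D_2\circ D_1}=U_2U_1$. For tensor product $D=D_1\otimes D_2$, the one-photon semantics is block-diagonal and creation operators on disjoint mode sets commute, so the relation splits cleanly across the two blocks.

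Once the single-operator identity is in hand, the first equality of the proposition follows by iteration: moving $(a_j^\dagger)^{u_j}$ past $\interpspre{D}$ produces $u_j$ copies of $\sum_i d_{i,j}a_i^\dagger$, and since all creation operators commute among themselves I can freely collect them into the ordered product $\prod_j(\sum_i d_{i,j}a_i^\dagger)^{u_j}$. The second equality is then purely algebraic: expanding each factor $(\sum_i d_{i,j}a_i^\dagger)^{u_j}$ by the multinomial theorem and multiplying out yields a linear combination of monomials $\prod_i (a_i^\dagger)^{w_i}$ with $\sum_i w_i = \sum_j u_j = \#\vect{u}$, each weighted by a polynomial in the $d_{i,j}$. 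Splitting the output modes into the first $m$ and last $\mtilde$ writes every such monomial as $\Lambdav{s}\otimes\Lambdav{s'}$ with $\#\vect{s}+\#\vect{s'}=\#\vect{u}$, and collecting coefficients defines the scalars $\delta'_{\vect{s},\vect{s'}}$.

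The main difficulty here is not any deep algebra but bookkeeping: one must be careful that the unbounded operators $a_i^\dagger$ only act on finitely supported vectors (handled by restricting to $\interpspre{\cdot}$), and that the inductive step on composition lines up the input/output indexing with the matrix product $U_2U_1$. Once those two points are respected, the proof reduces to the primitive cases plus a multinomial expansion, with no analytic subtleties.
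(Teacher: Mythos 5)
Your proposal is correct and follows essentially the same route as the paper: establish the single-operator relation $\interpspre{D}\,a_j^\dagger=(\sum_i d_{i,j}a_i^\dagger)\interpspre{D}$ by commuting through the generators via Examples~\ref{ex:adagps} and~\ref{ex:adagbs}, then obtain the first equality by iteration and commutativity of the $\opc{a}{i}$, and the second by expanding the product and regrouping monomials as $\Lambdav{s}\otimes\Lambdav{s'}$. Your explicit induction on the PROP structure (matrix product for $\circ$, block-diagonal splitting for $\otimes$) is just a more formal phrasing of the paper's ``commute through each phase shifter and beam splitter'' step.
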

\begin{proof} We can prove $\interpspre{D}\adag{j}=(\sum_i d_{i,j}\adag{i})\interpspre{D}$ by commuting each $\adag{i}$ with all the phase shifters and beam splitters of $D$ with the formulas of Example~\ref{ex:adagps} and~\ref{ex:adagbs}. The first equality follows as the $\opc{a}{j}$ commute with each other. The second equality is shown by developing and noticing that $(\opc{a}{1})^{s_1}\dots (\opc{a}{\ntilde})^{s_{\ntilde}}(\opc{a}{\ntilde+1})^{s_1'}\dots (\opc{a}{\ntilde+n})^{s_n'}=\Lambdav{s}\otimes\Lambdav{s'}$.~\end{proof}

We can now introduce the following new decomposition.

\begin{lemma}[$\Delta$-decomposition of $\Omegav{u}{v}(\Ndiamond)$]\label{lem:Deltadec}
    For any $\Tmnrec$-circuit $\Ndiamond$, $\vect{u}\in \N^{\ntilde}, \vect{v}\in \N^{n}$: 
    \[\Omegav{u}{v}(\Ndiamond)=\sum\limits_{\substack{\vect{s}\preceq\vect{u},~\vect{t}\lesseqv\vect{v}\\\#\vect{s}-\#\vect{t}=\#\vect{u}-\#\vect{v}}}\delta_{\vect{s},\vect{t}}\Deltav{s}{t}(\Ndiamond)\] 
    where $\Deltav{s}{t}(\Ndiamond)\coloneqq (\Lambdav{s}\otimes id^{n}) \circ \Omegav{\vect{0}}{t}(\Ndiamond)$, $\delta_{\vect{u},\vect{v}}\neq 0$, $\prec$ is the lexicographical order, ie.\ $\vect{x}\prec\vect{u}$ if there exists $k$ such that $x_1=u_1,\dots,x_{k-1}=u_{k-1}$ and $x_k<u_k$, and \lessv~is the weak order on vectors such that $\vect{y}\lessv\vect{v}$ if there exists $\ell$ with $y_{\ell}<v_{\ell}$ and $y_{j}\leq v_{j}$ for $j\neq\ell$.
\end{lemma}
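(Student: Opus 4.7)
The plan is to express the auxiliary source $\ket{\vect{u}}$ via creation operators using $\ket{\vect{u}}=\tfrac{1}{\sqrt{\vect{u}!}}\Lambdav{u}\ket{\vect{0}}$ and then push $\Lambdav{u}$ across the circuit $\Ndiamond$ via \pref{prop:commutelambda}. This replaces the auxiliary-input creation operator $\Lambdav{u}$ by the output-side polynomial $\prod_{j=1}^{\ntilde}\bigl(\sum_{i=1}^{m+n}d_{i,n+j}\,a_i^\dagger\bigr)^{u_j}$, where $(d_{i,j})=\interpsone{\Ndiamond}$. Each factor then splits as $A_j+B_j$, with $A_j=\sum_{i=1}^{m}d_{i,n+j}\,a_i^\dagger$ acting on main-output modes and $B_j=\sum_{k=1}^{n}d_{m+k,n+j}\,a_k^\dagger$ acting on auxiliary-output modes; since $A_j$ and $B_j$ act on disjoint modes they commute, so the full product expands as $\prod_j(A_j+B_j)^{u_j}=\sum_{\vect{l}\le\vect{u}}\binom{\vect{u}}{\vect{l}}\prod_jA_j^{l_j}\prod_jB_j^{u_j-l_j}$.

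Applying the auxiliary detector $\bra{\vect{v}}$ to the $B$-factor uses the identity $\bra{\vect{v}}\prod_k(a_k^\dagger)^{s'_k}=\prod_k\sqrt{v_k!/(v_k-s'_k)!}\,\bra{\vect{v}-\vect{s}'}$, which is nonzero precisely when $\vect{s}'\le\vect{v}$ componentwise. Reindexing $\vect{t}\coloneqq\vect{v}-\vect{s}'$ and further expanding each $\prod_jA_j^{l_j}$ into a linear combination of main-output creation monomials $\Lambdav{s}$, all the terms collect into $\sum_{\vect{s},\vect{t}}\delta_{\vect{s},\vect{t}}\,(\Lambdav{s}\otimes id^n)\circ\Omegav{\vect{0}}{t}(\Ndiamond)=\sum_{\vect{s},\vect{t}}\delta_{\vect{s},\vect{t}}\,\Deltav{s}{t}(\Ndiamond)$. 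Two of the three summation constraints follow immediately: the photon balance $\#\vect{s}-\#\vect{t}=\#\vect{u}-\#\vect{v}$ from $\#\vect{s}+\#\vect{s}'=\#\vect{u}$ combined with $\#\vect{s}'=\#\vect{v}-\#\vect{t}$, and the componentwise bound $\vect{t}\le\vect{v}$, hence $\vect{t}\lesseqv\vect{v}$, from the detector projection.

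The lex constraint $\vect{s}\preceq\vect{u}$ and the non-vanishing $\delta_{\vect{u},\vect{v}}\neq 0$ both rely on the specific $\Tmnrec$ structure of $\Ndiamond$. Each coefficient $\delta_{\vect{s},\vect{t}}$ is a permanent-like polynomial in the submatrix entries $(d_{i,n+j})_{i\le m,\,j\le\ntilde}$; the triangular shape of a $\Tmnrec$-circuit forces a specific sparsity pattern on this submatrix, with $d_{i,n+j}=0$ whenever no path through the triangle connects auxiliary input $n+j$ to main output $i$, while the rectangular $\ntilde\times n$ grid of nonzero beam splitters guaranteed by \pref{def:Trec} ensures that a full ``leading diagonal'' of entries remains nonzero. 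Combining this sparsity with the multinomial expansion of $\prod_j A_j^{l_j}$ rules out any configuration whose row-sum vector $\vect{s}$ exceeds $\vect{u}$ lexicographically, and for the leading case $(\vect{s},\vect{t})=(\vect{u},\vect{v})$ isolates a single surviving routing whose contribution is a product of nonzero beam-splitter coefficients, giving $\delta_{\vect{u},\vect{v}}\neq 0$.

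The main obstacle is this last combinatorial analysis: precisely identifying the sparsity pattern induced by the $\Tmnrec$ shape and verifying that it enforces both the lex bound on $\vect{s}$ and the non-vanishing of the leading coefficient. Everything else amounts to bookkeeping with creation-operator commutation rules and Fock-space inner products.
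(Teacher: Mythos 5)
Your proposal follows essentially the same route as the paper's proof of \pref{lem:Deltadec}: write $\ket{\vect{u}}$ as $\Lambdav{u}\ket{\vect{0}}$ up to normalisation, commute the creation operators through $\interpspre{\Ndiamond}$ via \pref{prop:commutelambda}, split the resulting output-mode operators between the $m$ main outputs (yielding $\Lambdav{s}$) and the $n$ detected outputs (absorbed into $\bra{\vect{v}}$ as $\bra{\vect{v}-\vect{s}'}$, which gives $\vect{t}\lesseqv\vect{v}$ and the photon-balance constraint), and finally use the sparsity $d_{i,j}=0$ for $i<j-n$ forced by the $\Tmnrec$ shape to obtain $\vect{s}\preceq\vect{u}$ and $\delta_{\vect{u},\vect{v}}\neq 0$. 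The combinatorial step you flag as the main remaining obstacle is precisely the one the paper itself only illustrates by example in \pref{app:Deltadecproof}, so your argument matches the paper's both in strategy and in level of detail.
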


\begin{proof} Let us define $\prodsf{\vect{u}}\coloneqq\prod_{j=1}^{\ntilde} \sqrt{u_j!}$. First, we commute \interpspre{\Ndiamond} and $\Lambdav{u}$ as follows:
    \[\begin{array}{rcl}
        \Omegav{u}{v}(D) &  \overset{\text{Def.}~\ref{def:OmegaDelta}~(\Omega)}{=} & (id^{\otimes m} \otimes \bra{\vect{v}}) \interpspre{\Ndiamond} (id^{\otimes n} \otimes \ket{u})\\ 
        &\overset{\text{Def.}~\ref{prop:commutelambda}~(\Lambda)}{=}& (id^{\otimes m} \otimes \bra{\vect{v}}) \interpspre{\Ndiamond} (id^{\otimes n} \otimes \psqfac{u}^{-1}\Lambdav{u} \ket{\vect{0}}^{\ntilde}) \\ 
        &\overset{\text{Prop.}~\ref{prop:commutelambda}}{=}& \sum\limits_{\substack{(\vect{s},\vect{s'})\in(\setindex{S},\setindex{S'})\\\#\vect{s}+\#\vect{s'}=\#\vect{u}}}\psqfac{u}^{-1}\delta_{\vect{s},\vect{s'}}'\left(\Lambdav{s}\otimes\bra{\vect{v}}\Lambdav{s'}\right) \interpspre{\Ndiamond}(id^{\otimes n} \otimes\ket{\vect{0}}^{\ntilde})
    \end{array}\]

    \begin{itemize}
        \item If $s_j'>v_j$ then $\bra{\vect{v}}\Lambdav{s'}=0$. Otherwise, we have $\bra{\vect{v}}\Lambdav{s'}\propto\bra{\vect{v-s'}}$. By taking $\vect{t}=\vect{v}-\vect{s'}$, the sum is over the set $\setindex{T}=\left\{\vect{t}, 0\leq t_1 \leq v_1\wedge\dots\wedge 0\leq t_n \leq v_n \right\}=\left\{\vect{t}, \vect{t}\lesseqv \vect{v}\right\}$.
        \item The elements of $\setindex{S}$ are determined by the formula in \pref{prop:commutelambda}. As \Ndiamond~is a \Tmnrec-circuit, $d_{i,j}=0$ for $i<j-n$. We can show by developing the product of sums that necessarily $\deltav{s}{s'}'=0$ if $s\succ u$ and $\delta_{\vect{u},\vect{v}}'\neq 0$. That is illustrated in~\pref{app:Deltadecproof}. 
        \item $\#\vect{s}+\#\vect{s'}=\#\vect{u}$. As $\vect{t}=\vect{v}-\vect{s'}$, we have $\#\vect{t}=\#\vect{v}-\#\vect{s'}$ so $\#\vect{s}-\#\vect{t}=\#\vect{u}-\#\vect{v}$.
    \end{itemize}

By taking $\delta_{\vect{s},\vect{t}}=\psqfac{u}^{-1}\delta_{\vect{s},\vect{t}}'$, we conclude that $\Omegav{u}{v}(\Ndiamond)$ is equal to:
\[\begin{array}{clcc}
     &\sum\limits_{\substack{\vect{s}\preceq\vect{u},~\vect{t}\lesseqv\vect{v}\\\#\vect{s}-\#\vect{t}=\#\vect{u}-\#\vect{v}}}\delta_{\vect{s},\vect{t}}\left(\Lambdav{s}\otimes\bra{\vect{t}}\right) \interpspre{\Ndiamond}(id^{\otimes n} \otimes\ket{\vect{0}}^{\ntilde}) && \\ =& \sum\limits_{\substack{\vect{s}\preceq\vect{u},~\vect{t}\lesseqv\vect{v}\\\#\vect{s}-\#\vect{t}=\#\vect{u}-\#\vect{v}}}\delta_{\vect{s},\vect{t}}(\Lambdav{u}\otimes id^{n})\circ \Omegav{0}{t}(\Ndiamond)&=& \sum\limits_{\substack{\vect{s}\preceq\vect{u},~\vect{t}\lesseqv\vect{v}\\\#\vect{s}-\#\vect{t}=\#\vect{u}-\#\vect{v}}}\delta_{\vect{s},\vect{t}}\Deltav{s}{t}(\Ndiamond)
\end{array}\]

\end{proof}

\subsection{Uniqueness of the canonical decomposition} 
\label{app:uniqOmega}

Let us consider the scalar $z=\bra{\vect{y}}\Deltav{s}{t}(\Trec)\ket{\vect{x}}$. $\Lambda^{\vect{s}}$ adds $\vect{s_j}$ photons on the $j^{th}$ output wire, so we need to detect at least $y_j$ on the $j^{th}$ output mode, otherwise $z$ is zero. 
Furthermore, $\Omegav{\vect{0}}{\vect{t}}(\Trec)$ has no ancilla photon in the input, so to detect $t_{n}$ in the last detector, we need at least $t_{\ntilde}$ in the last input mode. If we have exactly $t_{n}$ in the last input mode, then we need at least $t_{n-1}$ photon in the second to last input mode, to be able to detect $t_{n-1}$ in the second last detector, otherwise $z=0$. We can therefore prove the following proposition.

\begin{proposition}[$\Deltav{s}{t}(\Ndiamond)$ threshold propeties]\label{prop:Deltathresh}
For any $\Tmnrec$-circuit $\Ndiamond:n+m\rightarrow n+m$ and $(\vect{s},\vect{t})\in(\N^n,\N^m)$, $\bra{\vect{y}}\Delta^{\vect{s},\vect{t}}(\Trec)\ket{\vect{x}}$ is nonzero for $(\vect{x},\vect{y})=(\vect{t},\vect{s})$ and is zero if\\ $(\vect{x}\lrev \vect{t})\lor (\vect{y}\lrev\vect{s})$, where $\lrev$ is the reverse lexicographical order, i.e.\ $\vect{y}\lrev\vect{v}$ if there exists $k$ such that $y_n=v_n,\dots,y_{k+1}=v_{k+1}$ and $y_k<v_k$.
\end{proposition}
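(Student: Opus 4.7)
The plan is to reduce the amplitude $\bra{\vect{y}}\Delta^{\vect{s},\vect{t}}(\Ndiamond)\ket{\vect{x}}$ to a matrix element of $\interpspre{\Ndiamond}$ between Fock basis vectors by absorbing the creation operators $\Lambda^{\vect{s}}$, and then to exploit the staircase shape of a $\Tmnrec$-circuit to run a wire-by-wire photon-counting induction. Using the decomposition from \pref{lem:Deltadec} together with the elementary identity $\bra{y_i}(a_i^\dagger)^{s_i}=\sqrt{y_i!/(y_i-s_i)!}\,\bra{y_i-s_i}$ if $y_i\ge s_i$ and $0$ otherwise, I would obtain
\[
\bra{\vect{y}}\Delta^{\vect{s},\vect{t}}(\Ndiamond)\ket{\vect{x}} \;=\; c_{\vect{s},\vect{y}}\,\bra{\vect{y}-\vect{s},\vect{t}}\interpspre{\Ndiamond}\ket{\vect{x},\vect{0}},
\]
where the scalar $c_{\vect{s},\vect{y}}$ is nonzero precisely when $\vect{y}\ge\vect{s}$ componentwise. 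This disposes of the first zero case immediately: if $\vect{y}\lrev\vect{s}$, by definition of the reverse-lexicographic order we have $y_k<s_k$ for some index $k$, so $c_{\vect{s},\vect{y}}=0$.

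For the second zero case, I would show that $\bra{\vect{w},\vect{t}}\interpspre{\Ndiamond}\ket{\vect{x},\vect{0}}=0$ for every $\vect{w}$ whenever $\vect{x}\lrev\vect{t}$. The staircase shape of a $\Tmnrec$-circuit (\pref{tab:triangles} and the remark following \pref{def:Trec}) consists of a block of $\ntilde\times n$ beam splitters, each with a strictly nonzero angle, arranged so that the very last aux output wire is connected through this block only to the last main input column; all further paths to it traverse aux input wires, which carry the vacuum $\ket{\vect{0}}$. Photon conservation on the backward light cone of this last aux output therefore forces the last main input to supply at least the corresponding last coordinate of $\vect{t}$ photons; too few and the amplitude is $0$. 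When exactly that number are supplied, they are entirely consumed by the last aux detector, and the residual amplitude factors through a $\Tmnrec$-circuit of strictly smaller dimensions obtained by erasing the last main input and the last aux output. Iterating this peeling reproduces exactly the reverse-lexicographic threshold: if $\vect{x}\lrev\vect{t}$, some coordinate eventually fails the local bound and the amplitude vanishes.

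For the nonzero case $(\vect{x},\vect{y})=(\vect{t},\vect{s})$ the reduction yields $c_{\vect{s},\vect{s}}\,\bra{\vect{0},\vect{t}}\interpspre{\Ndiamond}\ket{\vect{t},\vect{0}}$. At this tight photon budget, the only routing compatible with the peeling analysis above is the greedy one in which every photon in main input $i$ is reflected straight down its staircase into the $i$-th aux output, with nothing left on the main output; any other routing violates the reverse-lex bound at some coordinate and so contributes $0$. The amplitude of this unique routing is a single product of beam splitter reflection coefficients of the form $i\sin\theta$ multiplied by phase factors, and since every angle $\theta$ in a $\Tmnrec$-circuit lies strictly in $(0,\pi/2]$, none of the sines vanish. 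Combined with $c_{\vect{s},\vect{s}}\neq 0$, this delivers the claimed nonvanishing. The main obstacle is making the peeling step rigorous: the cleanest route is via the many-photon permanent formula $\bra{\vect{p}}\interpspre{U}\ket{\vect{q}}=\mathrm{Per}(U_{\vect{p},\vect{q}})/\sqrt{\vect{p}!\,\vect{q}!}$ applied to $U=\interpsone{\Ndiamond}$, checking that the entries forced to vanish by the staircase structure make the relevant block of $U_{\vect{p},\vect{q}}$ triangular in the correct sense, so that the induction propagates through the coordinates in reverse-lexicographic order.
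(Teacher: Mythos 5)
Your proposal is correct and follows essentially the same route as the paper: absorb the creation operators to reduce to a matrix element of $\interpspre{\Ndiamond}$ with vacuum on the ancilla inputs, use the photon-counting/light-cone peeling along the staircase for the two vanishing cases, and observe that at the tight budget $(\vect{x},\vect{y})=(\vect{t},\vect{s})$ the unique surviving routing has amplitude a product of $i\sin\theta$ factors with $\theta\in(0,\frac{\pi}{2}]$, hence nonzero. The paper's own proof merely asserts these three points in a few lines, so your permanent-based formalization of the peeling step is an elaboration of the same argument rather than a different approach.
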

\begin{proof}
    More generally, $\bra{\vect{y}}\Delta^{\vect{s},\vect{t}}(\Trec)\ket{\vect{x}}=0$ if there exists one $y_j$ such that $y_j<s_j$. We chose a stronger constraint with $\vect{y}\lrev\vect{s}$. The nonzero term for the case $(\vect{x},\vect{y})=(\vect{t},\vect{s})$ is a direct consequence of the definitions and the nonzero beam splitters of \Tmnrec-circuits.
\end{proof}

\begin{lemma}[Linear independence of $\Deltav{s}{t}(\Ndiamond)$]\label{lem:uniqDelta}
For any $\Tmnrec$-circuit $\Ndiamond:n+m\rightarrow n+m$, if
$\sum_{\vect{s},\vect{t}}\deltav{s}{t}\Deltav{s}{t}(\Ndiamond)=\sum_{\vect{s},\vect{t}}\deltav{s}{t}'\Deltav{s}{t}(\Ndiamond)$ then $\deltav{s}{t}=\deltav{s}{t}'$ for every $\vect{s},\vect{t} \in (\setindex{S},\setindex{T})$.
\end{lemma}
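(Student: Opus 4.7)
The plan is to reduce the statement to showing that the family $\{\Delta^{\vect{s},\vect{t}}(\Ndiamond)\}_{(\vect{s},\vect{t})\in\setindex{S}\times\setindex{T}}$ is linearly independent. Set $c_{\vect{s},\vect{t}} := \delta_{\vect{s},\vect{t}} - \delta'_{\vect{s},\vect{t}}$ so the hypothesis becomes $\sum_{\vect{s},\vect{t}} c_{\vect{s},\vect{t}}\,\Delta^{\vect{s},\vect{t}}(\Ndiamond) = 0$, and the goal is to deduce $c_{\vect{s},\vect{t}}=0$ for all $(\vect{s},\vect{t})$. I will argue by contradiction, exploiting the threshold behaviour in \pref{prop:deltathreshold} to successively peel off minimal terms.

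Assume toward a contradiction that the finite set $E := \{(\vect{s},\vect{t})\in\setindex{S}\times\setindex{T} \mid c_{\vect{s},\vect{t}}\neq 0\}$ is nonempty. Since $\leqrev$ is a total order on $\N^n$ (respectively $\N^m$), the product order it induces on $\N^n\times\N^m$ is a partial order; because $E$ is finite and nonempty it admits a minimal element $(\vect{s}_0,\vect{t}_0)$. By minimality, every other $(\vect{s},\vect{t})\in E\setminus\{(\vect{s}_0,\vect{t}_0)\}$ fails to satisfy both $\vect{s}\leqrev\vect{s}_0$ and $\vect{t}\leqrev\vect{t}_0$, which is to say $\vect{s}_0\lrev\vect{s}$ or $\vect{t}_0\lrev\vect{t}$.

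Next I would test the identity $\sum_{\vect{s},\vect{t}} c_{\vect{s},\vect{t}}\,\Delta^{\vect{s},\vect{t}}(\Ndiamond)=0$ against the pair of Fock states $\ket{\vect{t}_0}$ and $\bra{\vect{s}_0}$: namely, evaluate
\[
0 \;=\; \bra{\vect{s}_0}\!\left(\sum_{(\vect{s},\vect{t})\in E} c_{\vect{s},\vect{t}}\,\Delta^{\vect{s},\vect{t}}(\Ndiamond)\right)\!\ket{\vect{t}_0}.
\]
For any $(\vect{s},\vect{t})\in E\setminus\{(\vect{s}_0,\vect{t}_0)\}$, the condition $\vect{s}_0\lrev\vect{s}$ or $\vect{t}_0\lrev\vect{t}$ is exactly the vanishing hypothesis of \pref{prop:deltathreshold}, so the corresponding matrix element $\bra{\vect{s}_0}\Delta^{\vect{s},\vect{t}}(\Ndiamond)\ket{\vect{t}_0}$ is zero. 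Only the $(\vect{s}_0,\vect{t}_0)$-term survives, leaving
\[
c_{\vect{s}_0,\vect{t}_0}\,\bra{\vect{s}_0}\Delta^{\vect{s}_0,\vect{t}_0}(\Ndiamond)\ket{\vect{t}_0} \;=\; 0.
\]
By the nonzero clause of \pref{prop:deltathreshold}, the matrix element $\bra{\vect{s}_0}\Delta^{\vect{s}_0,\vect{t}_0}(\Ndiamond)\ket{\vect{t}_0}$ is nonzero; combined with $c_{\vect{s}_0,\vect{t}_0}\neq 0$ this gives a nonzero product, contradicting the displayed equality. Hence $E=\emptyset$ and every $c_{\vect{s},\vect{t}}$ vanishes, which is the claim $\delta_{\vect{s},\vect{t}}=\delta'_{\vect{s},\vect{t}}$.

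The only subtle point is the choice of $(\vect{s}_0,\vect{t}_0)$ as a minimum of a partial order (the componentwise reverse-lex order) rather than of a total one; but finiteness of $E$ is enough to guarantee the existence of a minimal element, and minimality is precisely the combinatorial statement needed to activate the \emph{two-sided} threshold of \pref{prop:deltathreshold}. The bulk of the work has already been done in establishing that proposition — once the threshold properties are in hand the linear independence is a clean peeling argument, so I do not anticipate any serious obstacle beyond writing down the above combinatorial comparison carefully.
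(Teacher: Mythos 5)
Your proof is correct and follows essentially the same route as the paper: both rely on \pref{prop:Deltathresh} and test the identity against Fock basis states $\bra{\vect{s}}\cdot\ket{\vect{t}}$ so that the threshold (vanishing) clause kills all but one term and the nonzero clause isolates its coefficient. The only difference is organizational --- you extract a minimal element of the set of offending indices for a contradiction, while the paper peels coefficients off iteratively following the reverse-lexicographic order --- and your handling of the partial product order versus the total orders $\leqrev$ is sound.
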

\begin{proof}
Let $\Delta=\sum_{\vect{s},\vect{t}}\deltav{s}{t}\Deltav{s}{t}(\Ndiamond)$ and $\Delta'=\sum_{\vect{s},\vect{t}}\deltav{s}{t}'\Deltav{s}{t}(\Ndiamond)$. By \pref{prop:Deltathresh}, we have $\bra{\vect{y}}\Deltav{x}{y}(\Ndiamond)\ket{\vect{x}}\neq 0$. Let us consider the equation $\bra{\vect{y}}\Delta\ket{\vect{x}}=\bra{\vect{y}}\Delta'\ket{\vect{x}}$. By starting with $\ket{\vect{x}}=\ket{0}^{\otimes n}$ and $\bra{\vect{y}}=\bra{0}^{\otimes m}$, we deduce with \pref{prop:Deltathresh} that $\deltav{0}{0}=\deltav{0}{0}'$. By instancing $\vect{y}$ with the successive values of $\setindex{S}$ ordered by $\lrev$, we deduce that for every $\vect{s}\in\setindex{S}$, $\deltav{s}{0}=\deltav{s}{0}'$. Then, for each $\vect{s}$ and by instancing $\vect{x}$ with the successive values of $\setindex{T}$ ordered by $\lrev$, we deduce that for every $\vect{s},\vect{t}$, $\deltav{s}{t}=\deltav{s}{t}'$.
\end{proof}

\paragraph*{Linear independence of $\Omega(\Ntriangle)$: Proof of \pref{lem:uniqOmega}}

    Let $W=\sum_{\vect{u},\vect{v}}\omegav{u}{v}\Omegav{u}{v}(\Ndiamond)\overset{\text{Lem.}~\ref{lem:Deltadec}}{=}\sum_{\vect{u},\vect{v}}\omegav{u}{v}\left(\sum_{\vect{s},\vect{t}}\deltav{s}{t}^{\vect{u},\vect{v}}\Deltav{s}{t}(\Ndiamond)\right)=\sum_{\vect{s},\vect{t}}\deltav{s}{t}\Deltav{s}{t}(\Ndiamond)$ and \\$W'=\sum_{\vect{u},\vect{v}}\omegav{u}{v}'\Omegav{u}{v}(\Ndiamond)\overset{\text{Lem.}~\ref{lem:Deltadec}}{=}\sum_{\vect{u},\vect{v}}\omegav{u}{v}'\left(\sum_{\vect{s},\vect{t}}\deltav{s}{t}^{\vect{u},\vect{v}}\Deltav{s}{t}(\Ndiamond)\right)=\sum_{\vect{s},\vect{t}}\deltav{s}{t}'\Deltav{s}{t}(\Ndiamond)$ with $W'=W$. We have $\deltav{u}{v}^{\vect{u},\vect{v}}\neq 0$ (\pref{lem:Deltadec}) and $\deltav{s}{t}=\deltav{s}{t}'$ (\pref{lem:uniqDelta}). By taking $\umaxi$ the maximum of $\setindex{U}$ with the total order $\prec$, and $\vmaxa$ a maximal element of $\{\vect{v}, \omega_{\umaxi,\vect{v}}\neq 0\}$ with the partial order $\lessv$, we have $\omega_{\umaxi,\vmaxa}\delta_{\umaxi,\vmaxa}^{\umaxi,\vmaxa}=\delta_{\umaxi,\vmaxa}=\omega_{\umaxi,\vmaxa}'\delta_{\umaxi,\vmaxa}^{\umaxi,\vmaxa}$, therefore $\omega_{\umaxi,\vmaxa}=\omega_{\umaxi,\vmaxa}'$. By repeating the same procedure with $W-\omega_{\umaxi,\vmaxa}\Omega^{\umaxi,\vmaxa}(\Ndiamond)$ and $W'-\omega_{\umaxi,\vmaxa}'\Omega^{\umaxi,\vmaxa}(\Ndiamond)$, we deduce that $\omegav{u}{v}=\omegav{u}{v}'$ for every $\vect{u},\vect{v} \in \setindex{U},\setindex{V}$, thus proving \pref{prop:uniqOmegaDiamond} and \pref{lem:uniqOmega}. \qed 

\section{Proofs of Appendix~\ref{app:uniqfullproof}}

\subsection{Proof of Proposition~\ref{prop:uniqOmegaDiamond}}
    \label{app:uniqOmegaDiamondproof}
    
    Let us consider a generic subtriangle $\Tmn:n+\tilde{n}\rightarrow m+\tilde{m}$.
    With the property \ref{propTt:noid}, we know there exists one beam splitter with a nonzero angle on the $(n+\tilde{n}-1,n+\tilde{n})$ wires. We note $\bigstar$ the one on the most right. As this beam splitter cannot be connected to the $\tilde{n}$ last input wires, there is necessarily one beam splitter on the $(n+\tilde{n}-1,n+\tilde{n})$ on its top left, and we also denote the most right of them with $\bigstar$. Up to the $(\tilde{n}-1,\tilde{n})$, we can show there is at least one nonzero beam splitter on the top left of the previous considered nonzero beam splitters, and we note with $\bigstar$ the most right of them, meaning all the other beam splitters on its right and before the have zero angles $0$. We denote $\anytheta$ (resp.\ $\anynztheta$) an arbitrary (resp.\ arbitrary but nonzero) beam splitter angle. With the property of \pref{def:triangular}, if there is a nonzero angle $\bigstar$ (resp.\ zero angle $0$), then we necessarily have nonzero (resp.\ zero) angles for all the beam splitters above (resp.\ under) in the right diagonal. The angles which are necessarily zero for the property 3 and 4 of \pref{def:subtriangular} are in red.
    
    Therefore, a $\Tmn$-circuit is always of the following form:  
    
    \[\tikzfigbox{0.65}{Tmnnonzero}\]

    Now, we can prove the following lemma.
    
    \begin{lemma}[\Tmnrec-circuit extraction] \label{lem:Trec}
        Any \Tmn-circuit can be rewritten to: 
        \[\tikzfigbox{1.0}{Trecgen2}\]
        where $D:n\rightarrow n$ and $D':m\rightarrow m$ are two \lopp-circuits and $\Ndiamond$ is a $\Tmnrecg{}{\ntilde}{}{\mtilde}$-circuit.
        \end{lemma}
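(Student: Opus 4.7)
The plan is to leverage the structural analysis of \Tmn-circuits developed in the paragraphs immediately preceding the lemma (the ``star-diagram'' illustrating the forced-zero and forced-nonzero positions) in order to split $\Ntriangle$ into three disjoint regions inside its triangular array of beam splitters, and then to reassemble these regions, using the axioms of the PROP and basic \lopp-equations, into the announced layered decomposition $D$, $\Ndiamond$, $D'$.

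First, I would isolate the ``core'' region $\Ndiamond$ consisting of the beam-splitter positions marked $\bigstar$ together with those forced to be nonzero ($\anynztheta$) above them along the right diagonals. By the preceding analysis, every beam splitter in this region has a nonzero angle, and the region has precisely the rectangular shape with $\ntilde\times\mtilde$ nonzero beam splitters that characterises a $\Tmnrecg{}{\ntilde}{}{\mtilde}$-circuit (Definition~\ref{def:Trec} together with its subsequent remark). Taking this region together with the phase shifters directly attached to its beam splitters yields the candidate $\Ndiamond$.

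Next, I would identify the two remaining triangular sub-circuits. The beam splitters and phase shifters strictly to the top-left of the core act, by Property~\ref{propTt:absinput} of Definition~\ref{def:subtriangular}, only on the first $n$ input wires (the $\ntilde$ last input wires carry pure identity action until they reach the core), and they therefore assemble into a \lopp-circuit $D:n\to n$. Symmetrically, by Property~\ref{propTt:absoutput}, the beam splitters and phase shifters to the bottom-right of the core act only on the first $m$ output wires and assemble into a \lopp-circuit $D':m\to m$. The rewriting into the claimed layered form then follows from the axioms of the PROP: the three regions act on non-overlapping groups of wires in a consistent top-to-bottom order, so their reassembly reduces to commutations of operations on disjoint wires.

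The main obstacle is the careful index bookkeeping needed to verify that the extracted $\Ndiamond$ genuinely satisfies the constraints of a \Tmnrec-circuit, in particular the dimensional matching that forces the number of its main outputs to equal the number of its main inputs as required by Definition~\ref{def:Trec}. This verification follows from the staircase of $\bigstar$-positions guaranteed by Property~\ref{propTt:noid}, which simultaneously fixes the shape of the core region and determines the exact boundary separating $D$, $\Ndiamond$ and $D'$; once this indexing is pinned down, the remaining rewriting is essentially routine.
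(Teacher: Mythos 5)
Your proposal identifies the right target regions (the staircase of $\bigstar$-positions, the forced-nonzero beam splitters above them, and the leftover parts that should become $D$ and $D'$), and this matches the structural analysis the paper carries out before the lemma. However, there is a genuine gap at the step where you claim the reassembly ``reduces to commutations of operations on disjoint wires'' via the PROP axioms. This is not the case: in the target decomposition $D$ acts on the first $n$ wires while $\Ndiamond$ acts on the last $\mtilde+\ntilde$ wires, so the two layers overlap on $\mtilde$ wires, and more importantly, in the original triangular arrangement the beam splitters destined for $D$, $\Ndiamond$ and $D'$ are causally interleaved on shared wires --- the triangle's rows and diagonals do not respect the layered left-to-right order you need. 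Mere diagrammatic deformation cannot separate them.

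The paper's proof instead invokes the completeness of \lopp~and applies Equation~(\ref{oLOpp:E3}) repeatedly to \emph{move} beam splitters across the diagonal of $\bigstar$-positions, turning the triangle with its apex at the top into one with the apex at the bottom. This is a genuine algebraic rewriting: each application of (\ref{oLOpp:E3}) replaces three beam splitters on wire pairs $(1,2),(2,3),(1,2)$ by three on $(2,3),(1,2),(2,3)$ with \emph{different angles}, so the resulting $\Ndiamond$ is not literally the sub-circuit of starred beam splitters you proposed to extract --- its parameters are recomputed by the Euler-angle identities. Your ``main obstacle'' paragraph worries about index bookkeeping, but the real obstacle is that the separation itself requires this non-trivial use of the equational theory; without (\ref{oLOpp:E3}) (or an equivalent Euler-type rearrangement) the lemma does not follow.
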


        \begin{proof}
            With the completeness of~\lopp, we can change the shape of the circuit so the apex of the triangle is now at the bottom. More precisely, we can apply~\pref{oLOpp:E3} many times to move all the beam splitters under the diagonal of $\bigstar$ as follows:
    
    \[\tikzfigbox{0.65}{Tmnbluearrows}\]

    We conclude by noticing we have indeed the nonzero beam splitters as expected.

        \end{proof}
    
        We can now prove~\pref{prop:uniqOmegaDiamond}. Let $U=D\otimes id^{\otimes\ntilde}$, $U'=D'\otimes id^{\otimes\mtilde}$ and $I_k=id^{\otimes n-\mtilde}$. We have:
    
    \[\begin{array}{crcl}
        &\sum\limits_{\vect{u},\vect{v}}\omegav{u}{v}\Omegav{u}{v}(\Ntriangle)&=& \sum\limits_{\vect{u},\vect{v}}\omegav{u}{v}'\Omegav{u}{v}(\Ntriangle)  \\
        \Leftrightarrow&\sum\limits_{\vect{u},\vect{v}}\omegav{u}{v}\Omegav{u}{v}(U'\circ(I_k\otimes\Ndiamond)\circ U) &=& \sum\limits_{\vect{u},\vect{v}}\omegav{u}{v}'\Omegav{u}{v}(U'\circ(I_k\otimes\Ndiamond)\circ U) \\ 
        \Leftrightarrow&\sum\limits_{\vect{u},\vect{v}}\omegav{u}{v}\left(\interpspre{D'}\Omegav{u}{v}(I_k\otimes\Ndiamond)\interpspre{D}\right) &=& \sum\limits_{\vect{u},\vect{v}}\omegav{u}{v}'\left(\interpspre{D'}\Omegav{u}{v}(I_k\otimes\Ndiamond)\interpspre{D}\right)\\ 
        \Leftrightarrow&\interpspre{D'}\left(\sum\limits_{\vect{u},\vect{v}}\omegav{u}{v}\Omegav{u}{v}(I_k\otimes\Ndiamond)\right)\interpspre{D} &=& \interpspre{D'}\left(\sum\limits_{\vect{u},\vect{v}}\omegav{u}{v}'\Omegav{u}{v}(I_k\otimes\Ndiamond)\right) \interpspre{D}\\ 
        \Leftrightarrow&\sum\limits_{\vect{u},\vect{v}}\omegav{u}{v}\Omegav{u}{v}(I_k\otimes\Ndiamond)&=&\sum\limits_{\vect{u},\vect{v}}\omegav{u}{v}'\Omegav{u}{v}(I_k\otimes\Ndiamond)\\
        \Leftrightarrow&I_k\otimes\left(\sum\limits_{\vect{u},\vect{v}}\omegav{u}{v}\Omegav{u}{v}(\Ndiamond)\right)&=&I_k\otimes\left(\sum\limits_{\vect{u},\vect{v}}\omegav{u}{v}'\Omegav{u}{v}(\Ndiamond)\right)\\
        \Leftrightarrow&\sum\limits_{\vect{u},\vect{v}}\omegav{u}{v}\Omegav{u}{v}(\Ndiamond)&=&\sum\limits_{\vect{u},\vect{v}}\omegav{u}{v}'\Omegav{u}{v}(\Ndiamond)\\

    \end{array}\]
    
    \qed
    
\subsection{Example~\ref{ex:adagbs}: the creation operator through a beam splitter}
    \label{app:adagbs}
    Similarly to \pref{ex:adagps} with the phase shifter, let $D=\tikzfig{bs-xs}$ with $\interpsone{D}=\begin{pmatrix}\costheta & i\sintheta \\i\sintheta & \costheta \end{pmatrix}$. Through $\interps{D}$, the creation operators have the following transformations: $\adag{1} \mapsto \costheta\adag{1}+i\sintheta\adag{2}$ and $\adag{2} \mapsto i\sintheta\adag{1}+\costheta\adag{2}$. We can check it is sound with $\interps{\tikzfig{bs-xs}}$ defined in \pref{def:semmany}. Indeed, we have:
        \[\begin{array}{rcl} 
            && \interps{\tikzfig{bs-xs}}\ket{k_1,k_2} \\\\ 
            &=&\interps{\tikzfig{bs-xs}}\frac{(\adag{1})^{k_1}(\adag{2})^{k_2}}{\sqrt{k_1!k_2!}}\ket{0,0}\\ &=&\frac{1}{\sqrt{k_1!k_2!}}(\costheta\adag{1}+i\sintheta\adag{2})^{k_1}(i\sintheta\adag{1}+\costheta\adag{2})^{k_2}\interps{\tikzfig{bs-xs}}\ket{0,0} \\ 
            &=&  \frac{1}{\sqrt{k_1!k_2!}}(\costheta\adag{1}+i\sintheta\adag{2})^{k_1}(i\sintheta\adag{1}+\costheta\adag{2})^{k_2}\ket{0,0} \\ 
            &=& \frac{1}{\sqrt{k_1!k_2!}} \left(\sum\limits_{p=0}^{k_1}{\binom{k_1}{p}}(\costheta\adag{1})^{p}(i\sintheta\adag{2})^{k_1-p}\right)\left(\sum\limits_{q=0}^{k_2}{\binom{k_2}{q}}(i\sintheta\adag{1})^{q}(\costheta\adag{2})^{k_2-q}\right) \ket{0,0} \\
            &=&\frac{1}{\sqrt{k_1!k_2!}}\left(\sum\limits_{\ell_1=0}^{k_1+k_2}\sum\limits_{\substack{{p+q=\ell_1}\\{\delta=p-q}}}{\binom{k_1}{p}}{\binom{k_2}{q}}\costheta^{k_2+\delta}(i\sintheta)^{k_1-\delta}(\adag{1})^{\ell_1}(\adag{2})^{k_1+k_2-\ell_1}\right)\ket{0,0}\\
            &=&\sum\limits_{\ell_1+\ell_2=k_1+k_2}\sqrt{\frac{\ell_1!\ell_2!}{k_1!k_2!}} \sum\limits_{\substack{p+q=\ell_1\\ \delta=p-q}}{\binom{k_1}{p}}{\binom{k_2}{q}}\costheta^{k_2+\delta}(i\sintheta)^{k_1-\delta} \ket{\ell_1,\ell_2} \\
        \end{array}\]

\subsection{Proof of Lemma~\ref{lem:Deltadec}}\label{app:Deltadecproof}
We want to illustrate that $\textbf{s}\preceq \textbf{t}$ in the $\Delta$-decomposition of $\Omega^{u,v}(\Ndiamond)$. Let us consider the following example with $n=2$ and $\ntilde=3$: 

\[ \begin{array}{rcl} 
    \Omegav{u}{v}(\Ndiamond)&=&\interpspre{\tikzfigbox{0.8}{Trec32omega}} \\
    &=&\sum\limits_{\vect{s},\vect{t}}\deltav{s}{t}(\adag{1})^{s_1}(\adag{2})^{s_2}(\adag{3})^{s_3} \interpspre{\tikzfigbox{0.8}{Trec32emptyomega}} \\
    &=& \sum\limits_{\vect{s},\vect{t}}\deltav{s}{t}\Deltav{s}{t}(\Ndiamond) \end{array} \]  % 

At first, we have $\interpspre{\Ndiamond}(\adag{3})^{u_1}(\adag{4})^{u_2}(\adag{5})^{u_3}$. The only path for the photons to be in the first output wire is by the third input wire. Therefore, we necessarily have $s_1\leq u_1$. If all the photons went into the first output wire, ie.\ $s_1=u_1$, then necessarily we have $s_2\leq u_2$, as the only paths that could add photons to the second output are the third and fourth input modes. Similarly, if $s_2=u_2$ then $s_3\leq u_3$. Therefore, we have $\vect{s}\preceq \vect{u}$.

\end{document}